\pgfplotsset{compat=newest}
\renewcommand{\arraystretch}{1.5} \usepackage{amsthm}
\newcommand{\C}{\mathbb{C}}
\newcommand{\R}{\mathbb{R}}
\newcommand{\rT}{\top}
\newcommand{\id}{\mathrm{Id}}
\renewcommand{\phi}{\varphi}
\newcommand{\T}{\mathbb{T}}
\newcommand{\abs}[1]{ \ensuremath{\left\lvert{#1}\right\rvert}}
\DeclareMathOperator{\tr}{tr}
\DeclareMathOperator{\cof}{Cof}
\DeclareMathOperator{\sgn}{sgn}
\newcommand{\jac}{ \ensuremath{\nabla} }
\newcommand{\abc}{Arnold--Beltrami--Childress}
\newcommand{\owc}{Okubo--Weiss--Chong}
\newcommand{\ow}{Okubo--Weiss}
\newcommand{\kam}{Kolmogorov--Arnold--Moser}
\newcommand{\pf}{Perron--Frobenius}
\newcommand{\threed}{3D}
\newcommand{\twod}{2D}
\newcommand{\ftle}{Finite-Time Lyapunov Exponents}
\newcommand{\colrule}{\hline}
\newtheorem{theorem}{Theorem}[section]
\newtheorem{lemma}[theorem]{Lemma}
\theoremstyle{definition}
\newtheorem{definition}[theorem]{Definition}
\newtheorem{remark}{Remark}
\newtheorem*{remarkstep*}{Remark to Step}
 \title[Mesochronic classification in 3D vector fields]{Mesochronic classification of trajectories in incompressible 3D vector fields over finite times}
\author[
Budi\v{s}i\'{c},
Siegmund,
Son, and
Mezi\'{c}
]{}
\subjclass{Primary: 37N10, 37B55; Secondary: 37Exx}
 \email{marko@math.wisc.edu}
 \email{stefan.siegmund@tu-dresden.de}
 \email{dtson@math.ac.vn}
 \email{mezic@engr.ucsb.edu}
 \thanks{MB was funded under US Office of Naval Research MURI grant
   N00014-11-1-0087 and US National Science Foundation grant CMMI-1233935. SS
   was partly supported by the German Research Foundation (DFG) through the
   Cluster of Excellence (EXC 1056), Center for Advancing Electronics Dresden
   (cfaed). IM was funded under US Office of Naval Research MURI grant
   N00014-11-1-0087 and Army Research Office MURI grant W911NF-14-1-0359.}
\thanks{$^*$ Corresponding author: marko@math.wisc.edu}
\begin{document}
\maketitle

\centerline{\scshape Marko Budi\v{s}i\'{c}$^*$}
\medskip
{\footnotesize
 \centerline{Department of Mathematics}
   \centerline{University of Wisconsin, Madison}
   \centerline{ Madison, WI, USA}
}
\medskip

\centerline{\scshape Stefan Siegmund}
\medskip
{\footnotesize
 \centerline{Center for Dynamics \& Institute of Analysis}
   \centerline{Department of Mathematics, TU Dresden}
   \centerline{ Dresden, Germany}
}
\medskip

\centerline{\scshape Doan Thai Son}
\medskip
{\footnotesize
 \centerline{Department of Probability and Statistics}
   \centerline{Institute of Mathematics, Vietnam Academy of Science and Technology}
   \centerline{ Hanoi, Vietnam}
}
\medskip

\centerline{\scshape Igor Mezi\'{c}}
\medskip
{\footnotesize
 \centerline{Department of Mechanical Engineering}
   \centerline{University of California, Santa Barbara}
   \centerline{ Santa Barbara, CA, USA}
}
\bigskip

\begin{abstract}
The mesochronic velocity is the average of the velocity field along trajectories generated by the same velocity field over a time interval of finite duration. In this paper we classify initial conditions of trajectories evolving in incompressible vector fields according to the character of motion of material around the trajectory. In particular, we provide calculations that can be used to determine the number of expanding directions and the presence of rotation from the characteristic polynomial of the Jacobian matrix of mesochronic velocity. In doing so, we show that (a) the mesochronic velocity can be used to characterize dynamical deformation of three-dimensional volumes, (b) the resulting mesochronic analysis is a finite-time extension of the Okubo--Weiss--Chong analysis of incompressible velocity fields, (c) the two-dimensional mesochronic analysis from Mezi\'{c} et al. ``A New Mixing Diagnostic and Gulf Oil Spill Movement'', Science \textbf{330}, (2010), 486-–489, extends to three-dimensional state spaces. Theoretical considerations are further supported by numerical computations performed for a dynamical system arising in fluid mechanics, the unsteady Arnold--Beltrami--Childress (ABC) flow.
\end{abstract}

\bibliographystyle{AIMS}
\section{Introduction}
\label{sec:introduction}

Chaotic advection is the theory of material transport in fluids based in
dynamical systems theory.~\cite{Aref1984a, Ottino1989, Wiggins1992} It is
largely rooted in analysis of geometric structures in flows with a simple
time-dependence, time-autonomous or periodic. Since the realization that even
flows with a complicated time dependence, e.g., turbulent flows, possess organized Lagrangian structures, it has become increasingly important to detect
geometric structures analogous to invariant manifolds in steady flows. In
particular, detection of structures using only finite-time information about
the flow has been seen as the most practically-useful
direction.~\cite{Haller1998, Poje1999, Haller2001, Shadden2005, Froyland2010,
  Allshouse2012,Samelson2013}

The need for analysis of geometric structures that organize advection is not
purely academic. Transport of material by fluid flows played a crucial role in
the fallout from several recent catastrophic events, namely, the volcanic
eruption of Eyjafjallaj{\"o}kull (2010), the Deepwater Horizon oil spill
(2010), and the nuclear disaster in Fukushima (2011). These events highlighted
how important it is to detect organizing geometric structures in (near) real
time from data, either measured or generated by detailed simulation models;
consequently, such problems have become a very active intersection of
dynamical systems and fluid dynamics.

The problem of identifying geometric structures in flows has resulted in
several approaches, each focusing on a somewhat-different structure as the
objective of its analysis.

The theory of \emph{Lagrangian Coherent Structures}~\cite{Haller2000} (LCS)
identifies barriers that organize the transport in flows with complex
time-dependence. Initially, LCS were closely associated with computation of
Finite-Time Lyapunov Exponent fields~\cite{Shadden2005}; more recently, they have
been re-formulated using a variational principle~\cite{Haller2011a, Haller2012,
  Blazevski2014}, which defines them as certain geodesic lines of the local
deformation field induced by the fluid flow. This new definition allows a
finer classification of LCS, both in two and three dimensions, based on the
type of deformation, e.g., hyperbolic, elliptic, corresponding to different
behaviors of fluid parcels in the flow. The recent review by
Haller~\cite{Haller2015} gives a detailed coverage of the current
state-of-the-art of techniques centered around LCS.

LCS and associated theories mostly focused on magnitude of non-rotational
deformation in the flow. The rotation deformation has been classically studied
by Poincar\'e in topological dynamics and Ruelle~\cite{Ruelle1985} in ergodic
theory; recently, a Finite-Time Rotation Number~\cite{SzezechJr2013} has been proposed as a useful
quantity in analysis of flows. At the closing of this manuscript we were also
notified of the recent work by Farazmand and Haller~\cite{Farazmand2016},
working along the same lines.

In the effort to study the ``stretch-and-fold'' mechanism for chaos in finite
time, most studies focus on the first-order ``stretch''. Folding in finite
time has not received direct attention; an exception is the study of the
Finite-Time Curvature Fields~\cite{Ma2014, Ma2015, Ma2016}. At this point, structures
observed through all these methods have been connected mostly on
phenomenological basis, through comparison of visualizations, showing
considerable overlap between observed structures but, also, non-negligible
differences.

Magnitudes of the local material deformation are typically estimated by
processing velocity gradients; they cannot be precisely computed in the
absence of the detailed data about the velocity field, e.g., when the system
is sampled by sparse trajectories only. In sparsely-sampled planar systems,
trajectories can be represented by space-time braids --- extremely-reduced,
symbolic representations of trajectories. The resulting approach, known as
\emph{braid dynamics}~\cite{Boyland2000, Allshouse2012, Thiffeault2010} has
been successful in providing lower bounds on the amount of topological
deformation present in the flow, in limited-data settings. The obtained bounds
have been used both in design and analysis of the material advection;
unfortunately, there are currently no extensions of braid dynamics to
three-dimensional flows.

Instead of looking for barriers to transport, as is the case with the LCS
theory, the theory of \emph{almost-invariant sets} identifies a collection of
sets, fixed in space, such that the material placed in them does not leak
out. These sets act as routes for the material transport. The approach is
based on the \pf{} transfer operator, which models how the flow moves
distributions of points, instead of individual trajectories. The \pf{}
operator is always infinite-dimensional and linear; the identification of
almost-invariant sets is then intimately connected with approximating its
eigenfunctions. While the \pf{} operator has been a staple of the ergodic and
probability theory since the early 20th century, it was introduced to the
applied, non-probabilistic context by Dellnitz and Junge~\cite{Dellnitz1999,
  Dellnitz2002} as the basis for identification of invariant sets in
time-invariant dynamical systems. Since then, the theory has been expanded to
include detection of almost-invariant sets of autonomous
systems~\cite{Froyland2003, Froyland2009}, and flows with more general time
dependencies~\cite{Froyland2010, Froyland2010a}.

Spatial invariants of dynamical systems relate to infinite-time averages of
functions along Lagrangian trajectories. This connection between the ergodic
theory and applied mathematics was initially developed in
Ref.~\cite{Mezic1994}. Based on these ideas, Ref.~\cite{Poje1999}
proposed that even finite-time averages of functions can enable detection of
geometric structures important for fluid transport, which broadly constitutes
the \emph{mesochronic}, i.e., time-averaged, theory of transport in
fluids. The utility of time-averages has been corroborated on numerical and
experimentally-realizable flows with simple time dependence~\cite{Mezic1999,
  Malhotra1998, Mezic2002,Mancho2013,Madrid2009}.

The mesochronic techniques have developed in two directions. One focuses on
computations of ergodic invariant sets using long-time averages of a large set
of averaged functions.~\cite{Mezic1994, Mezic1999, Levnajic2010, Budisic2012b}
The other, which we follow here, does not aim to compute all invariant sets;
rather it uses much shorter averages of the velocity field itself to identify
the \emph{character} of the deformation, i.e., presence or absence of
rotation.~\cite{Mezic2010} This is in contrast with mentioned LCS and
rotational theories, which describe deformation starting from analysis of the
\emph{magnitude} of the deformation.

Before we dive into calculations, we give a short explanation of the
approach. Introductory courses in dynamical systems often discuss the
stability of a stagnation point \(p\) in a planar system \(\dot x = f(x)\) by
looking at its linearization \(\dot \xi = \jac f\vert_{p} \cdot \xi\) around a
fixed point \(p\) whose stability depends on positions of two eigenvalues of
the Jacobian \(\jac f\vert_{p}\). Instead of computing the eigenvalues, their
locations can be inferred from the trace and the determinant of \(\jac
f\vert_{p}\). If the flow is incompressible, the trace is zero, so the
determinant alone is needed for the full stability analysis. For unsteady
systems this analysis may not always hold; however, Ref.~\cite{Mezic2010}
showed that even then similar results can be obtained by looking at the
Jacobian matrix of the velocity \emph{averaged} over a Lagrangian trajectory,
termed \emph{the mesochronic Jacobian matrix}. Away from fixed points, the
calculation does not compute stability, but rather the spectral class of the
Jacobian: hyperbolic (strain) or elliptic (rotation), termed \emph{mesochronic
  classes}. Applied to prediction of the oil slick transport in the aftermath
of the Deep Water Horizon spill~\cite{Mezic2010}, mesochronic analysis showed
that regions of mesohyperbolicity correspond to jets which dispersed the
slick, while mesoelliptic zones correspond to centers of eddies in which the
slick accumulated.  The main contribution of this paper is the extension of
the mesochronic analysis to three-dimensional flows.

There are several connections of the mesochronic analysis with other
approaches.
\begin{compactitem}
\item On a fundamental level, averages of functions along trajectories are
  intimately related to spectral properties of the Koopman
  operator~\cite{Koopman1931, Mezic2004, Mezic2005}, which is adjoint to the
  mentioned \pf{} operator.
\item Greene~\cite{Greene1979,Greene1968} defined the \emph{residue criterion}
  in order to predict the order of destruction of \kam{} (KAM) tori in
  perturbed Hamiltonian maps. The computation of the residue is almost
  identical to that of the mesohyperbolicity indicator for \twod{} flows. The
  three-dimensional version of the residue criterion~\cite{Fox2013} also
  resembles mesochronic indicators introduced here.
\item An analysis of oceanic flows based on the Jacobian of the instantaneous,
  i.e., non-averaged, velocity is well-known as \owc{}
  partition~\cite{Okubo1970, Weiss1991, Chong1990}; this is the limit of the
  mesochronic theory as the averaging time \(T \to 0\).
\item In the other extreme, as \(T \to \infty\), real parts of eigenvalues of
  the mesochronic Jacobian relate to Lyapunov exponents and rotation numbers).
  It can be shown under generic conditions, using the polar
    decomposition of $\jac\psi_T$ that the limit of the eigenvalues of the
    gradient of the flow map are the Lyapunov exponents (see the heuristic
    discussion in~\cite{Goldhirsch1987}, that can be turned into rigorous
    proofs using ergodic-theoretic techniques in~\cite{Ruelle1985}, see
    also~\cite[Chapt.\ 3, \S 9]{Adrianova1995})
\item Finally, a recent inquiry~\cite{Farazmand2015a} into connections with
  Lagrangian Coherent Structures, in a form related to an earlier work in
  Ref.~\cite{Poje1999}, showed that the LCS techniques are capable of
  uncovering some of the boundaries between mesochronic classes.
\end{compactitem}

The paper is organized as follows. In Section~\ref{sec:preliminaries} we
introduce the precise definitions for dynamical systems we are considering,
review the basics of differential geometry needed, and re-state the \owc{}
analysis in these terms. Section~\ref{sec:mesochronic} contains the main
result, the \threed{} mesochronic classification, while
Section~\ref{sec:limits-boundary-cases} makes connections to Lyapunov, \owc{},
and \twod{} mesochronic analyses. In Section~\ref{sec:computation} we
illustrate the technique by a set of analytic and numerical examples; in
particular the steady and unsteady \abc{} flow~\cite{Dombre1986}. Numerical
details are given in the Appendices. The paper closes with the discussion in
Section~\ref{sec:discussion}.

\section{Preliminaries}
\label{sec:preliminaries}

Consider a time-varying differential equation
\begin{equation}\label{eq:dynamical-system}
  \dot x = f(t,x)
\end{equation}
with a $C^3$ velocity field $f : D \subset \R \times \R^3 \rightarrow
\R^3$. For an initial condition $(t_0,x_0) \in D$ let $t \mapsto
\phi(t,t_0,x_0)$ denote the solution (or trajectory) of the initial value
problem~\eqref{eq:dynamical-system}, with $x(t_0)=x_0$. Throughout the paper
we assume for an arbitrary but fixed \emph{initial time} $t_0 \in \R$,
\emph{duration} $T>0$ and open \emph{set of initial values} $X(t_0) \subset
\R^3$ that \(\forall x_0 \in X(t_0)\), \(t \mapsto \phi(t,t_0,x_0)\)
exists on the whole interval \(I := [t_0,t_0 + T]\).

For $t \in I$ define $X(t) := \{\phi(t,t_0,x_0) \in \R^3 : x_0 \in X(t_0)\}$
and $X := \{(t,x) \in \R \times \R^3 : t \in I, x \in X(t)\}$. Then $X \subset
D$ by assumption und for $t_1 \in I$ the map $\phi(t_0+T,t_0, \cdot) : X(t_0)
\rightarrow X(t_1)$ is well-defined. In particular, we define the
\emph{time-$T$ map}
\begin{equation}\label{eq:time-t}
  \begin{aligned}
    \psi_{T} &: X(t_0) \rightarrow X(t_0 + T), \\
    \psi_{T}(x) &:= \phi(t_{0}+T,t_{0},x),
  \end{aligned}
\end{equation}
usually called \emph{Poincar\'{e} map} if the equation is periodic, i.e., if
for the chosen \(T\), $f(t,x) \equiv f(t+T,x)$.

We are mainly interested in \emph{finite-time} dynamics for a fixed duration
$T>0$ but will also investigate the \emph{instantaneous} (in the zero-time
limit \(T \to 0^{+}\)) and \emph{asymptotic} (in the infinite-time limit \(T
\to +\infty\)) dynamics, assuming the solution \(\phi(\cdot,t_0,x_0)\) exists on
\([t_0,\infty)\).

\emph{Observables} of~\eqref{eq:dynamical-system} are continuous functions \(F
: X \to \R^{n}\), which are evaluated along arbitrary solutions $t \mapsto
x(t)$ on $I$. They are used to model physical measurements of a state of the
system, e.g., the time trace \(t \mapsto F\left(t, x(t) \right) \) might
represent the ocean temperature recorded by a sensor as it is passively
carried by ocean currents along the trajectory \(x(t)\). A \emph{time average}
or \emph{trajectory average} $\tilde{F}_{T} : X(t_0) \rightarrow \R$ of an
observable \(F\) on $I = [t_0,t_0+T]$, defined by
\[
\tilde{F}_{T}(x_{0}) := \frac{1}{T}\int_{t_{0}}^{t_{0} + T} F( \tau, x(\tau) ) d\tau,
\]
is a function that depends on the initial value $x(t_0) = x_0$ of the
trajectory \(x(t)\) at time $t_0$. Trajectory averages depend on the duration
$T>0$ and can be analyzed from the instantaneous, asymptotic, or finite-time
perspective. The instantaneous case is the most obvious, as \( \lim_{T \to 0}
\tilde{F}_{T}(x_{0}) = F(t_{0},x_{0})\), e.g., if \(F\) is a component of
the velocity field itself. Certain choices of \(F\) can still provide valuable
information, as we explain in the next paragraph.  Asymptotic analysis studies
\emph{ergodic averages}, i.e., limits $\tilde{F}_\infty(x_{0}) := \lim_{T \to
  \infty} \tilde{F}_{T}(x_{0})$, in case they exist. Ergodic theory analyzes
limits $\tilde{F}_\infty$ of observables \(F\) which are specified only in
general terms, e.g., only by the space of functions from which they are
drawn. Even in such general cases valuable information can be recovered, e.g.,
on time-invariant measures on the state space.~\cite{Mezic1999,Budisic2012b}

On the other hand, choosing a particular observable can provide us with more
detailed information. Since the components of the velocity field \(f(t,x) =
[f_{1}(t,x), \allowbreak f_{2}(t,x), \allowbreak f_{3}(t,x)]^{\rT}\) are themselves continuous
functions on the time-state space \(X \subset \R \times \R^{3}\), they are
observables.  One could argue that they are the most distinguished observables
for analysis of dynamical systems, as they directly provide dynamical
information about the behavior of the system. We adopt the
\emph{velocity-as-observable} viewpoint and analyze the time average of the
velocity field, which was termed \emph{mesochronic
  velocity}~\cite{Mezic2010}. We note that the values of the mesochronic
velocity appear as quantities of interest in~\cite{Poje1999}
while~\cite{Mezic2010} is the first to look into their gradients.

\begin{definition}[Mesochronic velocity and mesochronic Jacobian]
  The \emph{mesochronic velocity} \(\tilde{f} : X(t_0) \subset \R^3
  \rightarrow \R^3\) of~\eqref{eq:dynamical-system} on \( I =
  [t_{0},t_{0}+T]\) is given by
  \begin{equation}
    \tilde{f}_{T}(x) := \frac{1}{T}\int_{t_{0}}^{t_{0} + T} f\left(\tau,\phi(\tau, t_{0}, x)\right) d\tau.
  \end{equation}
  The Jacobian matrix \(\jac \tilde{f}_{T}\) containing partial spatial derivatives \(
    {[\jac \tilde{f}_{T}]}_{ij} \allowbreak:=
      \partial {[\tilde{f}_{T}]}_{i} / \partial x_{j}
\)
  is termed \emph{the mesochronic Jacobian}.
\end{definition}
Note that the spatial derivatives and the averaging over trajectories do not commute, i.e., \(\jac \tilde{f}_{T}\) is not equal to the average of the instantaneous Jacobian over trajectories.

The mesochronic velocity in the instantaneous limit \(\tilde{f}_{T}
\xrightarrow{T \to 0^{+}} f\) coincides with the velocity field $f$.  The
asymptotic limit for $T \to \infty$ exists in many cases, for example, if the
dynamical system is autonomous and volume-preserving on a compact domain.

We use the mesochronic velocity to determine the character of the evolution of
a material volume (see Section~\ref{sec:diffeomorphism}) by an incompressible
dynamical system, which satisfies the Liouville condition
\begin{equation}
  (\nabla \cdot f)(t,x) = \tr \jac f(t,x) \equiv 0.
  \label{eq:incompressible-velocity-field}
\end{equation}

Although the limits \(T\to0^{+}\) and \(T\to\infty\) have been studied
classically, neither theory is applicable to transient behavior. On the other
hand, the finite-time analysis of the mesochronic velocity recovers the
character of the time-\(T\) map \(\psi : X(t_0) \rightarrow X(t_0+T)\), which
captures transients at the time-scale \(T\)~\cite{Mezic2010}.

\subsection{Deformation of a volume cell under a diffeomorphism}
\label{sec:diffeomorphism}

We now briefly review basic differential geometry that characterizes
deformation under a volume-preserving diffeomorphism \(\psi\) using the
spectral class of its Jacobian \(\jac \psi\). This theory is later applied to
time-\(T\) maps \(\psi_{T}\) of dynamical systems over finite time intervals.

Let \(\psi:U \to V\) be a diffeomorphism between two open subsets \(U \subset
\R^{3}\), \(V \subset \R^{3}\), with the usual volume measure on
\(\R^{3}\). We are interested in deformation of an infinitesimal volume cell
surrounding \(x \in U\) as \(x \mapsto \psi(x)\). The central object of our
interest is the Jacobian matrix \(\jac \psi: U \to \R^{3 \times 3}\).
It is a basic result in differential geometry that volumes of a set \(S
\subset U\) and its image \(\psi(S) \subset V\) are equal if and only if
\(\abs{\det \jac \psi(x)} \equiv 1\). We now restrict our attention to
\emph{orientation- and volume-preserving} maps \(\psi\), i.e., maps for which
\(\det \jac \psi \equiv 1\).

At the coarsest level, we distinguish between a \emph{hyperbolic} deformation,
when the volume cell is \emph{deformed along all three spatial dimensions},
and the opposite, non-hyperbolic character. Let \(\mu_{1}, \mu_{2}, \mu_{3}
\in \C\) denote the eigenvalues of \(\jac \psi\), assuming
\(\abs{\mu_{1}}\leq\abs{\mu_{2}}\leq\abs{\mu_{3}}\). Different fields of
mathematics may interpret presence or absence of hyperbolicity differently,
e.g., as the material deformation in continuum mechanics, or stability of the
map in dynamical systems and control. Since our analysis could include both
domains, we refer to presence/absence of hyperbolicity, and their
sub-classification (see below) as the \emph{spectral character} of the
diffeomorphism.

\begin{definition}[Hyperbolicity]\label{def:hyperbolicity}
  The map \(\psi\) is \emph{hyperbolic at \(x\)} if no eigenvalues \(\mu_{1},
  \mu_{2}, \mu_{3}\) of the Jacobian \(\jac \psi(x)\) lie on the unit circle
  in the complex plane, i.e., \(\forall\ i = 1,2,3\), \(\abs{\mu_{i}} \not =
  1\). Otherwise, it is \emph{non-hyperbolic at \(x\)}. In particular, if
  \emph{all} eigenvalues lie on the unit circle, i.e., \(\forall\ i = 1,2,3\),
  \(\abs{\mu_{i}} = 1\), the non-hyperbolic map is \emph{elliptic}.
\end{definition}

Depending on the complexity of eigenvalues, we further distinguish
\emph{sellar} (saddle-like) and \emph{helical} (spiral/helix-like) character
of the deformation under \(\psi\).\footnote{From Latin, \emph{sella}, saddle,
and \emph{helix}, spiral.}

\begin{definition}[Sellar and helical deformation]\label{def:sellicity-helicity}
  The map \(\psi\) is \emph{sellar at \(x\)} if all eigenvalues \(\mu_{1},
  \mu_{2}, \mu_{3}\) of the Jacobian \(\jac \psi(x)\) are real and
  non-defective (their algebraic and geometric multiplicities match). If,
  instead, there is a pair of complex-conjugate eigenvalues, the map is
  \emph{helical} (spiral-like) at \(x\).
\end{definition}

Deformation at a sellar point exhibits three distinct spatial axes, meeting at \(x\),
whose directions are preserved under \(\psi\). The directions of preserved
spatial axes correspond to real-valued eigenvectors of \(\jac \psi\), while
the associated (real) eigenvalues \(\mu\) determine whether the points along
the axes are moving away from \(x\), \((\abs{\mu} > 1)\), moving towards
\(x\), \((\abs{\mu} < 1)\), or remain neutral \((\abs{\mu} = 1)\).

Deformation at a helical point exhibits only a single preserved spatial axis,
around which a volume cell is rotated, resulting in a single real-valued
eigenvalue. As \(\jac \psi(x)\) is a real matrix, any complex eigenvalues must
arise in conjugate pairs, \(\mu,\,\bar{\mu}\). The modulus of the complex pair
again determines expansion or contraction of the material in the rotation
plane, while the real and imaginary components of the associated eigenvector
pair span the rotation plane.

Since we restrict our analysis to volume-preserving maps \(\psi\), existence
of an eigenvalue inside the unit circle (contraction) necessarily means that
at least one other eigenvalue lies outside of the unit circle (expansion), as
\(\abs{\det \jac \psi} = \abs{\mu_{1} \mu_{2}\mu_{3}}\equiv
1\). All possible combinations are enumerated in
Definition~\ref{def:spectral-classes} in which we use the symbols \(+\) and
\(-\) to denote expansion and contraction directions, respectively.
\begin{definition}[Spectral classes]\label{def:spectral-classes}
  Let \(\psi:U \to V\) be a volume- and orientation-preserving diffeomorphism,
  \(x \in U\) a point, and \(\jac \psi(x)\) the Jacobian of \(\psi\) at \(x\)
  with eigenvalues \(\mu_{i}\), ordered as \(\abs{\mu_{1}} \leq \abs{\mu_{2}}
  \leq \abs{\mu_{3}}\). The class of the point \(x\) is determined according to Table~\ref{tab:spectral-classes} by the number of eigenvalues of \(\jac\psi(x)\) that are inside and on the unit circle.
\end{definition}

\begin{table}
  \begin{tabular}{ m{2.25cm} m{3.5cm} }
    \parbox[t]{2cm}{\textbf{Class}} & \parbox[t]{4cm}{\textbf{Condition}} \\ \colrule
    \(\bm{[- + +]}\) sellar (hyp.~saddle) & \parbox[t]{4cm}{\(\mu_{1,2,3} \in \R\) and \\ \(\abs{\mu_{1}} < 1 < \abs{\mu_{2}} < \abs{\mu_{3}}\),} \\ \colrule
    \(\bm{[- - +]}\) sellar (hyp.~saddle) & \parbox[t]{4cm}{\(\mu_{1,2,3} \in \R\) and \\ \(\abs{\mu_{1}} < \abs{\mu_{2}} < 1 < \abs{\mu_{3}}\),} \\ \colrule
    \(\bm{[- + +]}\) helical (hyp.~spiral) & \parbox[t]{4cm}{\(\mu_{1} \in \R,\ \mu_{2} = \bar \mu_{3} \not \in \R\) and \\  \(\abs{\mu_{1}} < 1 < \abs{\mu_{2}} = \abs{\mu_{3}}\),} \\ \colrule
    \(\bm{[- - +]}\) helical (hyp.~spiral) & \parbox[t]{4cm}{\(\mu_{1} = \bar \mu_{2} \not \in \R,\ \mu_{3} \in \R\) and \\  \(\abs{\mu_{1}} = \abs{\mu_{2}} < 1 < \abs{\mu_{3}}\),}\\ \colrule
    Neutral saddle & \parbox[t]{4cm}{\(\mu_{1,2,3} \in \R\) and \\ \(\abs{\mu_{1}} < 1 = \abs{\mu_{2}} < \abs{\mu_{3}}\), }\\ \colrule
    Neutral helix & \parbox[t]{4cm}{\(\mu_{1} = \bar \mu_{2} \not \in \R,\ \mu_{3} \in \R\) and \\ \(\abs{\mu_{i}} \equiv 1\), }\\ \colrule
    Pure shear & \parbox[t]{4cm}{\(\mu_{1} = \mu_{2} = \pm 1\), \(\mu_{3} = 1\) and \\ \(\jac \psi\) is defective,} \\ \colrule
    Pure reflection & \parbox[t]{4cm}{\(\mu_{1,2,3} \in \{-1,1\}\) and \\ \(\jac \psi\) is not defective. }
  \end{tabular}
  \caption{Classification of a \threed{} volume-preserving diffeomorphism, depending on locations of eigenvalues  \(\mu_{1,2,3}\) of the Jacobian matrix (see Definition~\protect\ref{def:spectral-classes}). A matrix is defective, or non-diagonalizable, when it has less than 3 linearly independent eigenvectors.}\label{tab:spectral-classes}
\end{table}

The first four classes are hyperbolic, whereas the remaining cases are
non-hyperbolic. Informally, we will refer to signatures \(\bm{[- + +]}\) and
\(\bm{[- - +]}\) of hyperbolic points as, respectively, \emph{flattening} and
\emph{elongating}, due to the shape of a volume cell after application of
\(\psi\), as sketched in Figure~\ref{fig:classes-sketch}.
\begin{figure}[!ht]
  \begin{subfigure}[t]{0.3\linewidth}\centering
    \includegraphics{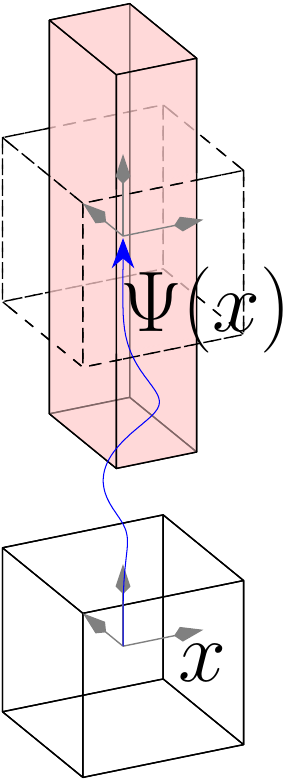}
    \caption{\(\bm{[- - +]}\) sellar deformation (elongating saddle)}
  \end{subfigure}\quad
  \begin{subfigure}[t]{0.3\linewidth}\centering
    \includegraphics{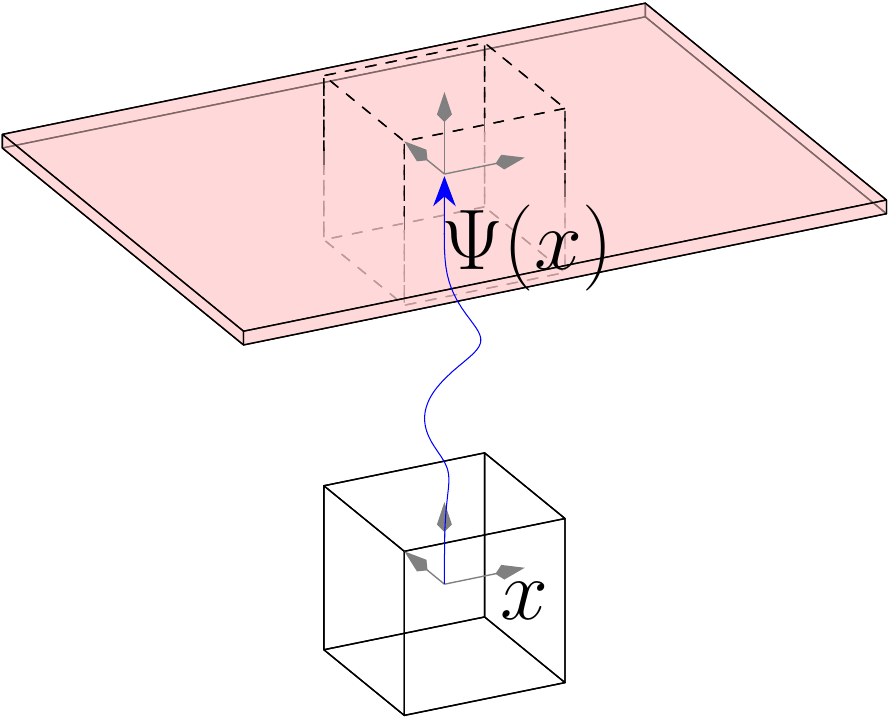}
    \caption{\(\bm{[- + +]}\) sellar deformation (flattening saddle)}
  \end{subfigure}\quad
  \begin{subfigure}[t]{0.3\linewidth}\centering
    \includegraphics{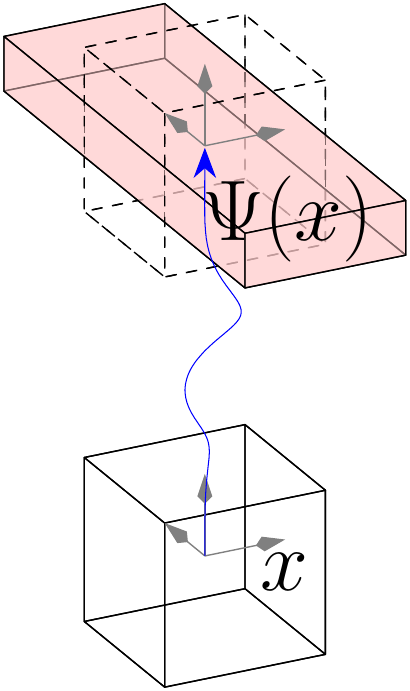}
    \caption{Neutral sellar (neutral saddle)}
  \end{subfigure}\\
  \begin{subfigure}[t]{0.3\linewidth}\centering
    \includegraphics{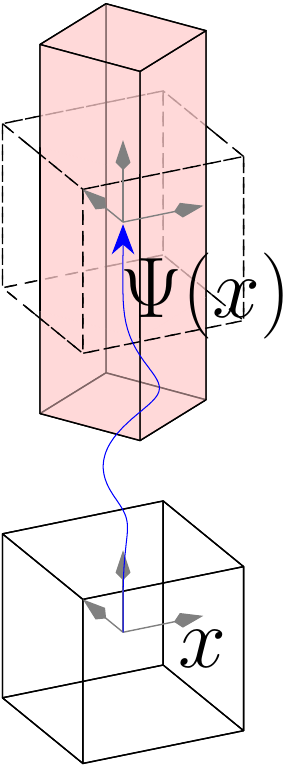}
    \caption{\(\bm{[- - +]}\) helical deformation (elongating spiral)}
  \end{subfigure}\quad
  \begin{subfigure}[t]{0.3\linewidth}\centering
    \includegraphics{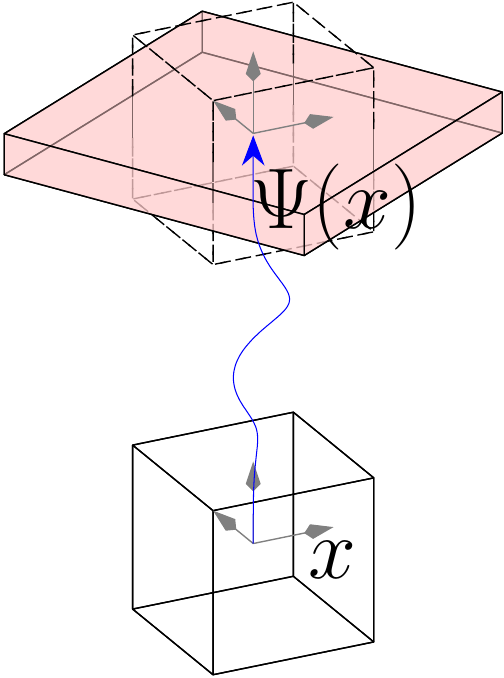}
    \caption{\(\bm{[- + +]}\) helical deformation (flattening spiral)}
  \end{subfigure}\quad
  \begin{subfigure}[t]{0.3\linewidth}\centering
    \includegraphics{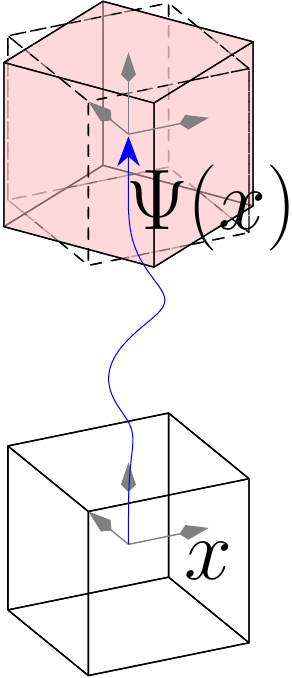}
    \caption{Neutral helical (neutral spiral)}
  \end{subfigure}
  \caption{\label{fig:classes-sketch}Sketches of deformation depending on the spectral class of a volume-preserving diffeomorphism \(\psi\) at the point  \(x \in U\)  (see
    Definition~\protect\ref{def:spectral-classes}). Note that the initial and final
    axes do not need to be parallel in general. Pure shear and reflection (including the identity) are not sketched.}
\end{figure}

\subsection{Instantaneous deformation by a dynamical system}
\label{sec:instantaneous}

We now apply the classification outlined in Section~\ref{sec:diffeomorphism}
to the time-\(T\) map \(\psi_T\) defined in~\eqref{eq:time-t} for a fixed time
interval \(T\) to recover the Okubo--Weiss--Chong~\cite{Okubo1970,Weiss1991,Chong1990} criterion for classification of velocity fields. Instead of forming the classification of \(\psi_{T}\) based on
properties of \(\jac \psi_{T}\), we chose instead to base our approach on properties of
the mesochronic Jacobian \(\jac \tilde{f}_{T}\), resulting in criteria whose expressions are consistent across all time scales \(T\).

\emph{Instantaneous classification} refers to the infinitesimally small time
interval length \(T\) for which the class of \(x \in \R^{3}\) under
\(\psi_{T}\) can be inferred from the spectrum of the Jacobian \(\jac
f(t_{0},x)\), through the connections with \(\jac \psi_{T}\). Both Jacobians are \(3\times 3\) matrices; for a general such matrix \(M\), the characteristic polynomial is given by
\begin{align}
  P_{M}(s) &= \det\left(s\id - M\right) = s^{3} - s^{2} \tr M  + s \tr\cof M + \det M \\\shortintertext{where}
  \tr M &= \sum_{i}s_{i},\qquad
  \det M = \prod_{i}s_{i},\\
  \tr\cof M &= \sum_{i} \frac{\det M}{s_{i}} = \frac{1}{2}( \tr M^{2} - (\tr M)^{2} ) \label{eq:cofactortrace}
\end{align}
The coefficient \(\tr\cof M\) is the cofactor trace, also called ``second trace'', in~\cite{Fox2013}; intuitively, it is the ``variance'' of the vector containing eigenvalues. Additionally, it follows from~\eqref{eq:cofactortrace} that when \(\det M = 1\), the second trace is equal to the trace of matrix inverse \(\tr \cof M = \sum s_{i}^{-1} = \tr M^{-1}\). Further spectral properties of such matrices are summarized in the Appendix~\ref{sec:cubic-poly-mat}.

As mentioned in Definition~\ref{def:spectral-classes}, spectral class of the time-\(T\) map is determined by locations of eigenvalues of \(\jac \psi_{T}\) which, in
turn, are roots of the characteristic polynomial
$P_{\psi} (\mu) = \det \left( \mu \id - \jac \psi_{T}(x) \right)$.

Expanding \(\jac \psi(x)\) into a Taylor series around \(T = 0\) yields
\begin{equation}
  \begin{aligned}
    \jac \psi_{T}(x) &= \id + T \jac f(t_{0},x) + \mathcal{O}(T^{2}) \\
    &\approx \id + T \jac f(t_{0},x),
  \end{aligned}\label{eq:timeT-velocity-jacobian-approx}
\end{equation}
for \(T \approx 0\). Consequently
\begin{align*}
  P_{\psi}(\mu) &= \det\left[(\mu-1)\id - T \jac f(t_{0},x)\right] \\
  &= T^{3}P_{f}\left( \frac{\mu - 1}{T} \right),
\end{align*}
for \(T \approx 0\), where \(P_{f} (\lambda) = \det \left( \lambda \id - \jac f(t_{0},x) \right) = \lambda^3-t_{f}\lambda^2+m_{f}\lambda-d_{f}\) is the characteristic polynomial of the Jacobian matrix \(\jac f\) of the instantaneous velocity field. In all cases, Jacobian matrices are evaluated at the time-space point \((t_{0},x)\) which is suppressed in notation.

Incompressibility of the flow implies
\begin{equation}
t_{f} \equiv 0 \text{\quad and \quad} d_{\Psi} \equiv 1, \label{eq:incompressibility-flow}
\end{equation}
and therefore
the spectral character depends only on the determinant \(d_{f}\) and the sum
of minors \(m_{f}\) of the velocity field Jacobian \(\jac f\). We now state
the \owc{} criteria, using terminology specified in
Definition~\ref{def:spectral-classes}.

\begin{theorem}[\owc{}~\cite{Chong1990}]\label{thm:3D-okubo-weiss}
  From the Jacobian of the velocity field (velocity gradient tensor) \(\jac
  f(t_{0},x)\), compute its determinant \(d_{f} = \det \jac f(t_{0},x)\), and
  cofactor trace \(m_{f} = \tr \cof \jac f(t_{0},x)\).  For \(T
  \to 0\), the point \((t_{0},x) \in I \times \R^{3}\) is hyperbolic if and
  only if
  \begin{equation*}
    d_{f} \not = 0.
  \end{equation*}
  Hyperbolic points can be further classified into four subclasses, depending
  on signs of the determinant \(d_{f}\) and the quantity
  \(27d_{f}^2+4m_{f}^3\), as listed in Table~\ref{tab:OWC}.
\end{theorem}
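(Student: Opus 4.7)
The plan is to transfer the hyperbolicity/spectral analysis of the time-\(T\) map \(\psi_T\) back to the eigenvalue structure of \(\jac f(t_0,x)\) via the Taylor relation already derived in~\eqref{eq:timeT-velocity-jacobian-approx}. From that formula, the eigenvalues \(\mu_i\) of \(\jac \psi_T\) are \(\mu_i = 1 + T\lambda_i + \mathcal{O}(T^2)\), where \(\lambda_1,\lambda_2,\lambda_3\) are the eigenvalues of \(\jac f\). Squaring and keeping leading order gives \(|\mu_i|^2 = 1 + 2T\,\Re(\lambda_i) + \mathcal{O}(T^2)\), so for sufficiently small \(T>0\) the condition \(|\mu_i|\neq 1\) (hyperbolicity in the sense of Definition~\ref{def:hyperbolicity}) is equivalent to \(\Re(\lambda_i)\neq 0\) for every \(i\).

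Next I would use incompressibility to convert ``no eigenvalue with zero real part'' into the condition \(d_f \neq 0\). Since \(\jac f\) is real, any purely imaginary eigenvalue comes in a conjugate pair \(\pm i\omega\); combined with \(t_f=\lambda_1+\lambda_2+\lambda_3=0\) from~\eqref{eq:incompressibility-flow}, the third eigenvalue must then vanish, forcing \(d_f=\lambda_1\lambda_2\lambda_3=0\). Conversely, any zero eigenvalue makes \(d_f=0\). Hence \(d_f\neq 0\) exactly characterizes hyperbolicity for small \(T\), which is the first claim.

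For the four-way refinement, I would exploit that \(t_f=0\) makes the characteristic polynomial a depressed cubic \(P_f(\lambda)=\lambda^3+m_f\lambda-d_f\), whose discriminant equals \(-(27d_f^2+4m_f^3)\). Thus the sign of \(27d_f^2+4m_f^3\) separates the sellar case (three real \(\lambda_i\), when this quantity is negative) from the helical case (one real and a complex conjugate pair, when it is positive). Within each regime, \(d_f\) is the product of the eigenvalues, and combined with \(\sum_i \lambda_i=0\) it fixes the signs of the real parts: in the sellar case \(d_f<0\) forces one negative and two positive \(\lambda_i\) (giving \(\bm{[-++]}\)) while \(d_f>0\) forces two negative and one positive (giving \(\bm{[--+]}\)); in the helical case the real eigenvalue \(\lambda_r\) has the sign of \(d_f\) and \(\Re(\lambda_c)=-\lambda_r/2\), so the same sign table of \(d_f\) sorts the two helical subclasses. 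Translating back to \(|\mu_i|\) via the same Taylor expansion recovers the partition of Table~\ref{tab:OWC}.

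The routine bookkeeping is matching signs of \(d_f\) with the \(\bm{[-++]}\)/\(\bm{[--+]}\) labels; the only genuine obstacle is ensuring the \(\mathcal{O}(T^2)\) remainder in~\eqref{eq:timeT-velocity-jacobian-approx} cannot flip a modulus across the unit circle. This is resolved by noting that whenever \(\Re(\lambda_i)\neq 0\) the linear term \(2T\,\Re(\lambda_i)\) in \(|\mu_i|^2-1\) dominates the remainder for all \(T\) smaller than a threshold depending on \(\jac f(t_0,x)\), so the sign of \(|\mu_i|-1\) is determined purely by \(\Re(\lambda_i)\) in the instantaneous limit.
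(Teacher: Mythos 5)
Your argument is correct and follows essentially the same route the paper takes: the paper states this theorem as a cited result (Chong et al.), and its supporting derivation in Section~2.2 --- the expansion \(\jac \psi_{T} = \id + T \jac f + \mathcal{O}(T^{2})\), the relation \(P_{\psi}(\mu) = T^{3}P_{f}\bigl(\tfrac{\mu-1}{T}\bigr)\), incompressibility \(t_{f}\equiv 0\), and the cubic-discriminant facts of Appendix~A --- is exactly the eigenvalue-shift-plus-depressed-cubic analysis you carry out. Your explicit bookkeeping (realness and trace zero forcing the partner of a purely imaginary pair to vanish, hence \(d_{f}=0\); the sign of \(d_{f}\) together with \(\Re\lambda_{c}=-\lambda_{r}/2\) sorting the four subclasses) correctly fills in the details the paper delegates to the citation.
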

\begin{table}
  \begin{tabular}{ c  c  l }
    \(d_{f}\)   & \(27d_{f}^2+4m_{f}^3\)  & \textbf{OWC class} \\ \colrule
    \(-\)            & \(-\)            & \(\bm{[- + +]}\) saddle \\ \colrule
    \(-\)            & \(+\)            & \(\bm{[- + +]}\) helix \\ \colrule
    \(+\)            & \(-\)            & \(\bm{[- - +]}\) saddle \\ \colrule
    \(+\)            & \(+\)            & \(\bm{[- - +]}\) helix   \end{tabular}
\caption{\label{tab:OWC}\owc{} classification based on signs of quantities in the first two columns (see Theorem~\protect\ref{thm:3D-okubo-weiss}).}
\end{table}
The analogous result for planar dynamics is known as the \ow{}
criterion~\cite{Okubo1970,Weiss1991}.

\section{Mesochronic classification}
\label{sec:mesochronic}

The \owc{} criterion (Theorem~\ref{thm:3D-okubo-weiss}) classifies the points
in the domain of the time-\(T\) map \(\psi_{T}\) according to the spectral
character of the Jacobian \(\jac f\) of the velocity field. In general,
the spectral character of \(\jac \psi_{T}\) for \(T\) away from \(0\) will not be
approximated well by the spectral character of \(\jac f\), except in
extremely slowly varying flows. In order to capture both deformation at small and large \(T\), we replace \(\jac f\) by the Jacobian \(\jac \tilde f_{T}\) of the mesochronic velocity \(\tilde f_{T}\), defined as the Lagrangian average of the velocity \(f\) over the duration \(T\):
\begin{equation}
\label{eq:mesochronic-velocity}
\tilde{f}_{T}(x) := \frac{1}{T}\int_{t_{0}}^{t_{0} + T} f\big(\tau,\phi(\tau, t_{0}, x)\big) d\tau.
\end{equation}

To see why \(\jac \tilde{f}_{T}(x)\) plays an important role in our analysis,
we write the solution \(\phi(t,t_0,x)\) in its integral form
\begin{equation}
\phi(t,t_{0},x) = x + \int_{t_{0}}^t f\big(\tau,\phi(\tau,t_0,x)\big) d\tau,\label{eq:evolution}
\end{equation}
which is a consequence of the fundamental theorem of calculus. The same
integral appears both in~\eqref{eq:mesochronic-velocity}
and~\eqref{eq:evolution}, which we use to write the time-\(T\) map as
\begin{equation}
    \psi_{T}(x) = \phi(t_{0}+T,t_0,x) = x + T \tilde{f}_{T}(x),\label{eq:timeTmap}
\end{equation}
and hence
\begin{equation}
  \jac \psi_{T}(x) = \id + T \jac \tilde{f}_{T}(x). \label{eq:timeT-mh-j-link}
\end{equation}
Comparing with~\eqref{eq:timeT-velocity-jacobian-approx}, \(\jac
\psi_{T}(x) \approx \id + T \jac f(t_{0},x)\) which is valid for \(T \approx
0\), we see that for general time intervals $[t_0,t_0+T]$ the mesochronic
velocity Jacobian $\jac \tilde{f}_{T}(x)$ plays the same role as the velocity
field Jacobian $\jac f(t_{0},x)$ did for intervals of infinitesimal length.

In analogy to \owc{}, we classify \(x \in X(t_{0}) \subset
\mathbb{R}^{3}\) under the action \(\psi_{T}\) using only information obtained
from the mesochronic velocity and its Jacobian \(\jac \tilde{f}_{T}(x)\).
\begin{definition}[Mesochronic classes]\label{def:mesohyperbolicity}
  Given a fixed time interval \([t_{0},t_{0}+T]\), the point \(x \in X(t_{0})
  \subset \R^{3}\) is \emph{mesohyperbolic} if it is hyperbolic with respect
  to the diffeomorphism \(\psi_{T}\) (time-\(T\) map), i.e., no eigenvalues of
  \(\jac \psi_{T}(x)\) lie on the unit circle in the complex plane.
  Otherwise, \(x\) is \emph{non-mesohyperbolic}.

  Similarly, the \emph{mesochronic class} of \(x\) is the spectral class
   of \(\psi_{T}\) at \(x\), as specified by
  Definition~\ref{def:spectral-classes}.
\end{definition}

As explained in Section~\ref{sec:diffeomorphism}, the Liouville
incompressibility results in only two, instead of three, ``axes'' in which we understand the spectral class of \(\psi_{T}\):
\begin{compactenum}
\item number of contracting directions, indicated by labels \(\bm{[- + +]}\)
and \(\bm{[- - +]}\).
\item presence of a rigid rotation, indicated by helix vs.~saddle split.
\end{compactenum} The four mesohyperbolic classes are formed by choosing an
option along each of the axes above, with the non-hyperbolic classes acting as
boundary cases between them.

For conceptual and computational reasons we will formulate quantities
\(\Sigma\) and \(\Delta\), each corresponding to one of the ``axes''
above, such that the signs of their values sort the finite-time
dynamics around \(x\) into mesochronic classes. The number of contracting directions will be detected by \(\Sigma\), the presence of rotation by \(\Delta\).
\begin{theorem}[Mesochronic classification]\label{thm:mesochronic-classification}
  Let \(\jac \tilde{f}_{T}(x)\) be the mesochronic Jacobian matrix for the
  dynamics at the point \(x \in X(t_{0}) \subset \R^{3}\) and for the time
  interval \([t_{0},t_{0}+T]\) and
  \(
  P_{\tilde{f}}(\lambda) := \lambda^{3} - t_{\tilde{f}} \lambda^{2} + m_{\tilde{f}} \lambda - d_{\tilde{f}}
  \), the characteristic polynomial of the matrix \(\jac \tilde{f}_{T}(x)\).
  Define  \begin{equation}
    \begin{aligned} \Sigma :=&\
\frac{d_{\tilde{f}}T^{3}}{8-2m_{\tilde{f}}T^{2}-3 d_{\tilde{f}}T^{3}} \\
\Delta :=& -4 d_{\tilde{f}}^{4} T^{12} - 12 d_{\tilde{f}}^{3}m_{\tilde{f}}
T^{11} \\ &- 13 d_{\tilde{f}}^{2}m_{\tilde{f}}^{2}T^{10} -
6d_{\tilde{f}}m_{\tilde{f}}^{3}T^{9} \\ &+ (18 d_{\tilde{f}}^{2}m_{\tilde{f}}
- m_{\tilde{f}}^{4})T^{8} + 18 d_{\tilde{f}}m_{\tilde{f}}^{2} T^{7} \\ &+ (27
d_{\tilde{f}}^{2} + 4 m_{\tilde{f}}^{3})T^{6}.
    \end{aligned}
    \label{eq:sigmadelta}
  \end{equation}
  If \(8 - 2 m_{\tilde{f}}T^{2} - 3d_{\tilde{f}} T^{3} = 0\), then \(x\) is
  non-mesohyperbolic of the neutral-saddle type.

  If \(\Sigma\) is finite, the point \(x\) is classified into mesochronic
  classes according to Table~\ref{tab:mesochronic-classes}.  Mesohyperbolic
  classes are those for which \(\Sigma\) is finite and \(\Sigma \not =0\), i.e., for
  which both
  \begin{equation} d_{\tilde{f}} \not = 0 \text{ and } 8 - 2
m_{\tilde{f}}T^{2} - 3d_{\tilde{f}} T^{3} \not = 0.
    \label{eq:mesohyperbolicity}
  \end{equation}
\end{theorem}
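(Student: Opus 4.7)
The plan is to pull the classification back from $\jac\psi_T$ to $\jac\tilde f_T$ via \eqref{eq:timeT-mh-j-link} and to encode the relevant spectral information of $\jac\psi_T$ in the coefficients of its characteristic polynomial $P_\psi$. Substituting $\jac\psi_T=\id+T\jac\tilde f_T$ gives
\[
P_\psi(\mu)=T^{3}P_{\tilde f}\!\left(\frac{\mu-1}{T}\right),
\]
which I would expand into a monic cubic in $\mu$. Volume preservation of $\psi_T$ amounts to $P_\psi(0)=-1$, and that single scalar relation eliminates $t_{\tilde f}=-Tm_{\tilde f}-T^{2}d_{\tilde f}$. After this substitution the coefficients of $P_\psi$ depend only on $m_{\tilde f}T^{2}$ and $d_{\tilde f}T^{3}$.

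Evaluating the reduced $P_\psi$ at $\mu=\pm 1$ produces $P_\psi(1)=-T^{3}d_{\tilde f}$ and $P_\psi(-1)=-(8-2m_{\tilde f}T^{2}-3d_{\tilde f}T^{3})$, so the two quantities appearing in \eqref{eq:mesohyperbolicity} are precisely the obstructions to having $\mu=1$ or $\mu=-1$ in the spectrum of $\jac\psi_T$, and $\Sigma=P_\psi(1)/P_\psi(-1)$. To finish the mesohyperbolicity characterisation I still have to exclude a complex pair $\mu,\bar\mu$ on the unit circle; however, $\mu_1\mu_2\mu_3=1$ forces the remaining real eigenvalue to equal $1$ in that case, so this possibility is already detected by $P_\psi(1)=0$. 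Hence $x$ is mesohyperbolic iff $\Sigma$ is finite and nonzero. When only the denominator of $\Sigma$ vanishes, $\mu=-1$ lies in the spectrum and the other two eigenvalues are real with product $-1$, so one lies inside and one outside the unit disc in the generic case, giving the neutral-saddle type announced in the theorem.

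For the sign of $\Sigma$, I would rewrite it as the product of Cayley transforms
\[
\Sigma=\prod_{i=1}^{3}\frac{\mu_i-1}{\mu_i+1}.
\]
Each real factor has the sign of $|\mu_i|^{2}-1$, being negative for contracting and positive for expanding real directions; a complex conjugate pair contributes the non-negative factor $|(\mu-1)/(\mu+1)|^{2}$. Therefore the sign of $\Sigma$ equals the parity of the number of contracting directions, which yields $\Sigma<0$ on the $\bm{[-++]}$ subclasses and $\Sigma>0$ on the $\bm{[--+]}$ subclasses, in both the sellar and helical flavours.

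Sellar versus helical is decided by the sign of the discriminant of $P_\psi$. Applying the standard discriminant formula for a monic cubic and simplifying using the elimination of $t_{\tilde f}$, I expect $\mathrm{disc}(P_\psi)=-\Delta$ with $\Delta$ exactly as in \eqref{eq:sigmadelta}. Then $\Delta<0$ corresponds to three real roots (sellar) and $\Delta>0$ to one conjugate pair (helical); combining the signs of $\Sigma$ and $\Delta$ then sorts each mesohyperbolic point into exactly one row of Table \ref{tab:mesochronic-classes}. The main obstacle is the polynomial bookkeeping needed to verify that the cubic discriminant collapses to exactly $-\Delta$ in the form of \eqref{eq:sigmadelta}, and to make the Cayley-transform sign argument watertight across every admissible sign pattern of real eigenvalues with $\mu_1\mu_2\mu_3=1$, including mixed-sign configurations in which some $\mu_i$ is negative.
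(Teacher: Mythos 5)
Your proposal is correct and follows essentially the same route as the paper's proof (Lemmas~\ref{lem:mesochronic-classification-timeT} and~\ref{lem:timeT-to-mhjac}): pass to the spectrum of \(\jac\psi_{T}\) via \(\jac\psi_{T}=\id+T\jac\tilde f_{T}\) and the incompressibility relation, read \(\Sigma\) off as the product of Cayley transforms \(\prod_i(\mu_i-1)/(\mu_i+1)\) (the paper's \(-n_0/n_3\)), and decide sellar vs.\ helical by a cubic discriminant. The only differences are cosmetic and both check out: you detect unit-circle eigenvalues by evaluating \(P_{\psi}(\pm1)\) directly (using \(\det\jac\psi_{T}=1\) to force a unit-modulus complex pair to be accompanied by the eigenvalue \(1\)), where the paper instead looks for purely imaginary roots of \(P_{\psi}\circ\Gamma\), and you take \(\mathrm{disc}(P_{\psi})\) rather than the discriminant of the transformed numerator — these coincide up to the positive factor \(64\) when \(d_{\psi}=1\), so \(\mathrm{disc}(P_{\psi})=-\Delta\) holds exactly as you anticipate, and your neutral-saddle discussion for \(P_{\psi}(-1)=0\) matches the paper's (including the same glossed-over degenerate subcase \(z=\pm1\), which you at least flag as ``generic'').
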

The distinction between the shear and the boundary saddle class when \(\Delta = 0\) depends on eigenvectors of the mesochronic Jacobian and cannot be made purely based on the spectrum.

\begin{table}
  \centering
\begin{subfigure}[c]{0.3\linewidth}
\begin{tikzpicture}
\draw[->,gray] (-2,0) -- (2,0) node[anchor=north,color=black] {$\Delta$};
\draw[->,gray] (0,-2) -- (0,2) node[anchor=east,color=black] {$\Sigma$};
\node[align=center,font=\footnotesize] at (1,1) { $[- - +]$ \\ helix };
\node[align=center,font=\footnotesize] at (-1,1) { $[- - +]$ \\ saddle };
\node[align=center,font=\footnotesize] at (-1,-1) { $[- + +]$ \\ saddle };
\node[align=center,font=\footnotesize] at (1,-1) { $[- + +]$ \\ helix };
\node[align=center,font=\footnotesize] at (1,0) { neutral helix};
\node[align=center,font=\footnotesize] at (-1,0) { neutral saddle};
\node[align=center,font=\footnotesize,rotate=90] at (0,1) { \scriptsize shear or saddle };
\node[align=center,font=\footnotesize,rotate=90] at (0,-1) { \scriptsize shear or saddle };
\end{tikzpicture}
\end{subfigure}\qquad
\begin{subfigure}[c]{0.55\linewidth}
  \begin{tabular}[c]{ c c c }
    \(\Sigma\)  & \(\Delta\)  & \textbf{Mesochronic class} \\ \colrule
    \(-\)            & \(-\)            & \(\bm{[- + +]}\) mesosellar \\ \colrule
    \(-\)            & \(0\)            & \(\bm{[- + +]}\) mesosellar or shear   \\ \colrule
    \(-\)            & \(+\)            & \(\bm{[- + +]}\) mesohelical \\ \colrule
    \(0\)            & \(-\)            & neutral mesosellar \\ \colrule
    \(0\)            & \(0\)            & pure reflection \\ \colrule
    \(0\)            & \(+\)            & neutral mesohelical \\ \colrule
    \(+\)            & \(-\)            & \(\bm{[- - +]}\) mesosellar \\ \colrule
    \(+\)            & \(0\)            & \(\bm{[- - +]}\) shear or mesosellar \\ \colrule
    \(+\)            & \(+\)            & \(\bm{[- - +]}\) mesohelical \\[12pt]
  \end{tabular}
\end{subfigure}
\caption{Mesochronic classification based on signs of \(\Sigma\) and \(\Delta\) (see Theorem~\protect\ref{thm:mesochronic-classification}). The classes with \(\Sigma  \not = 0\) are mesohyperbolic. The class for \(\Sigma = \pm \infty\) is the neutral (non-mesohyperbolic) saddle.}\label{tab:mesochronic-classes}
\end{table}
The proof of the theorem will be the result of two lemmas:
\begin{compactitem}
\item Lemma~\ref{lem:mesochronic-classification-timeT} defines \(\Sigma\) and
\(\Delta\) using coefficients of the characteristic polynomial of the
time-\(T\) map Jacobian matrix \(\jac\psi_{T}(x)\).
\item Lemma~\ref{lem:timeT-to-mhjac} establishes relations between
characteristic coefficients of \(\jac\psi_{T}(x)\) and characteristic
coefficients of \(\jac \tilde{f}_{T}(x)\).
\end{compactitem}

\begin{lemma}\label{lem:mesochronic-classification-timeT}
  Let \(\psi_{T}(x)\) be the time-\(T\) map at a point \(x \in X(t_{0}) \subset \R^{3}\) over the time interval \([t_{0},t_{0}+T]\) and
  \(
  P_{\psi}(\mu) := \mu^{3} - t_{\psi} \mu^{2} + m_{\psi} \mu - d_{\psi}
  \), the characteristic polynomial of the Jacobian matrix \(\jac\psi_{T}(x)\).
  Define
  \begin{equation}
    \begin{aligned}
      \Sigma =&\ \frac{t_{\psi} - m_{\psi}}{t_{\psi} + m_{\psi} + 2}, \\
      \Delta =&\ 4(m_{\psi}^{3} + t_{\psi}^{3}) -
      m_{\psi}^{2}t_{\psi}^{2} - 18m_{\psi}t_{\psi} + 27.
    \end{aligned}
    \label{eq:sigma-delta-timet}
  \end{equation}
  If \(m_{\psi} + t_{\psi} + 2 = 0\), then \(x\) is non-mesohyperbolic of
  neutral-saddle type.

  If \(\Sigma\) is finite, then the point \(x\) is classified into mesochronic classes
  defined by Table~\ref{tab:mesochronic-classes}. Mesohyperbolic classes are
  those for which \(\Sigma\) is finite and \(\Sigma \not = 0\), i.e.,
  \begin{equation*}
    t_{\psi} - m_{\psi} \not = 0 \quad \text{and} \quad t_{\psi} + m_{\psi} + 2 \not = 0.
    \label{eq:mesohyperbolicity-timeT}
  \end{equation*}
\end{lemma}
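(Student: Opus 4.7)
The plan is to use incompressibility to normalize \(d_\psi = \det \jac\psi_T(x) \equiv 1\), so that the characteristic polynomial reduces to \(P_\psi(\mu) = \mu^3 - t_\psi\mu^2 + m_\psi\mu - 1\) with roots satisfying \(\mu_1\mu_2\mu_3 = 1\). I would then classify the spectrum along two independent axes: (i) all three eigenvalues real (sellar) versus one real plus a complex-conjugate pair (helical), captured by \(\Delta\); and (ii) the number of eigenvalues strictly inside the open unit disk, captured by \(\Sigma\).

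For the first axis, I would identify \(\Delta\) with the negative of the standard cubic discriminant. Substituting \(a=1\), \(b=-t_\psi\), \(c=m_\psi\), \(d=-1\) into \(18abcd - 4b^3d + b^2c^2 - 4ac^3 - 27a^2d^2\) yields \(\mathrm{disc}(P_\psi) = -\Delta\) after a short algebraic check. The classical trichotomy for real cubics then gives: \(\Delta > 0\) iff one real eigenvalue and a complex-conjugate pair (helix); \(\Delta < 0\) iff three distinct real eigenvalues (saddle); \(\Delta = 0\) iff a repeated root (shear or a degenerate boundary).

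For the second axis, the key identity is
\[
 \Sigma \;=\; \frac{t_\psi - m_\psi}{t_\psi + m_\psi + 2} \;=\; \frac{P_\psi(1)}{P_\psi(-1)} \;=\; \prod_{i=1}^{3}\frac{\mu_i - 1}{\mu_i + 1},
\]
obtained from \(P_\psi(1) = m_\psi - t_\psi\) and \(P_\psi(-1) = -(t_\psi + m_\psi + 2)\). The Möbius map \(T(\mu)=(\mu-1)/(\mu+1)\) sends the open unit disk to the open left half-plane and the unit circle to the imaginary axis, so for real \(\mu\) one has \(T(\mu)<0\) iff \(|\mu|<1\), while for any complex \(\mu\) the conjugate pair contributes \(T(\mu)T(\bar\mu)=|T(\mu)|^2>0\) regardless of its modulus. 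A short case analysis then yields: three real eigenvalues with one contracting give \(\Sigma<0\) (class \([-++]\)); with two contracting give \(\Sigma>0\) (class \([--+]\)); a complex pair together with a contracting real eigenvalue gives \(\Sigma<0\) (helical \([-++]\)); and with an expanding real eigenvalue gives \(\Sigma>0\) (helical \([--+]\)). Combined with the sign of \(\Delta\), this reproduces the four mesohyperbolic rows of Table~\ref{tab:mesochronic-classes}.

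Finally I would treat the non-hyperbolic rows. When \(t_\psi+m_\psi+2=0\), the value \(\mu=-1\) lies on the spectrum; the remaining two eigenvalues must multiply to \(-1\) and so cannot form a complex-conjugate pair (whose product is strictly positive), hence they are a pair of real numbers of opposite sign, yielding the neutral saddle asserted in the lemma. When \(\Sigma=0\) is finite, \(\mu=1\) lies on the spectrum and the remaining two eigenvalues satisfy \(\mu_2\mu_3=1\); the sign of \(\Delta\) then discriminates real reciprocals (neutral saddle, \(\Delta<0\)), a conjugate pair on the unit circle (neutral helix, \(\Delta>0\)), or a coincident pair at \(\pm 1\) (pure reflection, \(\Delta=0\)). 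The main obstacle is exactly this boundary bookkeeping; in particular the rows with \(\Delta=0,\ \Sigma\neq 0\) genuinely cannot be resolved between shear and a boundary saddle from the spectrum alone, which matches the eigenvector caveat recorded immediately after the lemma statement.
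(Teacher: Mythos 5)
Your argument is correct and reproduces every row of Table~\ref{tab:mesochronic-classes}, but it runs the same underlying idea --- the Cayley/M\"obius map exchanging the unit circle for the imaginary axis --- through a shorter pipeline than the paper. The paper transforms the polynomial: it forms \(P_{\psi}\circ\Gamma\) with \(\Gamma(s)=(1+s)/(1-s)\), reads \(\Sigma=-n_{0}/n_{3}\) off the numerator coefficients as the product of the transformed zeros, and obtains \(\Delta\) from the discriminant of that transformed cubic. You transform the eigenvalues instead, via the identities \(\Sigma=P_{\psi}(1)/P_{\psi}(-1)=\prod_{i}(\mu_{i}-1)/(\mu_{i}+1)\) and \(\Delta=-\mathrm{disc}(P_{\psi})\), so the sign analysis for \(\Sigma\) reduces to counting negative factors (one for each real contracting eigenvalue, none from a conjugate pair), and the real-versus-complex dichotomy comes from the classical discriminant trichotomy applied directly to \(P_{\psi}\); incompressibility \(d_{\psi}=1\) enters exactly where the paper uses it, to tie the modulus of the complex pair to the real eigenvalue and to pin down the residual spectrum in the boundary cases. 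This buys a cleaner derivation --- no composed-polynomial algebra --- and as a by-product independently confirms the sign \(4(m_{\psi}^{3}+t_{\psi}^{3})\) in \(\Delta\), consistent with the paper's final formula though not with its intermediate display for the transformed discriminant, which carries a sign slip. Your treatment of the degenerate cases matches the paper's: \(P_{\psi}(-1)=0\) forces two further real eigenvalues of opposite sign (neutral saddle), and finite \(\Sigma=0\) forces the eigenvalue \(1\) with a reciprocal pair sorted by \(\sgn\Delta\); the only small refinement is that when \(\Sigma\) is finite and \(\Sigma=\Delta=0\) the coincident pair can only sit at \(+1\), since a pair at \(-1\) would give \(P_{\psi}(-1)=0\) and make \(\Sigma\) infinite.
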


\begin{remark}\label{rem:mf-interpretation}
  The trace \(t_{\psi}\) and the determinant \(d_{\psi}\) of the Jacobian matrix
  \(\jac \psi_{T}\) are commonly encountered in matrix analysis due to their
  simple relationships to eigenvalues. To interpret the cofactor trace \(m_{\psi}\)
  we can expand it as:
  \begin{align*}
    m_{\psi} &= \sum_{j\not = k}\mu_{j}^{\ast}\mu_{k}^{\ast}
               = \mu_{1}^{\ast}\mu^{\ast}_{2}\mu^{\ast}_{3}\sum_{i=1}^{3}\frac{1}{\mu_{i}^{\ast}} \\
             &= \det \jac \psi_{T} \cdot \tr \left[{(\jac \psi_{T})}^{-1}\right].
  \end{align*}
  Under an incompressible flow (\(\det \jac \psi_{T} \equiv 1\)) the cofactor trace \(m_{\psi}\) is the trace of the inverse of the time-\(T\) map Jacobian, as discussed after \eqref{eq:cofactortrace}:
  \begin{equation*}
      m_{\psi} = t_{\psi^{-1}}.
  \end{equation*}

  Additionally, we can clarify the meaning of
  \(\Sigma\). Rewrite~\eqref{eq:sigma-delta-timet} as
  \begin{equation*}
    \Sigma = 1 - 2{\left(\frac{t_{\psi} + 1}{m_{\psi}+1} + 1\right)}^{-1}.
  \end{equation*}
  Then through simple algebraic manipulation it follows that
  \begin{equation}\label{eq:sign-of-sigma}
      \sgn \Sigma = \sgn \left( \frac{t_{\psi} + 1}{m_{\psi} + 1} - 1 \right)
      = \sgn( t_{\psi} - m_{\psi}).
  \end{equation}
  Using \(m_{\psi} = t_{\psi^{-1}}\) and rewriting the expressions for \(t_{\psi}\) using eigenvalues \(\mu_{i}\) of the flow map Jacobian, we obtain that
  \begin{equation}\label{eq:sgn-sigma-as-evals}
    \sgn \Sigma \equiv \sgn \left[
      (\mu_{1} - \mu_{1}^{-1})
      + (\mu_{2} - \mu_{2}^{-1}) - (\mu_{1}\mu_{2} - \mu_{1}^{-1}\mu_{2}^{-1}) \right].
  \end{equation}
\end{remark}

\begin{proof}[Proof of Lemma~\ref{lem:mesochronic-classification-timeT}]
  The mesochronic class of \(x\) is determined by eigenvalues of the Jacobian
  \(\jac \psi_{T}(x)\), whose characteristic polynomial is
  \( P_{\psi}(\mu) = \mu^{3} - t_{\psi} \mu^{2} + m_{\psi} \mu - d_{\psi},
  \) with determinant \(d_{\psi}\), trace \(t_{\psi}\), and cofactor trace \(m_{\psi}\)
  of \(\jac \psi_{T}\) (see Appendix~\ref{sec:cubic-poly-mat}).

The mesochronic class is determined by relative locations of zeros
\(P_{\psi}(\mu^{\ast}) = 0\) in reference to the unit circle and to the
horizontal axis. As mentioned, converting the reference unit circle to a
vertical axes simplifies the criteria, as they can then be computed solely
from signs of eigenvalues. To this end, we introduce the conformal map
\(\Gamma(s) = \frac{1+s}{1-s}\) which maps the left half-plane in \(\C\) to
the inside of the unit circle in \(\C\), while preserving the upper
half-plane. It follows that the location of zeros of the composite function
\(P_{\psi} \circ \Gamma\) with respect to axes is the same as the location of
zeros of \(P_{\psi}\) with respect to horizontal axes and the unit
circle. Note that the inverse \(\Gamma^{-1}(\mu) = \frac{\mu-1}{\mu+1}\) has a
pole at \(\mu^{\ast}=-1\), so no finite zeros in the \(s\)-plane will
correspond to the zero of \(P_{\psi}\) at \(\mu^{\ast}=-1\). For this reason,
we will separately treat the case when \(\jac \psi_{T}\) has an eigenvalue at
\(-1\).

  Assuming \(P_{\psi}(-1) \not = 0\), the composite function is
  \begin{equation}
    \label{eq:composite-characteristic}
    [P_{\psi} \circ \Gamma] (s) = \frac{n_{3} s^{3} + n_{2} s^{2} + n_{1} s + n_{0}}{{(s-1)}^{3}},
  \end{equation}
  with coefficients
  \begin{equation*}
    \begin{aligned}
      n_{3} &= -1 - t_{\psi} - m_{\psi} - \phantom{3}d_{\psi} =  -t_{\psi} - m_{\psi}-2\\
      n_{2} &= -3 - t_{\psi} + m_{\psi} + 3 d_{\psi} = -t_{\psi} + m_{\psi}\\
      n_{1} &= -3 + t_{\psi} + m_{\psi} - 3 d_{\psi} = \phantom{-}t_{\psi} + m_{\psi} - 6\\
      n_{0} &= -1 + t_{\psi} - m_{\psi} + \phantom{3}d_{\psi} = \phantom{-}t_{\psi} - m_{\psi},
    \end{aligned}
  \end{equation*}
  where the second equalities are obtained by
  incompressibility
  \(d_{\psi} = \prod_{i=1}^{3} \mu_{i}^{\ast} = 1\).

  A point \(x\) is non-hyperbolic whenever one of the roots of the
  characteristic polynomial \(P_{\psi}(\mu)\) of \(\jac \psi_{T}(x)\) is on
  the unit circle. If \(P_{\psi}(-1) \not = 0\), this condition implies that a
  purely imaginary number \(ia\), \(a\in\R\) is a zero of \(P_{\psi} \circ
  \Gamma\). Substituting into the numerator of \(P_{\psi}\circ\Gamma\), we
  obtain the condition \(n_{1}n_{2} - n_{0}n_{3} \not = 0\) for
  hyperbolicity. This condition translates to a condition on the spectral
  coefficients of \(\jac \psi_{T}\): \( t_{\psi} - m_{\psi} \not = 0 \).  The
  other case for non-hyperbolicity is \(P_{\psi}(-1) \not = 0\), which by
  direct substitution into \(P_{\psi}\) translates to \( t_{\psi} + m_{\psi} +
  2 \not = 0 \).

  When \(P_{\psi}(-1) \not = 0\), we can proceed to a finer classification of
  the location of the roots of \(P_{\psi}\).  Due to incompressibility, there
  either have to be two zeros of the polynomial \(P_{\psi}\) inside a circle
  and one outside, or vice-versa. Under a conformal transformation \(\Gamma\)
  this condition translates into two zeros of \(P_{\psi} \circ \Gamma\) having
  matching signs, while the third is of the opposite sign. It follows that the
  product of zeros \(s^{\ast}_{i}\) of \(P_{\psi} \circ \Gamma\) is positive
  when two zeros are negative, corresponding to two directions of contraction
  under the flow map, while it is negative when there is a single contracting
  and two expanding directions. By using the zeros \(s_{i}^{\ast}\) of \(P_{\psi}
  \circ \Gamma\) to factorize its numerator \( \sum_{i=0}^{3} n_{i}s^{i} =
  n_{3}\prod_{i=1}^{3}(s - s_{i}^{\ast}) \) and equating the zeroth-order
  coefficients one obtains \(-n_{3} \prod_{i=1}^{3}s_{i}^{\ast} = n_{0}\). We
  therefore define an indicator \(\Sigma\) as
  \[
  \Sigma := s_{1}^{\ast}s_{2}^{\ast}s_{3}^{\ast} = -\frac{n_{0}}{n_{3}} = \frac{t_{\psi} - m_{\psi}}{t_{\psi} + m_{\psi} + 2}.
  \]
  By this argument, \(\Sigma \not = 0\) implies hyperbolicity: \(\Sigma > 0\)
  implies two directions of contraction, one of expansion, while \(\Sigma <
  0\) implies one direction of contraction, two of expansion. When \(\Sigma =
  0\) one of the roots of \(P_{\psi}\) lies on the unit circle, which we term
  ``neutral'', while the other two directions are contracting and expanding.

  The presence of rotation is indicated by the complexity of zeros of \(P_{\psi} \circ \Gamma\), which is determined by the discriminant of the numerator,
  \begin{equation}
  \mathcal{D}_{\threed{}} = -64 \big( 4(m_{\psi}^{3} - t_{\psi}^{3}) - m_{\psi}^{2}t_{\psi}^{2} - 18m_{\psi}t_{\psi} + 27 \big).\label{eq:discriminant}
\end{equation}
When \(\mathcal{D}_{\threed{}} > 0\), all zeros are real and distinct, when
\(\mathcal{D}_{\threed{}} < 0\), two zeros are complex-conjugates of each
other, when \(\mathcal{D}_{\threed{}} = 0\), one real zero is repeated, while
the third is distinct. We therefore define our second indicator \(\Delta\) to
be
\begin{equation}
  \Delta := -\mathcal{D}_{\threed{}}/64 = 4(m_{\psi}^{3} + t_{\psi}^{3}) - m_{\psi}^{2}t_{\psi}^{2} -
  18m_{\psi}t_{\psi} + 27.
\label{eq:delta}
\end{equation}
It follows that \(\Delta < 0\) indicates that all eigenvalues of \(\jac
\psi_{T}(x)\) are real, implying that there is no rigid rotation; \(\Delta >
0\) indicates that two eigenvalues are complex, so rigid rotation is
present. When \(\Delta = 0\) two eigenvalues coincide somewhere along the real
line.

Finally, we have to deal with the case when \(P_{\psi}(-1) = 0\) which is not
covered by the \(s\)-complex plane as defined above, as it implies that the
denominator of \(\Sigma\) is zero, i.e., \(t_{\psi} + m_{\psi} + 2 = 0\). If
one zero is \(\mu^{\ast}_{2} = -1\), then the two others have to be
\(\mu_{1}^{\ast} = z\), \(\mu_{3}^{\ast} = -1/z\), for some \(z\in\C\) to
satisfy incompressibility. Therefore the studied point \(x\) is
non-hyperbolic, with signature \([-\ 0\ +]\). Notice that if \(z\) has an
imaginary component, its conjugate is \(\bar z = -1/z\), which cannot be,
as \(z \bar z \geq 0\), therefore one zero at \(-1\) implies that the other
two zeros are necessarily real. It follows that \(P_{\psi}(-1) = 0\) indicates
a neutral saddle case, even if \(\Sigma\) cannot be evaluated.
\end{proof}

We now relate the characteristic polynomials of \(\jac \psi_{T}\) and \(\jac \tilde{f}_{T}\).
\begin{lemma}\label{lem:timeT-to-mhjac} The characteristic polynomial of the Jacobian matrix \(\jac \psi_{T}\) and the characteristic polynomial of the mesochronic Jacobian \(\jac \tilde{f}_{T}\) are given by the expressions
  \begin{equation}
    \begin{aligned}
      P_{\psi}(\mu) &= \mu^{3} - t_{\psi} \mu^{2} + m_{\psi} \mu - d_{\psi}, \\
      P_{\tilde{f}}(\lambda) &= \lambda^{3} - t_{\tilde{f}} \lambda^{2} + m_{\tilde{f}} \lambda - d_{\tilde{f}}.
    \end{aligned}\label{eq:char-polys}
  \end{equation}
  The coefficients are linked by the expressions
  \begin{equation}
    \begin{aligned}
      t_{\psi} &= 3 + T t_{\tilde{f}}\\
      m_{\psi} &= 3 + 2T t_{\tilde{f}} + T^{2}m_{\tilde{f}}\\
      d_{\psi} &= 1 + T t_{\tilde{f}} + T^{2}m_{\tilde{f}} + T^{3} d_{\tilde{f}},
    \end{aligned}\label{eq:spectral-timeT-to-mesochronic}
  \end{equation}
  where \(T\) is the length of the averaging period in \(\tilde{f}_{T}\).

  Moreover, the incompressibility condition \(d_{\psi} \equiv 1\) imposes the relation
  \begin{equation}\label{eq:incompressibility-mh}
    t_{\tilde{f}} + T m_{\tilde{f}} + T^{2} d_{\tilde{f}} \equiv 0
  \end{equation}
  on the coefficients of \(P_{\tilde{f}}\)  for \(T \not = 0\).
\end{lemma}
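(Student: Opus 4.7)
The plan is to leverage equation~\eqref{eq:timeT-mh-j-link}, which states $\jac\psi_{T}(x) = \id + T\,\jac\tilde{f}_{T}(x)$. This affine relation pins down everything: each eigenvalue $\lambda_i$ of $\jac\tilde{f}_{T}(x)$ corresponds (with the same eigenvector and algebraic multiplicity) to an eigenvalue $\mu_i = 1 + T\lambda_i$ of $\jac\psi_{T}(x)$. Since the coefficients of a characteristic polynomial are, up to sign, the elementary symmetric polynomials in the eigenvalues, the three identities in~\eqref{eq:spectral-timeT-to-mesochronic} are a direct consequence.

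Concretely, the first step would be to apply Vieta's formulas. Substituting $\mu_i = 1 + T\lambda_i$ and expanding yields
\[
  t_{\psi} = \sum_i (1 + T\lambda_i) = 3 + T t_{\tilde f},
\]
\[
  m_{\psi} = \sum_{i<j}(1+T\lambda_i)(1+T\lambda_j) = 3 + 2T t_{\tilde f} + T^{2} m_{\tilde f},
\]
\[
  d_{\psi} = \prod_i (1+T\lambda_i) = 1 + T t_{\tilde f} + T^{2} m_{\tilde f} + T^{3} d_{\tilde f}.
\]
A cleaner eigenvalue-free alternative is to compute the characteristic polynomial directly as
\[
  P_{\psi}(\mu) = \det\!\big((\mu-1)\id - T\,\jac\tilde{f}_{T}\big) = T^{3}\,P_{\tilde{f}}\!\left(\frac{\mu-1}{T}\right),
\]
and then collect powers of $\mu$ on the right-hand side; this variant handles defective Jacobians without modification and reduces the problem to expanding $(\mu-1)^{3} - T t_{\tilde f}(\mu-1)^{2} + T^{2} m_{\tilde f}(\mu-1) - T^{3}d_{\tilde f}$.

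For the incompressibility relation~\eqref{eq:incompressibility-mh}, the only step is to set $d_{\psi} = 1$ in the third equation of \eqref{eq:spectral-timeT-to-mesochronic}, obtaining $T t_{\tilde f} + T^{2} m_{\tilde f} + T^{3} d_{\tilde f} \equiv 0$, and divide through by $T \neq 0$.

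There is no real obstacle here: the entire argument is a bookkeeping exercise once~\eqref{eq:timeT-mh-j-link} is in hand. The only mild subtlety is keeping the binomial coefficients straight when expanding $m_{\psi}$ and $d_{\psi}$; these drop out cleanly because the shift is by the identity, so every symmetric polynomial in $\{1,1,1\}$ reduces to a small integer. I would present the direct-determinant route in the write-up to sidestep any mention of diagonalizability, and then derive \eqref{eq:incompressibility-mh} as a one-line corollary.
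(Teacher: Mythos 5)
Your proposal is correct, and the route you say you would actually write up --- computing $P_{\psi}(\mu)=\det\bigl((\mu-1)\id - T\,\jac\tilde{f}_{T}\bigr)=T^{3}P_{\tilde{f}}\bigl(\tfrac{\mu-1}{T}\bigr)$, expanding in powers of $\mu$, and then substituting $d_{\psi}\equiv 1$ and dividing by $T\neq 0$ --- is exactly the paper's proof. The Vieta/eigenvalue-shift variant you mention is a harmless equivalent restatement, so there is nothing substantively different to compare.
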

\begin{proof}
  The connection between the two Jacobian matrices \(\jac \psi_{T}\) and \(\jac \tilde{f}_{T}\) is given by relation~\eqref{eq:timeT-mh-j-link}: \(\jac \psi_{T} = \id + T \jac \tilde{f}_{T}\). The characteristic polynomial \(P_{\psi}\) of \(\jac \psi_{T}\) can be re-written using the characteristic polynomial \(P_{\tilde{f}}\) of \(\jac \tilde{f}_{T}\)
  \begin{align*}
    P_{\psi}(\mu) &=\ \det(\mu \id - \jac  \psi_{T}) =\ T^{3} \det\left( \frac{\mu - 1}{T} \id - \jac \tilde{f}_{T} \right)\\
                  &= T^{3} P_{\tilde{f}}\left( \frac{\mu - 1}{T} \right).
  \end{align*}
  Using the notation for coefficients of \(P_{\tilde{f}}\) from~\eqref{eq:char-polys}, we can expand the last line to obtain
  \begin{equation}
    \begin{aligned}
      P_{\psi}(\mu) &= \mu^{3} - (3 + T t_{\tilde{f}}) \mu^{2}\\
      &+ (3 + 2 T t_{\tilde{f}} + T^{2} m_{\tilde{f}} ) \mu \\
      &- (1 + T t_{\tilde{f}} + T^{2} m_{\tilde{f}} + T^{3} d_{\tilde{f}}).
    \end{aligned}
  \end{equation}
  By comparing coefficients with the general expression for \(P_{\psi}\), the statement of the theorem follows. The incompressibility condition from the statement of the theorem
  is a consequence of substituting \(d_{\psi} \equiv 1\) for any \(T > 0\).
\end{proof}

We now combine the preceding Lemmas to give the proof of Theorem~\ref{thm:mesochronic-classification}.
\begin{proof}[Proof of Theorem~\ref{thm:mesochronic-classification}.]

  We start with the expression for the \(P_{\psi}(-1) \not= 0\) condition, which was stated as \(m_{\psi} + t_{\psi} + 2 \not = 0\). Using~\eqref{eq:spectral-timeT-to-mesochronic} we can rewrite the condition as
  \[
  8 + 3 t_{\tilde{f}}T + m_{\tilde{f}} T^{2} \not = 0.
  \]
  Since \(t_{\tilde{f}}\), \(m_{\tilde{f}}\), and \(d_{\tilde{f}}\) are related through the incompressibility constraint \(t_{\tilde{f}} + m_{\tilde{f}}T + d_{\tilde{f}}T^{2} \equiv 0\) derived in~\eqref{eq:incompressibility-mh}, we can formulate the condition in two alternative ways:
  \begin{align*}
    8 + 2 t_{\tilde{f}}T - d_{\tilde{f}} T^{3} &\not = 0 \\
    8 - 2 m_{\tilde{f}}T^{2} - 3d_{\tilde{f}} T^{3} &\not = 0.
  \end{align*}

  The other mesohyperbolicity condition \(t_{\psi}-m_{\psi} \not = 0\) translates into either
  \begin{equation*}
    8 d_{\tilde{f}} T^{3} \not = 0 \quad \text{or} \quad
    8T(t_{\tilde{f}} + m_{\tilde{f}}T) \not = 0.
  \end{equation*}

  Reformulating the expressions for \(\Sigma\) and \(\Delta\) in Lemma~\ref{lem:mesochronic-classification-timeT}
  \begin{align*}
    \Sigma &= \frac{t_{\psi} - m_{\psi}}{t_{\psi} + m_{\psi} + 2} =  1 - 2\frac{m_{\psi}t_{\psi}}{t_{\psi} + m_{\psi} + 2}\\
    \Delta &= 4(m_{\psi}^{3} + t_{\psi}^{3}) - m_{\psi}^{2}t_{\psi}^{2}
             - 18m_{\psi}t_{\psi} + 27,
  \end{align*}
  using \(d_{\tilde{f}}\), \(m_{\tilde{f}}\) and \(t_{\tilde{f}}\) through~\eqref{eq:spectral-timeT-to-mesochronic} we obtain
  \begin{align*}
    \Sigma =&\ -T \frac{t_{\tilde{f}} + T m_{\tilde{f}}}{8 + 3T t_{\tilde{f}} + T^{2}m_{\tilde{f}}} \\
    \Delta =&\
              (4 m_{\tilde{f}}^{3} - m_{\tilde{f}}^{2}t_{\tilde{f}}^{2}) T^{6} +
              (18m_{\tilde{f}}^{2}t_{\tilde{f}} - 4m_{\tilde{f}}t_{\tilde{f}}^{3}) T^{5} \\
            & +
              (27 m_{\tilde{f}}^{2} + 18 m_{\tilde{f}}t_{\tilde{f}}^{2} - 4t_{\tilde{f}}^{4}) T^{4}  \\
            & +
              54 m_{\tilde{f}} t_{\tilde{f}}T^{3} +
              27 t_{\tilde{f}}^{2} T^{2}.
  \end{align*}

  Since \(t_{\tilde{f}}\), \(m_{\tilde{f}}\), and \(d_{\tilde{f}}\) are related through the incompressibility constraint~\eqref{eq:incompressibility-mh}, we can re-formulate the expressions using either \(t_{\tilde{f}}\) or \(m_{\tilde{f}}\)
  \begin{align*}
    \Sigma =&\ \frac{ d_{\tilde{f}} T^{3}}{8 + 2 t_{\tilde{f}}T - d_{\tilde{f}}T^{3}}
              = \frac{d_{\tilde{f}}T^{3}}{8-2m_{\tilde{f}}T^{2}-3 d_{\tilde{f}}T^{3}},\\
    \Delta =&\ -d_{\tilde{f}}^{3}T^{9} - d_{\tilde{f}}^{2}t_{\tilde{f}}T^{8} +
              6d_{\tilde{f}}^{2}t_{\tilde{f}} T^{7} \\
            & + (27 d_{\tilde{f}} + 2t_{\tilde{f}}^{3})d_{\tilde{f}}T^{6}
              + 6d_{\tilde{f}}t_{\tilde{f}}^{2} T^{5}  - t_{\tilde{f}}^{4} T^{4} - 4t_{\tilde{f}}^{3}T^{3}\\
    =&\ -4 d_{\tilde{f}}^{4} T^{12} - 12 d_{\tilde{f}}^{3}m_{\tilde{f}} T^{11} \\
            &- 13 d_{\tilde{f}}^{2}m_{\tilde{f}}^{2}T^{10} - 6d_{\tilde{f}}m_{\tilde{f}}^{3}T^{9}
              + (18 d_{\tilde{f}}^{2}m_{\tilde{f}}
                                                        - m_{\tilde{f}}^{4})T^{8} \\ &+ 18 d_{\tilde{f}}m_{\tilde{f}}^{2} T^{7} + (27 d_{\tilde{f}}^{2} + 4 m_{\tilde{f}}^{3})T^{6}.
  \end{align*}
  To emphasize the connection to the OWC expressions (Theorem~\ref{thm:3D-okubo-weiss}), we will use the \(d_{\tilde{f}}\) and \(m_{\tilde{f}}\) versions of the formulas.

  The statement of the proof is equivalent to Lemma~\ref{lem:mesochronic-classification-timeT} where the criteria are expressed in terms of the spectral coefficients of \(\jac \tilde{f}_{T}\) instead of \(\jac \psi_{T}\).
\end{proof}

In summary, to identify the mesochronic class of a point \(x\), take the following steps:
\begin{compactenum}
\item compute the Jacobian \(\jac \tilde{f}_{T}(x)\) (details in  Appendices~\ref{sec:ode-mh-jacobian} and~\ref{sec:implementation}),
\item evaluate \(\Delta\) and \(\Sigma\) using~\eqref{eq:sigmadelta},
\item use Table~\ref{tab:mesochronic-classes} to identify the mesochronic class.
\end{compactenum}

\section{Limits and boundary cases}
\label{sec:limits-boundary-cases}

When, respectively, \(T\to 0\) and \(T \to \infty\), we show that mesochronic
classification limits to classical \owc{} and Lyapunov exponent
analyses. Additionally, if the dynamics evolves on an invariant plane in
\threed{} space, certain \threed{} mesochronic classes have their counterparts
in the \twod{} mesochronic classification~\cite{Mezic2010}.

\subsection{Instantaneous limit: The
  \owc{} criterion}\label{sec:limit-to-zero}

The relation between flow map and mesochronic velocity~\eqref{eq:timeTmap} can
be rewritten as $\tilde{f}_T(x) = \frac{\phi(t_0+T,t_0,x) - x}{T}$. As a
consequence, the averaged field \(\tilde{f}_{T}\) tends to $f$ pointwise as
\(T \to 0^{+}\) and \(\jac \tilde{f} \to \jac f\). By continuity of
eigenvalues with respect to matrix elements, \(d_{\tilde{f}} \rightarrow
d_{f}\), \(m_{\tilde{f}} \rightarrow m_{f}\), and \(t_{\tilde{f}} \rightarrow
t_{f}\) as $T \to 0^{+}$. As all three of these quantities are finite, the
mesochronic incompressibility criterion~\eqref{eq:incompressibility-mh} is
trivially satisfied in the limit. Additionally, due to incompressibility of
the vector field, it holds that \(t_{\tilde{f}} \xrightarrow{T \to 0} 0\).

\begin{theorem}\label{thm:owc-to-mh}
  Suppose that at the point \((t_{0},x)\) the differential
  equation~\eqref{eq:dynamical-system} is instantaneously hyperbolic by the
  \owc{} (OWC) criterion. Then, there exists \(T_{\min} > 0\) such that the point
  \((t_{0},x)\) is also mesohyperbolic with respect to all time intervals
  \([t_{0},t_{0}+T]\) for which \(T < T_{\min}\).
\end{theorem}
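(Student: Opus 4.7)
The plan is to verify both mesohyperbolicity conditions from Theorem~\ref{thm:mesochronic-classification} (equivalently, equation~\eqref{eq:mesohyperbolicity}) hold for all sufficiently small $T>0$, and then take $T_{\min}$ to be the smaller of the two resulting thresholds. Recall that mesohyperbolicity requires both $d_{\tilde f}\neq 0$ and $8 - 2 m_{\tilde f} T^{2} - 3 d_{\tilde f} T^{3} \neq 0$. Instantaneous OWC hyperbolicity is the hypothesis $d_{f}(t_0,x)\neq 0$.

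The first step is to establish the limiting behavior $d_{\tilde f}(T)\to d_{f}$, $m_{\tilde f}(T)\to m_{f}$, and $t_{\tilde f}(T)\to t_{f}=0$ as $T\to 0^{+}$. This is essentially done in the opening paragraph of Section~\ref{sec:limits-boundary-cases}: from the integral representation $\tilde f_{T}(x) = (\phi(t_0+T,t_0,x)-x)/T$ and $C^{3}$ regularity of $f$, the map $T\mapsto \jac \tilde f_{T}(x)$ extends continuously to $T=0$ with value $\jac f(t_0,x)$; hence the trace, cofactor trace, and determinant depend continuously on $T$ down to $T=0$ and converge to $t_f, m_f, d_f$ respectively.

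For the first mesohyperbolicity condition, since $d_{f}\neq 0$ by hypothesis and $T\mapsto d_{\tilde f}(T)$ is continuous at $T=0$ with limit $d_f$, there exists $T_{1}>0$ such that $d_{\tilde f}(T)\neq 0$ for all $T\in(0,T_{1})$. For the second condition, observe that
\begin{equation*}
8 - 2 m_{\tilde f}(T) T^{2} - 3 d_{\tilde f}(T) T^{3} \xrightarrow{T\to 0^{+}} 8 \neq 0,
\end{equation*}
because $m_{\tilde f}(T)$ and $d_{\tilde f}(T)$ are bounded near $T=0$. Thus by continuity there is $T_{2}>0$ such that this expression remains nonzero for $T\in(0,T_{2})$. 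Setting $T_{\min}:=\min\{T_{1},T_{2}\}$ yields the claim.

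The only subtlety, and the step I expect to require the most care, is justifying continuity of $T\mapsto \jac\tilde f_{T}(x)$ at $T=0$. This follows from smooth dependence of the flow $\phi(\cdot,t_0,\cdot)$ on initial conditions for the $C^{3}$ vector field $f$, which guarantees that $\jac\tilde f_{T}(x)$ (computed either by differentiating the integral representation of $\phi$ or via the variational equation, as discussed in Appendix~\ref{sec:ode-mh-jacobian}) is a continuous function of $T$ on $[0,T]$, with $\jac\tilde f_{0}(x) = \jac f(t_0,x)$. Once this is in hand the rest is a one-line continuity argument, and there is no need to identify which of the four OWC subclasses is preserved — only the coarse hyperbolic/non-hyperbolic dichotomy is claimed.
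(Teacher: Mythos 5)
Your proposal is correct and follows essentially the same route as the paper's own proof: both arguments rest on continuity of $T \mapsto d_{\tilde{f}}$ and $T \mapsto m_{\tilde{f}}$ at $T=0$ (inherited from continuity of $T \mapsto \jac\tilde{f}_{T}$), the hypothesis $d_{f} \neq 0$, and the observation that $8 - 2m_{\tilde{f}}T^{2} - 3d_{\tilde{f}}T^{3} \to 8 \neq 0$, yielding a common threshold $T_{\min}$. Your extra care in justifying the continuity of $T \mapsto \jac\tilde{f}_{T}(x)$ via smooth dependence on initial conditions is a reasonable elaboration of a step the paper simply asserts.
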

\begin{proof}

  According to Theorem~\ref{thm:3D-okubo-weiss}, we obtain that
  \(d_{f}\not=0\). Since the maps \(T \mapsto \jac \tilde{f}_{T}\), \(T
  \mapsto d_{\tilde{f}_{T}}\), and \(T \mapsto m_{\tilde{f}_{T}}\) are
  continuous, the instantaneous hyperbolicity will imply mesohyperbolicity for
  some small \(T_{\min} > 0\). The continuity means that there exists
  \(T_{\min} > 0\) on which continuity of the maps $T \mapsto
  d_{\tilde{f}}(T)$ and $T \mapsto 3t^3d_{\tilde{f}_{T}}+2 T^2
  m_{\tilde{f}_{T}}-8$ implies
  \[
  d_{\tilde{f}_{T}}\not=0,\quad  3T^3d_{\tilde{f}_{T}}+2T^2m_{\tilde{f}_{T}}-8\not=0,
  \]
  for \(T \in [0,T_{\min}]\).  By virtue of
  Theorem~\ref{thm:mesochronic-classification}, the point $(t_{0},x)$
  is also mesohyperbolic with respect to $[0,T]$ and the proof is complete.
\end{proof}

As a consequence, the signs of the indicators \(\Delta\) and
\(\Sigma\)~\eqref{eq:sigmadelta} have the following limits
\begin{equation*}
  \begin{aligned}
    \sgn \Sigma &\xrightarrow{T \to 0} \sgn d_{f},\\
    \sgn \Delta &\xrightarrow{T \to 0} \sgn (27 d_{f}^{2} + 4 m_{f}^{3}).
  \end{aligned}
\end{equation*}
By comparing mesochronic classification criteria
(Table~\ref{tab:mesochronic-classes}) in the limit \(T \to 0\) with the
instantaneous OWC criterion (Theorem~\ref{thm:3D-okubo-weiss}), we can conclude
that mesochronic classification reduces to OWC classification in the \(T\to
0\) limit. Put differently, mesochronic classes generalize OWC classes to
time intervals $[t_0,t_0 +T]$ with $T > 0$.

\subsection{Asymptotic limits}\label{sec:limit-to-infty}

For autonomous dynamical systems defined over compact domains, asymptotic
rates of deformation are defined by Lyapunov exponents. The finite-time
analogs, termed \emph{\ftle{}} (FTLE)~\cite{Shadden2005,Haller2001} are defined using
the polar decomposition of the Jacobian matrix of the time-\(T\) map
\(\psi_{T}\). Let \(\jac\psi_{T} = R \abs{\jac\psi_{T}}\) be the polar
decomposition of \(\jac\psi_{T}\) where \(\abs{\jac\psi_{T}}^{2} =
\jac\psi_{T}^{\top} \jac\psi_{T} \). Eigenvalues of \(\abs{\jac\psi_{T}}\) are
non-negative, singular values of \(\jac\psi_{T}\), so we can define
\ftle{} \(\sigma_{i}\) to be the exponential
growth/decay rates of the singular values, i.e., we represent the singular
values of \(\jac\psi_{T}\) as \(e^{\sigma_{i} T}\).

It is a well-known fact in matrix analysis that when a matrix is normal, i.e.,
unitarily diagonalizable, its singular values are equal to absolute values of
its eigenvalues. In the language of this paper, this means that when \(\jac
\psi_{T}\) is normal, positions of the eigenvalues \(\mu_{i}\) of \(\jac
\psi_{T}\) in reference to the unit circle are determined by the signs of
the \ftle{} \(\sigma_{i}\). It follows that \(\Sigma > 0\) (Table~\ref{tab:mesochronic-classes}) implies that both two eigenvalues of \(\jac \psi_{T}\) are outside of the unit circle \emph{and} that two \ftle{} are positive, when \(\jac \psi_{T}\) is normal.

Unfortunately, \(\jac \psi_{T}\) is not generally normal for any finite \(T\),
i.e., its eigenvectors are not orthonormal. However,
Ref.~\cite{Goldhirsch1987} shows, although non-rigorously, that for
smooth ergodic systems which have distinct Lyapunov exponents both real parts
of eigenvalues and singular values of \(\jac\psi_{T}\) can be written as
\(e^{\sigma_{i}T}\) as \(T \to \infty\). As the sign of \(\Sigma\) indicates
the number of eigenvalues of \(\jac\psi_{T}\) outside the unit circle, which, in
turn, is determined by real parts of logarithms of those eigenvalues, we
conclude that, when \(T \to \infty\), the sign of \(\Sigma\) indicates whether
two or one Lyapunov exponents are positive, assuming that the conjecture in
Ref.~\cite{Goldhirsch1987} holds.

\subsection{Recovering the \twod{} mesochronic deformation criterion}\label{sec:degeneracy}

The supplement to Ref.~\cite{Mezic2010} presented a derivation of the
criteria for mesohyperbolicity for \twod{} (planar)
differential equations. Planar differential equations can be trivially
embedded into a \threed{} state space by adding a third state with trivial
(zero) dynamics. We use such an embedding to demonstrate how the \threed{}
mesochronic criteria (Theorem~\ref{thm:mesochronic-classification}) specialize to
the \twod{} criterion.

Let \(g : I \times \R^{2} \to \R^{2}\) be a \(C^{2}\) incompressible (\(\nabla
\cdot g \equiv 0\)) vector field, with mesochronic Jacobian \(\jac
\tilde{g}_{T}(x)\) for \(x \in \R^{2}\), with the spectrum
\(\sigma_{\tilde{g}} = \{\lambda_{1}, \lambda_{2}\}\), and trace and
determinant \(t_{\tilde{g}} = \lambda_{1} + \lambda_{2}\), \(d_{\tilde{g}} =
\lambda_{1}\lambda_{2}\). The incompressibility
implies
\begin{equation}
  t_{\tilde{g}} + T d_{\tilde{g}} \equiv 0,\label{eq:2D-incompressibility}
\end{equation}
with eigenvalues then given by
\begin{equation}
  \lambda_{1,2} = -\frac{T}{2} d_{\tilde{g}} \pm \frac{1}{2} \sqrt{(T^{2} d_{\tilde{g}} - 4) d_{\tilde{g}}}.\label{eq:2D-eigenvalues}
\end{equation}
Ref.~\cite{Mezic2010} studied only \(d_{\tilde{g}} \not = 0\), noting
that \(d_{\tilde{g}} = 0\) results in \(\lambda_{1} = \lambda_{2} = 0\).  The
time-\(T\) map of the velocity field \(\dot x = g(x)\) is hyperbolic at
\(x_{0}\) if it preserves two distinct real spatial axes, which is analogous
to the definition in Section~\ref{sec:diffeomorphism}. Consequently,
Definition~\ref{def:mesohyperbolicity} retains its meaning for the \twod{}
case. The discriminant \(\mathcal{D}_{\twod{}} := (T^{2} d_{\tilde{g}} - 4) d_{\tilde{g}}\) (cf. \threed{} discriminant~\eqref{eq:discriminant}) indicates mesohyperbolicity if \(\mathcal{D}_{\twod{}} > 0\) and mesoellipticity otherwise.

We embed the vector field \(g\) in the \threed{} state space by
defining a \threed{} vector field for \(z \in \R^{3}\) as
\begin{equation}
  \begin{aligned}
    f(t,z) &:=
    \left[\begin{smallmatrix}
        g(t,z_{1,2}) \\ 0
      \end{smallmatrix}\right],  \\
    \jac \tilde{f}(z) &=
    \left[\begin{smallmatrix}
        \jac  \tilde{g}(z_{1,2}) & 0 \\
        0 & 0
      \end{smallmatrix}\right],
  \end{aligned}
\end{equation}
where we take \(z_{1,2} := {[z_{1},z_{2}]}^{\rT}\) to be the first two
components of the vector \(z = {[z_{1},z_{2},z_{3}]}^{\rT}\).  Eigenvalues of
the Jacobian of the \threed{} averaged field are \(\sigma_{\tilde{f}} =\allowbreak
\{\lambda_{1},\allowbreak \lambda_{2}, \allowbreak \lambda_{3}\}\), with \(\lambda_{3} \equiv
0\). The spectral quantities \(t_{\tilde{f}}\), \(m_{\tilde{f}}\), and
\(d_{\tilde{f}}\) then reduce to analogous quantities of \(\jac
\tilde{g}_{T}\)
\begin{equation}
  \begin{aligned}
    t_{\tilde{f}} &= \textstyle{\sum_{i=1}^{3}} \lambda_{i} = t_{\tilde{g}},\\
    m_{\tilde{f}} &= \textstyle{\sum_{i=1}^{3} \prod_{k\not =i}} \lambda_{k} = \lambda_{1}\lambda_{2} = d_{\tilde{g}},\\
    d_{\tilde{f}} &= 0.
  \end{aligned}
  \label{eq:2D-spectral-quant}
\end{equation}
As a consequence, the \threed{} incompressibility
condition~\eqref{eq:incompressibility-mh} reduces to the \twod{}
incompressibility condition~\eqref{eq:2D-incompressibility}.

The \threed{} conditions for mesohyperbolicity~\eqref{eq:mesohyperbolicity}
reduce to
\[
d_{\tilde{f}} \not = 0 \quad\text{and}\quad 4 - d_{\tilde{g}}T^{2} \not = 0.
\]
Since \(d_{\tilde{f}} \equiv 0\) as noted above, it follows that the flow is
not \threed{}-mesohyperbolic, as it is expected, as the third coordinate is
always preserved due to the construction of the \threed{} flow.

The indicators \(\Sigma\) and \(\Delta\) evaluate to
\[
\Sigma = 0, \quad \Delta = -m_{\tilde{f}}^{3}T^{6}(m_{\tilde{f}}T - 4) = -d_{\tilde{g}}^{2}T^{6} \mathcal{D}_{\twod{}}.
\]
Therefore, the sign of the \(\Delta\) indicator is determined by the
\twod{} mesohyperbolicity criterion. If \(\Delta > 0\),
\(\mathcal{D}_{\twod{}} < 0\) and the two eigenvalues are non-real and lie on
the unit circle due to incompressibility constraints. If \(\Delta < 0\),
\(\mathcal{D}_{\twod{}} > 0\) and the eigenvalues are real, one is contracting
and the other expanding.

For planar dynamics, \(\mathcal{D}_{\twod{}} = 0\) implies that the dynamics are
either a pure shear, when the geometric multiplicity of the eigenvalue is
\(1\), or trivial, i.e., \(\psi_{T}(x) \equiv x\).

Incompressibility in the planar case is more restrictive than in \threed{},
yielding only two structurally stable cases: mesohyperbolic \(\lambda_{i} \in
\R\) and mesoelliptic \(\lambda_{i} \in \C\), which intersect at the pure
reflection/shear case \(\lambda_{1} = \lambda_{2} = 1\). The derivation of the
mesohyperbolicity criterion for the \twod{} case therefore relied only on
detection of real vs.\ complex eigenvalues, which is the reason why
\(\mathcal{D}_{\twod{}}\) in~\eqref{eq:2D-eigenvalues} is taken as the sole
\twod{} mesohyperbolicity criterion, without resorting to a more complicated
calculation.

\section{Examples}
\label{sec:computation}

\subsection{Linear time-invariant velocity fields}
\label{sec:ex-linear}

 \begin{table}  \renewcommand{\arraystretch}{2} \centering\footnotesize
\begin{subtable}[t]{\linewidth}\centering
\( \psi_{T} =
\left(\begin{smallmatrix} \cos \omega T & -\sin \omega T & \phantom{0}\\ \sin
\omega T &\phantom{-}\cos \omega T & \phantom{0} \\ \phantom{0} & \phantom{0}
& 1
    \end{smallmatrix}\right)\exp \left(\begin{smallmatrix} \lambda_{1}T &
\phantom{\lambda_{2}} & \phantom{-\lambda_{1} - \lambda_{2}} \\
\phantom{\lambda_{1}} & \lambda_{2}T & \phantom{-\lambda_{1} - \lambda_{2}}\\
\phantom{\lambda_{1}} & \phantom{\lambda_{2}} & -(\lambda_{1} + \lambda_{2})T
    \end{smallmatrix}\right) \)
\caption{Form of the considered LTI systems}
\end{subtable}\\
\begin{subtable}[t]{\linewidth}\centering
    \begin{tabular}{ m{20pt} c c c } \(\omega\) & \(\lambda_{1},\ \lambda_{2}\) &
\(\Sigma(T)\) & \(\Delta(T)\) \\

 \colrule \(0\) & \(\lambda_{1}\cdot\lambda_{2} > 0\) &
\(\begin{aligned} -&\tanh\lambda_1 T/2\\ \times&\tanh \lambda_2 T/2\\
\times&\tanh(\lambda_1+\lambda_{2})T/2 \end{aligned} \) &
\(\begin{aligned}[c]-64 &\sinh^{2}(\lambda_{1}-\lambda_{2})T/2 \times \\
 \phantom{\times} &\sinh^{2}(2\lambda_{1}+\lambda_{2})T/2 \times \\
 \phantom{\times} &\sinh^{2}(\lambda_{1}+2\lambda_{2})T/2\end{aligned}\) \\

 \colrule \(0\) & \(\lambda = \lambda_{1} = \lambda_{2} \not= 0\) &
\(\tanh\lambda T - 2 \tanh \lambda T/2\) &\(0\) \\

 \colrule \(0\) & \(\lambda_{1} = \lambda \not = \lambda_{2} = 0\) & \(0\)&
\(\begin{aligned}-64 & \sinh^{4}( \lambda T/2)\times\\\phantom{\times}&\sinh^{2}(\lambda T)\end{aligned}\) \\

 \colrule \(\not = 0\) & \(\lambda = \lambda_{1} = \lambda_{2} \not = 0\) &
\(\frac{\tanh(\lambda T) (\cos(\omega T) -\cosh(\lambda T)) }{\cosh(\lambda T)
+ \cos(\omega T)}\) & \(
\begin{aligned}16 & \sin^{2}(\omega T)\times \\ \phantom{\times} & {[\cos (\omega T) - \cosh(3
\lambda T)]}^{2}
\end{aligned}\) \\
 \colrule \(\not = 0\) & \(\lambda_{1} = \lambda_{2} = 0\) & \(0\) & \(64
\sin^{4}( \omega T/2)\sin^{2}(\omega T)\)
    \end{tabular}
\caption{Values of \(\Delta\) and \(\Sigma\)}
\end{subtable}\\
\begin{subtable}[t]{\linewidth}\centering
    \begin{tabular}{ c c c c c } \(\omega\) & \(\lambda_{1},\ \lambda_{2}\) &
\(\sgn\Sigma(T)\) & \(\sgn\Delta(T)\) & Mesochronic class \\ \colrule
 \(0\) & \(\lambda_{1}\cdot\lambda_{2} > 0\)& \(
-\sgn\lambda_{1}\) & \(-\) & \parbox[m]{3.5cm}{ \([- + +]\) saddle
(\(\lambda_{1} > 0\)) \\ \([- - +]\) saddle (\(\lambda_{1} < 0\)) } \\
\colrule \(0\) & \(\lambda = \lambda_{1} = \lambda_{2} \not= 0\) &
\(-\sgn \lambda\) & \(0\) & \parbox[m]{3.5cm}{ \([- + +]\) saddle
(\(\lambda_{1} > 0\)) \\ \([- - +]\) saddle (\(\lambda_{1} < 0\)) } \\
\colrule \(0\) & \(\lambda_{1} \not = \lambda_{2} = 0\) & \(0\) & \(-\)
& \parbox[m]{3.5cm}{neutral
saddle\\ \twod{} mesohyperbolic} \\ \colrule
\(\not = 0\) & \(\lambda = \lambda_{1} = \lambda_{2} \not = 0\) & \(-\sgn\lambda\) & \(+\)
& \parbox[m]{3.5cm}{ \([- + +]\) helix (\(\lambda_{1} > 0\)) \\ \([- - +]\)
helix (\(\lambda_{1} < 0\)) } \\ \colrule
\(\not = 0\) &
\(\lambda_{1} = \lambda_{2} = 0\) & \(0\) & \(+\) & \parbox[m]{3.5cm}{neutral
helix\\ \twod{} mesoelliptic} \end{tabular}
\caption{Mesochronic classes}
\end{subtable}\caption{\label{tab:LTI}Mesochronic classes for linear time-invariant (LTI) systems of
the form (a). Values of signs in (c) hold generically, except on a non-dense
set of periods \(T\) where they are zero, as determined by values \(\omega\)
and \(\lambda\). }
  \end{table}

In Table~\ref{tab:LTI} we compute explicitly the values of the indicators \(\Sigma\) and \(\Delta\) for a simple class of linear time-invariant systems whose \(\jac \psi_{T}\) is constant, and given by the polar decomposition:
\begin{equation}
  \label{eq:lti-flow-map}
    \jac \psi_{T} \equiv \left(\begin{smallmatrix}
        \cos \omega T & -\sin \omega T & \phantom{e^0}\\
        \sin \omega T &\phantom{-}\cos \omega T & \phantom{e^0} \\
        \phantom{e^0} & \phantom{0} & 1
      \end{smallmatrix} \right)
    \left(\begin{smallmatrix}
        e^{\lambda_{1}T} & \phantom{e^{\lambda_{1}T}} & \phantom{e^{\lambda_{1}T}} \\
        \phantom{e^{\lambda_{1}T}} & e^{\lambda_{2}T} & \phantom{e^{\lambda_{1}T}}\\
        \phantom{e^{\lambda_{1}T}} & \phantom{e^{\lambda_{1}T}} & e^{\lambda_{3}T}
      \end{smallmatrix}\right).
\end{equation}
In this parametrization, we can independently manipulate rates of strain
\(\lambda_{1,2,3}\) as well as the rate of rotation \(\omega\) present in the
system. As two of the rates have the same sign, unless one of the rates is
zero, we choose to order the directions by setting \(\sgn \lambda_{1} = \sgn
\lambda_{2}\), which means that the third direction is of the opposite sign
\(\lambda_{3} = -\lambda_{1}-\lambda_{2}\), due to incompressibility. While
these systems do not represent a broad range of dynamical systems, we have a
good understanding of their dynamics so it is instructive to see how their
properties are reflected in the mesochronic classification.

First, when all rates \(\lambda_{i}\) are non-zero, all points are mesohyperbolic as \(\Sigma\) is constant and non-zero; presence or absence of rotation determines whether a point is a (mesohyperbolic) saddle (\(\omega=0\)) or a helix (\(\omega \not = 0\)). The signature \(\bm{[- - +]}\) or \(\bm{[- + +]}\) of the saddle is then determined by the sign of the pair \(\lambda_{1,2}\). When the two rates match exactly, \(\lambda_{1} = \lambda_{2}\), it implies \(\Delta = 0\). Since the quantities \(\Delta\) and \(\Sigma\) are functions of the spectrum, they alone are not enough to detect whether associated directions align (shear) or not (saddle). In the case of the systems derived, we know that those two directions correspond to independent eigenvectors, which means that the point is a saddle.

If one of the rates is equal to \(0\), it always implies that \(\Sigma = 0\),
which is classified as one of the neutral \threed{} mesochronic classes. In
that case, \(\Delta\) corresponds to the \twod{} mesohyperbolicity indicator
\(\mathcal{D}_{\twod{}}\), as described in
Section~\ref{sec:degeneracy}. Again, the presence \(\omega\not=0\) or the
absence of rotation \(\omega=0\) is reflected on the sign of \(\Delta\), where
\(\Delta > 0\) corresponds to the former, and \(\Delta < 0\) to the latter
case.

The magnitudes of \(\Delta\) and \(\Sigma\) grow exponentially in most of the
cases; however, notice that in the presence of rotation, a periodic function
multiplies the exponentially-growing magnitude, resulting in \(\Delta(T) = 0\)
periodically. This means that there is a potential for resonance, i.e., if
\(T\) is a multiple of the period of oscillation, dynamics momentarily appears
to be on the boundary behavior between \(\bm{[- + +]}\) and \(\bm{[- - +]}\)
saddle mesohyperbolicity, or even a pure reflection when \(\Delta = \Sigma =
0\). This choice of \(T\) is, of course, highly unlikely without a prior
knowledge of \(\omega\).

In summary, analysis of simple linear systems shows that mesochronic classes correctly reflect our intuition about presence of stretching and rotation in linear, time-invariant flows.

\subsection{\abc{} Flow}
\label{sec:abc-flow}

The \abc{} (ABC) flow~\cite{Dombre1986} is a kinematic model of an incompressible fluid
flow evolving in a three-dimensional periodic domain. Even though the system
of ODEs specifying the ABC flow is simple, it exhibits a variety of different
behaviors and has been used as a test-bed for various computational
algorithms~\cite{Haller2001, Froyland2009, Budisic2012b, Brunton2010}.

The ABC flow evolves on a 3-torus in periodized state variables \((x,y,z) \in {[0,2\pi]}^{3} \cong \T^{3}\). Dynamics depend on parameters \(A,B,C,D \in \R\) and are specified by differential equations
\begin{equation}    \label{eq:abcmodel}
  \begin{aligned}
    \dot x &= A(t) \sin  z + \phantom{(t)}C \cos  y \\
    \dot y &= \phantom{(t)}B \sin  x + A(t) \cos  z  \\
    \dot z &= \phantom{(t)}C \sin  y + \phantom{(t)}B \cos  x,
  \end{aligned}
\end{equation}
where the time-varying parameter \(A(t)\) is given by
\begin{equation*}
  A(t) = A + D\, t \sin t.
\end{equation*}
If \(D = 0\), the equations are autonomous; if, additionally, any other parameter is \(0\), the system is integrable.~\cite{Dombre1986}

The linearization along a solution \(p(t) = (x(t),y(t),z(t))\)
of~\eqref{eq:abcmodel} is given by
\begin{equation}\label{eq:abclinear}
  \dot\xi=\underbrace{
    \left[\begin{smallmatrix}
        0 & -C\sin y(t) & A(t)\cos  z(t) \\
        B\cos x(t) & 0 & -A(t)\sin z(t) \\
        -B\sin  x(t) & C\cos y(t) & 0
      \end{smallmatrix}\right]}_{\jac f(x,y,z)}
  \xi.
\end{equation}
The determinant and the sum of minors are given by the expressions
\begin{equation}
  \begin{aligned}
    d_{f} = \det \jac f &= A(t)BC(
    \begin{aligned}[t]
      & \cos x \cos y \cos z \\ & - \sin x \sin y \sin z)
    \end{aligned} \\
    m_{f} = \tr \cof \jac  f &=
    \begin{aligned}[t]
      & A(t)B \sin x \cos z  \\ + & BC \sin y \cos x \\ + & A(t)C \cos y \sin z.
    \end{aligned}
  \end{aligned} \label{eq:abc-instant}
\end{equation}
These expressions can be used to evaluate the OWC criterion for the instantaneous hyperbolicity according to Theorem~\ref{thm:3D-okubo-weiss}.

\subsubsection{Integrable case}
We briefly discuss the case \(A=D=0 \Rightarrow A(t) \equiv 0\)
analytically. From~\eqref{eq:abcmodel}, we derive that
\begin{equation*}
    \ddot{z} = C \dot y \cos y - B \dot x \sin x = BC \sin x \cos y - BC \sin x \cos y \equiv 0.
\end{equation*}
Thus, for the initial condition \(p(0) = (x_{0},y_{0},z_{0})\), \(z(t)=z_0+(C\sin
y_{0}+ B\cos x_0) t\) for all \(t\in [0,T]\).

Furthermore, if we write \(\sigma := x + y\), \(\delta := x - y\), then the
ABC flow can be rewritten as a decoupled second order system
\begin{equation}
  \label{eq:decoupled-abc}
  \begin{aligned}
    \ddot \sigma &= \phantom{-}BC \cos \sigma \\
    \ddot \delta &= -BC \cos \delta \\
    \ddot z &= 0,
  \end{aligned}
\end{equation}
which is a direct product of two pendulum-like equations. Solutions of the
first two components can be written in terms of integrals of Jacobi elliptic
functions, and it follows that the system is integrable. A similar argument
follows in case when either \(A\), \(B\), or \(C\) are zero, in addition to
\(D = 0\).

\begin{lemma} All points \((x,y,z)\) in the state space of the
  system~\eqref{eq:abcmodel} with \(A(t)\equiv 0\) are non-mesohyperbolic over
  any interval \([t_{0},t_{0}+T]\).
\end{lemma}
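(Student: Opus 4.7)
The strategy is to exploit the observation, already established in the preceding paragraph, that when $A(t)\equiv 0$ the vector field $f=(C\cos y,\ B\sin x,\ C\sin y + B\cos x)^{\top}$ is independent of $z$. This decoupling immediately forces a column of the mesochronic Jacobian to vanish, which by Theorem~\ref{thm:mesochronic-classification} rules out mesohyperbolicity.

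The plan is as follows. First I would write down the reduced ODE $\dot x = C\cos y$, $\dot y = B\sin x$, $\dot z = C\sin y + B\cos x$, and note that all three right-hand sides are independent of $z$. It follows that for any initial value $(x_0,y_0,z_0)$, the components $x(\tau),y(\tau)$ of the trajectory depend only on $(x_0,y_0)$, and (as already shown) $z(\tau)=z_0+(C\sin y_0+B\cos x_0)\tau$ depends on $z_0$ only through the additive constant $z_0$.

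Next I would substitute these trajectories into the definition of the mesochronic velocity,
\begin{equation*}
\tilde{f}_{T}(x_0,y_0,z_0) = \frac{1}{T}\int_{t_{0}}^{t_{0}+T} f\bigl(\tau,\phi(\tau,t_{0},x_0,y_0,z_0)\bigr)\,d\tau,
\end{equation*}
and observe component-by-component that $\tilde{f}_{T}$ does not depend on $z_0$: the first two components integrate $C\cos y(\tau)$ and $B\sin x(\tau)$, which depend only on $(x_0,y_0)$, while the third integrates $\dot z$ and evaluates to $C\sin y_0+B\cos x_0$, again a function of $(x_0,y_0)$ alone. Consequently the mesochronic Jacobian has the block form
\begin{equation*}
\jac \tilde{f}_{T}(x_0,y_0,z_0) =
\begin{pmatrix} \ast & \ast & 0 \\ \ast & \ast & 0 \\ \ast & \ast & 0 \end{pmatrix},
\end{equation*}
so its last column vanishes identically and $d_{\tilde{f}}=\det\jac\tilde{f}_{T}\equiv 0$.

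Finally I would invoke Theorem~\ref{thm:mesochronic-classification}: mesohyperbolicity requires $d_{\tilde{f}}\neq 0$ (together with $8-2m_{\tilde{f}}T^{2}-3d_{\tilde{f}}T^{3}\neq 0$), and the first of these fails at every point of the state space for every $T>0$. Hence no point of the integrable ABC flow is mesohyperbolic on any interval $[t_{0},t_{0}+T]$. There is no real obstacle here; the only point worth handling carefully is verifying that the $z$-independence of $f$ genuinely propagates to $z$-independence of the full solution map (so that the partial derivatives with respect to $z_0$ of $\int f(\tau,\phi(\tau,\cdot))\,d\tau$ vanish), which is immediate from uniqueness of solutions applied to trajectories starting at $(x_0,y_0,z_0)$ and $(x_0,y_0,0)$.
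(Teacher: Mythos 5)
Your proposal is correct, but it follows a genuinely different route from the paper's. The paper argues at the level of the variational equation~\eqref{eq:abclinear}: with \(A(t)\equiv 0\) the matrix \(\jac f\) along any solution has vanishing third column, so the fundamental matrix (i.e., \(\jac\psi_{T}\)) is block lower-triangular with the value \(1\) in the \((3,3)\) slot; hence \(1\) lies in the spectrum of \(\jac\psi_{T}\) and non-mesohyperbolicity follows directly from Definition~\ref{def:mesohyperbolicity}. You instead work with the nonlinear flow: \(z\)-independence of the vector field makes \(\tilde f_{T}\) independent of \(z_{0}\), so the third column of \(\jac\tilde f_{T}\) vanishes, \(d_{\tilde f}=0\), and Theorem~\ref{thm:mesochronic-classification} (whose mesohyperbolicity condition requires \(d_{\tilde f}\neq 0\)) finishes the job, since \(d_{\tilde f}=0\) forces \(\Sigma=0\) or \(\Sigma\) infinite, both non-mesohyperbolic. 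The two arguments rest on the same decoupling of \(z\), and your vanishing column is in fact equivalent to the paper's spectral statement: by \(\jac\psi_{T}=\id+T\jac\tilde f_{T}\) in~\eqref{eq:timeT-mh-j-link}, the third column of \(\jac\psi_{T}\) is \({(0,0,1)}^{\top}\), which exhibits the eigenvalue \(1\) together with its eigenvector. What your version buys: it sidesteps the fundamental-matrix bookkeeping (the paper's matrix is really block lower-triangular rather than block-diagonal, though its conclusion is unaffected), it plugs directly into the \(\Sigma\), \(\Delta\) machinery used throughout the paper, and the \(z_{0}\)-independence of \(\jac\tilde f_{T}\) immediately yields the paper's subsequent Remark that the mesochronic class does not depend on the \(z\)-coordinate. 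One cosmetic slip: for general \(t_{0}\) one has \(z(\tau)=z_{0}+(C\sin y_{0}+B\cos x_{0})(\tau-t_{0})\), but this plays no role since only the \(z_{0}\)-independence is used.
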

\begin{proof}
  When \(A(t) \equiv 0\), the matrix defining the linear system of
  equations~\eqref{eq:abclinear} is block-diagonal, with blocks \(
  \left[\begin{smallmatrix}
      0 & -C\sin y(t)  \\
      B\cos x(t) & 0
    \end{smallmatrix}\right]\) and \(0\) on the diagonal. The fundamental
  matrix of the system is, therefore, also block diagonal, with value \(1\) on
  the diagonal corresponding to the exponential of the block \(0\)
  in~\eqref{eq:abclinear}. Since \(1\) is then in the spectrum of the time-\(T\)
  map Jacobian, \((x,y,z)\) is non-mesohyperbolic for all \(T\).
\end{proof}

\begin{remark}
  If \(A(t) \equiv 0\), the mesochronic class of a point \((x,y,z)\) does not
  depend on the value of \(z\) since the Jacobian matrix
  in~\eqref{eq:abclinear} does not depend on the \(z\)-coordinate of the
  solution around which we linearized.
\end{remark}

\subsubsection{Steady non-integrable case}

The structure of the invariant sets in the state space of the ABC flow for
parameters \(A=\sqrt{3},\,B=\sqrt{2},\,C=1,\,D=0\) is well studied
analytically~\cite{Dombre1986} and numerically~\cite{Budisic2012b}. The state
space contains six interwoven vortices with the space between them filled by
chaotic dynamics (Figure~\ref{fig:abcflow-statespace}). We place a grid of \(400
\times 400\) initial conditions on the \((x,y)\) face of the periodicity cell
and calculate \(t_{\tilde{f}},\,m_{\tilde{f}},\,d_{\tilde{f}}\) for time
intervals of different lengths. Other details about numerics are given in Appendix~\ref{sec:implementation}.
\begin{figure}[!ht]
  \centering
  \begin{subfigure}[t]{0.45\linewidth}\centering
    \includegraphics{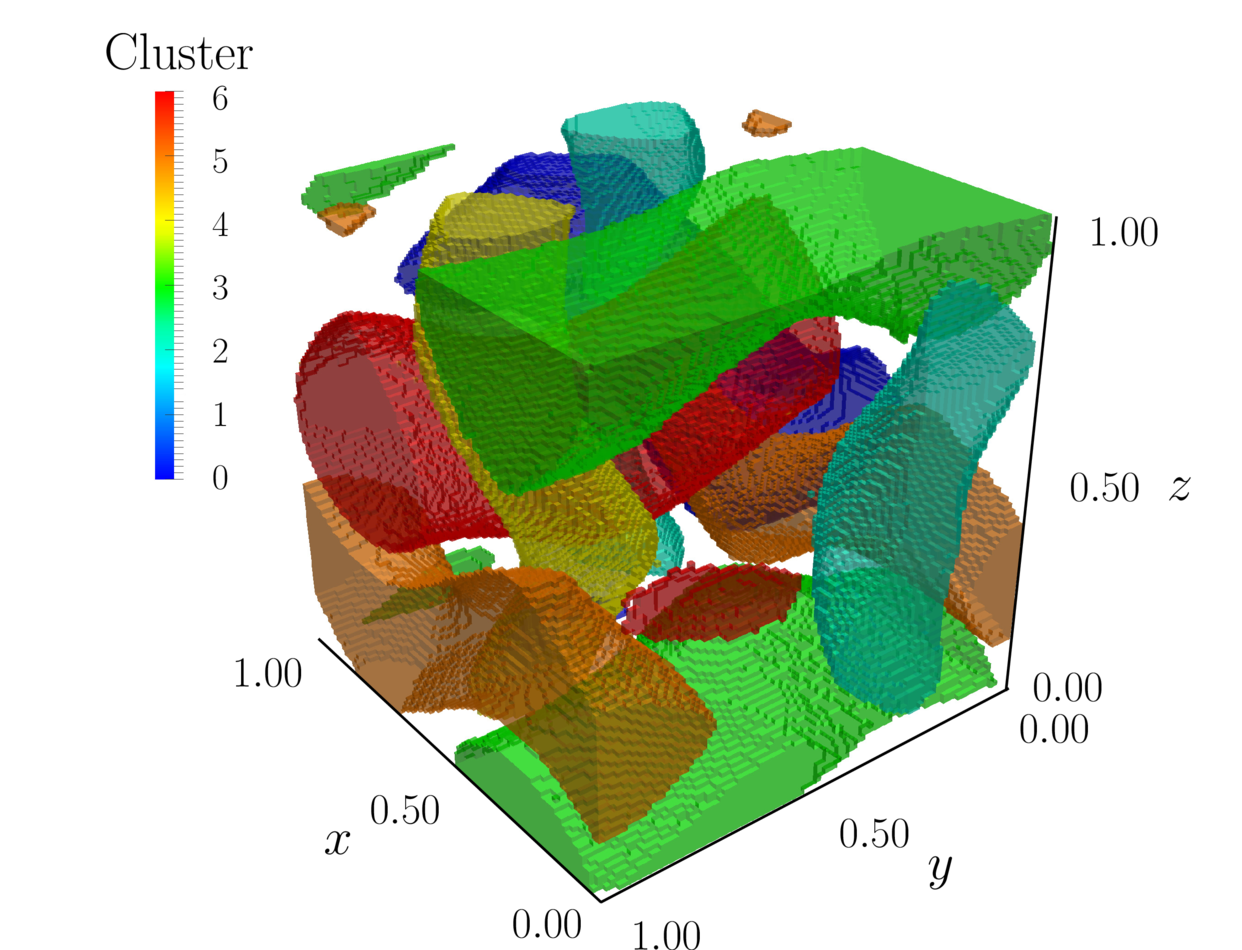}
    \caption{Isometric view (periodicity cube rescaled to unit sides).}
  \end{subfigure}\quad
  \begin{subfigure}[t]{0.45\linewidth}\centering
    \includegraphics{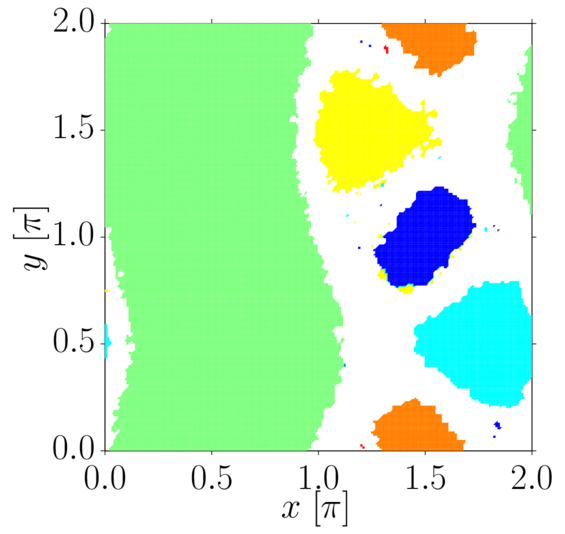}
    \caption{Slice through \(z = 0\).}\label{fig:vortices-slice}
  \end{subfigure}
  \caption{Invariant sets in the state space of the \abc{} flow at
    \(A=\sqrt{3},\,B=\sqrt{2},\,C=1\). Regular vortices are colored, the space
    between them is the chaotic zone. }\label{fig:abcflow-statespace}
\end{figure}

To give a sense of time scales involved in the system,
Figure~\ref{fig:abc-steady} shows several trajectories (pathlines) within a
single vortex, simulated for various durations \(T\). Trajectories inside
vortices take approximately \(T = 3\) to cross one periodicity cell. The two other panels in Figure~\ref{fig:abc-steady} show that the vortex rotates around its axis while the inner layers move at slightly faster speeds than its outer layers.

\begin{figure}[!ht]
  \centering
  \begin{subfigure}[t]{0.35\linewidth}\centering
    \includegraphics[height=40mm]{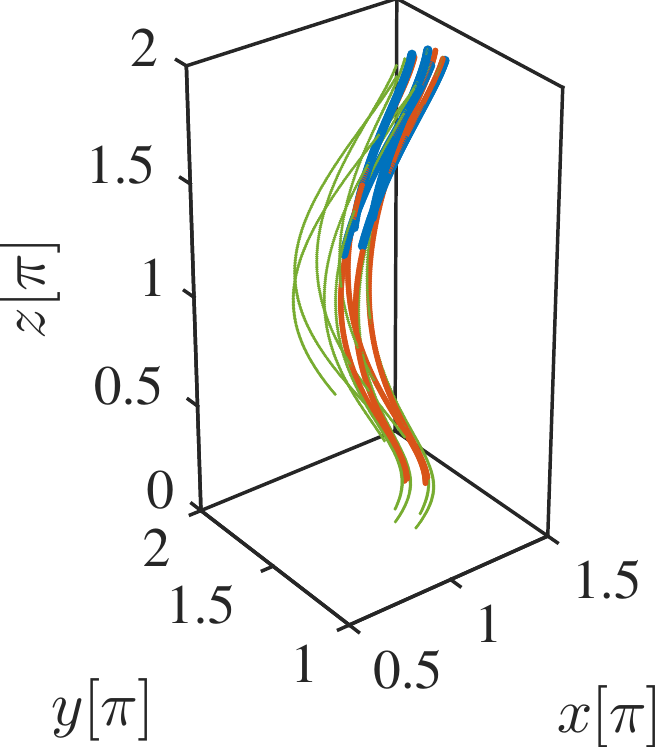}
    \caption{Pathlines for \(T = 1,\,2.5,\,5\) in a periodized unit cell}
  \end{subfigure}\quad
  \begin{subfigure}[t]{0.35\linewidth}\centering
    \includegraphics[height=40mm]{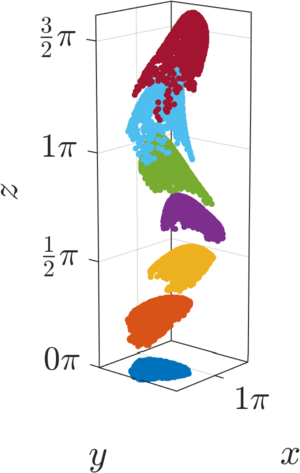}
    \caption{Advection of points for \(T = 0, 0.75, 1.50,\dots, 4.50\) (color indicates time).}\label{fig:abc-steady-cross}
  \end{subfigure}\\
  \begin{subfigure}[t]{0.7\linewidth}\centering
    \includegraphics[height=40mm]{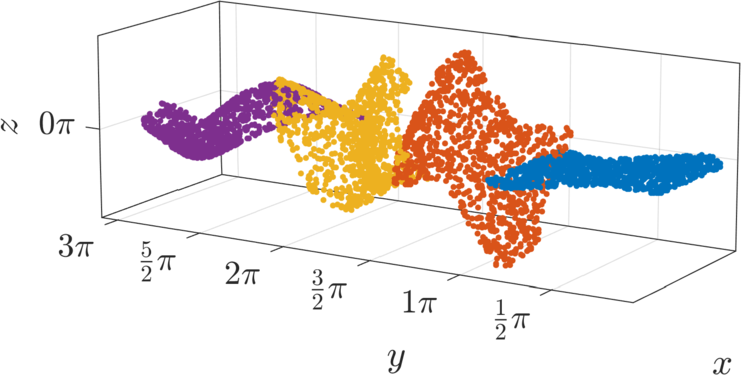}
    \caption{Advection of points for \(T = 0,  1.5,3.0, 4.5\) (color indicates time).}\label{fig:abc-steady-axial}
  \end{subfigure}\caption{\label{fig:abc-steady}Pathlines and advection patterns for the steady ABC
    flow~\protect\eqref{eq:abcmodel} with
    \(A=\sqrt{3},\,B=\sqrt{2},\,C=1,\,D=0\). Initial conditions clouds were
    sampled at \(t=0\) uniformly from the \(z=0\) section of the two vortices
    in Fig.~\protect\ref{fig:vortices-slice}; panel \protect\subref{fig:abc-steady-cross} samples
    the vortex centered at \(x = 7\pi/8,\,y=\pi/2\); panel
    \protect\subref{fig:abc-steady-axial} samples the vortex with the central axis
    along the line \(x = \pi/2\).}
\end{figure}

To detect non-mesohyperbolic behavior we need to numerically evaluate
conditions~\eqref{eq:mesohyperbolicity-timeT}
\begin{equation}\label{eq:hyp-criteria}
  t_{\psi} - t_{\psi^{-1}}  = 0, \quad\text{or}\quad t_{\psi} + t_{\psi^{-1}}  + 2 = 0,
\end{equation}
written here using the traces of the flow map and the relation \(m_{\psi} =
t_{\psi^{-1}}\) to estimate their growth more clearly. These conditions are
difficult to reliably compute in the face of numerical errors that will
almost-certainly result in non-zero quantities.

Numerically, we determine when the criteria are satisfied using a numerical
tolerance determined by estimating the growth rate of \(t_{\psi}\) and
\(t_{\psi^{-1}}\) with increasing length \(T\) of time
intervals. We infer the behavior of these quantities in linear
  time-invariant systems in order to establish the correct baseline --- the
  criterion has to reflect correctly our knowledge of linear systems if we are
  to even consider using it for classification of nonlinear systems. In what
  follows, we derive an empirical estimate of mesohyperbolicity, using a
  quantity termed \emph{numerical hyperbolicity}. Our considerations will
  further rely on \(T \to \infty\) arguments as our intent is to estimate the
  behavior of the flow beyond the time interval in which we are sampling it.

For a linear, time-invariant system, eigenvalues of the time-\(T\) map are
given either by \(\pm e^{\lambda_{1}T}\), \(\pm e^{\lambda_{2}T}\), \(\pm
e^{-(\lambda_{1}+\lambda_{2})T}\), or by \(\pm e^{\lambda T \pm i\omega T}\),
\(\pm e^{-\lambda T}\).

In both cases, as \(T \to \infty\),
\begin{equation}
    \abs{ t_{\psi} - t_{\psi^{-1}} } \sim e^{ \max_{i} \lambda_{i} T },\qquad
    \abs{ t_{\psi} + t_{\psi^{-1}} + 2} \sim e^{ \max_{i} \lambda_{i} T}
\end{equation}
To account for exponential growth, we set the numerical tolerance of
mesohyperbolicity based on logarithms of expressions~\eqref{eq:hyp-criteria}
\begin{equation}\label{eq:num-hyp}
    h_{1} := \frac{1}{T}\log \abs{ t_{\psi} - t_{\psi^{-1}} }, \qquad
    h_{2} := \frac{1}{T}\log \abs{ t_{\psi} + t_{\psi^{-1}} + 2}.
\end{equation}
(In all expressions we omit dependence on state variables and time interval
for shortness).

Non-zero values of either \(h_{1}\) or \(h_{2}\) are signs of
  mesohyperbolic behavior; conversely, we need either one of them to be small
  to declare non-mesohyperbolicity. In nonlinear flows, we do not expect that
  \(h_{1,2}\) will be entirely independent of the value of
  \(T\). Nevertheless, in ergodic regions,~\cite{Young1998} we expect
  convergence in mean as \(T\to\infty\).

  Even the rate of convergence to the mean is not uniform: in regular ergodic
  regions, e.g., vortices of the ABC flow, the expected decay is
  \(\mathcal{O}(T^{-1})\); in strongly mixing regions, conjectured to be
  embedded within the chaotic region, the expected decay is
  \(\mathcal{O}(T^{-1/2})\), i.e., similar to the Central Limit Theorem for
  i.i.d.~random variables. Initial conditions that are neither regular nor
  strongly mixing may potentially have an even slower decay of
  variance~\cite{Petersen1989}, \(T^{-\alpha}\) for any \(0 < \alpha <
  1/2\). Values of \(h\) at those points would then still grow as \(h \sim
  T^{1/2 - \alpha}\). The volume of weakly mixing zones is small in systems
  containing \kam{}-type dynamics,~\cite{Budisic2012b,Perry1994,Treschev1998}
  and therefore we do not expect those values to occur as major features in
  the histogram of \(h\).

A good quantitative criterion for deciding whether a point is
mesohyperbolic should estimate whether the smaller of the two \(h_{1,2}\)
\begin{equation}
\max_{i}\abs{\lambda_{i}} \sim \min\{h_{1},h_{2}\} \to 0\label{eq:decay-of-hyperbolicity}
\end{equation}
is ``sufficiently'' close to zero. Under a conjecture that some variant of the Central
Limit Theorem (CLT) holds for the estimate of \(\max_{i}\abs{\lambda_{i}}\) we
can test whether the deviation of our estimator \(\min\{h_{1},h_{2}\}\) from
the hypothesis of non-mesohyperbolicity \(\max_{i}\abs{\lambda_{i}} = 0\) is
normally distributed, i.e., whether \emph{numerical mesohyperbolicity} \(h\), defined by
\begin{equation}
\begin{aligned}
  h &= \abs{\min\{h_{1},h_{2}\}} \sqrt{T},\ \text{where, as before}\\
h_{1} &:= \frac{1}{T} \log\abs{d_{\tilde{f}}T^{3}}, \qquad
h_{2} := \frac{1}{T} \log\abs{8-2m_{\tilde{f}}T^{2}-3 d_{\tilde{f}}T^{3}}
\end{aligned}\label{eq:min-num-hyp}
\end{equation}
is small, \(h < \varepsilon\), for some (small) constant \(\varepsilon\). When this is indeed the case, then we are reasonably confident that with longer \(T\) the estimated eigenvalues of the flow map would indeed have a unit modulus and we empirically declare that the point is not mesohyperbolic, as classified by  Theorem~\ref{thm:mesochronic-classification}.

Is there a fixed value of \(\varepsilon\) that could help us decide whether the hypothesis of
mesohyperbolicity for a point holds? To answer this question standard CLT results from probability theory, e.g.,
Lindeberg--Feller~\cite[Thm.~3.4.5]{Durrett2010}, would employ higher order
moments of the distribution of samples of the random processes
\(h_{1,2}\). Such a constant \(\varepsilon\) would then turn our empirical
criterion into a statistical test, where \(h\) would be the z-score for
testing the hypothesis of whether a point under consideration is mesohyperbolic or
not. Unfortunately, we cannot assume to know how the higher moments behave,
despite existing work on CLTs in the context of dynamical
systems~\cite{Gouezel2014, Gouezel2004, Rey-Bellet2008, Young1998,
  Young1990}.

\begin{figure}[!ht]
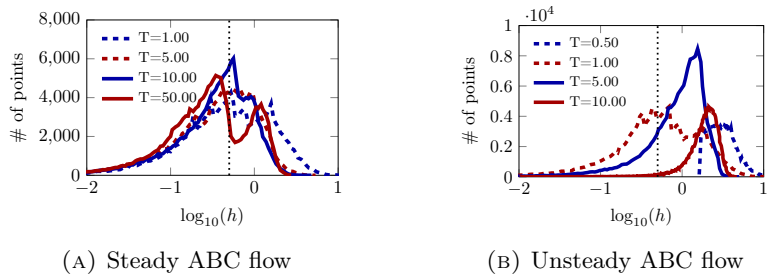

  \centering
  \begin{subfigure}[t]{0.45\linewidth}\centering
    \includegraphics[height=30mm]{img/tikz/ABC-steady-T-50_0-hyperdist.tikz}
    \caption{Steady ABC flow}\label{fig:dist-num-hyp-steady}
  \end{subfigure}
  \begin{subfigure}[t]{0.45\linewidth}\centering
    \includegraphics[height=30mm]{img/tikz/ABC-unsteady-T-10_0-hyperdist.tikz}
    \caption{Unsteady ABC flow}\label{fig:dist-num-hyp-unsteady}
  \end{subfigure}
  \caption{Distribution of numerical mesohyperbolicity (\protect\ref{eq:min-num-hyp})
    in steady and unsteady \abc{} flows for different lengths \(T\) of integration
    intervals, computed on a uniform grid of \(400\times 400\)
    points. Mesohyperbolicity was declared for \(h > 10^{-0.3} = 0.5\) (dotted vertical line). }\label{fig:dist-num-hyp}
\end{figure}

In lieu of rigorous results, we choose to proceed empirically and set the
cutoff value \(\varepsilon\) based on distributions of the numerical
mesohyperbolicity \(h\).  Figure~\ref{fig:dist-num-hyp-steady} shows histograms of
numerical mesohyperbolicity \(h\) for a range of values of \(T\), conforming
well to expectations.  As \(T\) increases, the distribution of \(h\) changes
from a fairly flat distribution (\(T=1\)) to a bimodal distribution with
well-separated peaks. Figure~\ref{fig:num-hyp} shows that each mode of
distribution of \(h\) corresponds, respectively, to vortices and to the large
chaotic region between them as \(T\to\infty\). Based on these results, we
declare numerical non-mesohyperbolicity using the cutoff parameter value
\begin{equation}
  \label{eq:num-cutoff}
  h < \varepsilon,\ \text{with}\ \varepsilon = 10^{-0.3}.
\end{equation}

\begin{figure}[!ht]
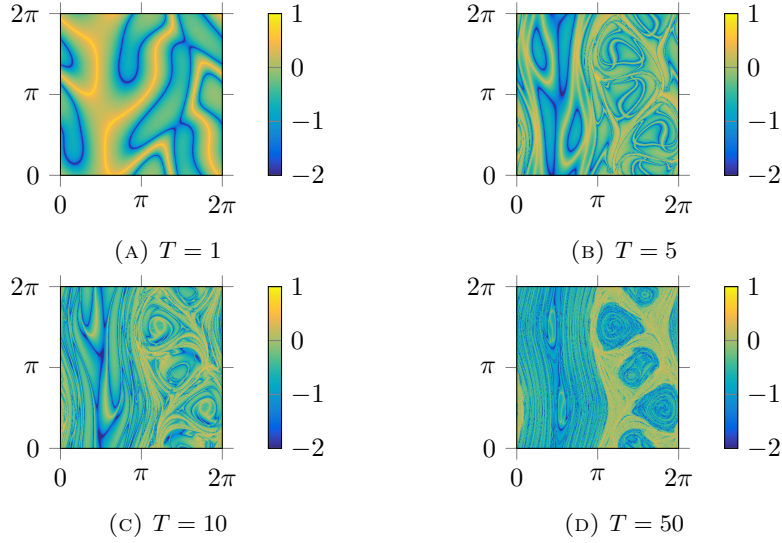

  \centering
  \begin{subfigure}[t]{0.45\linewidth}\centering
    \includegraphics[height=30mm]{img/tikz/ABC-steady-T-01_0-hyper.tikz}
    \caption{\(T=1\)}
  \end{subfigure}\quad
  \begin{subfigure}[t]{0.45\linewidth}\centering
    \includegraphics[height=30mm]{img/tikz/ABC-steady-T-05_0-hyper.tikz}
    \caption{\(T=5\)}
  \end{subfigure}\\
  \begin{subfigure}[t]{0.45\linewidth}\centering
    \includegraphics[height=30mm]{img/tikz/ABC-steady-T-10_0-hyper.tikz}
    \caption{\(T=10\)}
  \end{subfigure}\quad
  \begin{subfigure}[t]{0.45\linewidth}\centering
    \includegraphics[height=30mm]{img/tikz/ABC-steady-T-50_0-hyper.tikz}
    \caption{\(T=50\)}
  \end{subfigure}
  \caption{Spatial distribution of numerical
    mesohyperbolicity  on a plane in the state
    space of the steady \abc{} flow. Color is \(\log_{10} h\) with \(h\) defined as in~\protect\eqref{eq:min-num-hyp}.}\label{fig:num-hyp}
\end{figure}

The \owc{} criterion requires \(d_{f} \not =0 \) for non-hyperbolic sets. For
the given parameters at \(z = 0\),
\begin{align*}
  d_{f}(x,y,z) &= ABC( \cos x \cos y \cos z - \sin x \sin y \sin z)
                                                \\&= \sqrt{6}\cos x \cos y.
\end{align*}
Therefore, a solution \(\phi(\cdot,0,p)\) with \(p = (x,y,0)\) is
instantaneously hyperbolic everywhere except along the lines
\begin{equation}
  x = \frac{\pi}{2},\,
  x = \frac{3\pi}{2},\,
  y = \frac{\pi}{2},\,
  y = \frac{3\pi}{2}.\label{eq:abc-ow-nonhyp}
\end{equation}

\begin{figure}[!ht]
  \centering
    \includegraphics[height=40mm]{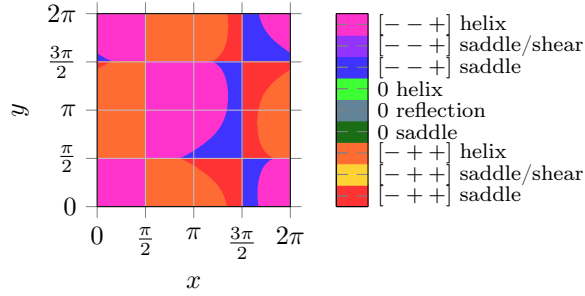}\\
  \caption{Mesochronic classes in the \(x-y\) plane of \(z=0\) slice of the \abc{} flow
    state space for \(T = 10^{-2}\). As \(T \approx 0\), the mesochronic partition is virtually identical to the \owc{} partition.}\label{fig:mesohyp-small-T}
\end{figure}
The mesochronic partition for \(T \approx 0\) is illustrated in
Figure~\ref{fig:mesohyp-small-T}, and due to short integration period \(T\),
matches exactly the \owc{} partition. Notice that the conventional intuition
about vortices being ``elliptic'' structures cannot be inferred from short
integration times, as for \(T \approx 0\) almost the entire space is
mesohyperbolic (non-mesohyperbolic lines~\eqref{eq:abc-ow-nonhyp} are
difficult to sample numerically). Nevertheless, neutral mesohelical regions roughly
coincide with locations of vortices, while the chaotic region between vortices
contains a mixture of all four classes of mesohyperbolicity. Comparing with
Figure~\ref{fig:abcflow-statespace}, we see that the boundaries of mesochronic
classes do not align with boundaries of invariant sets.

\begin{figure}[!ht]
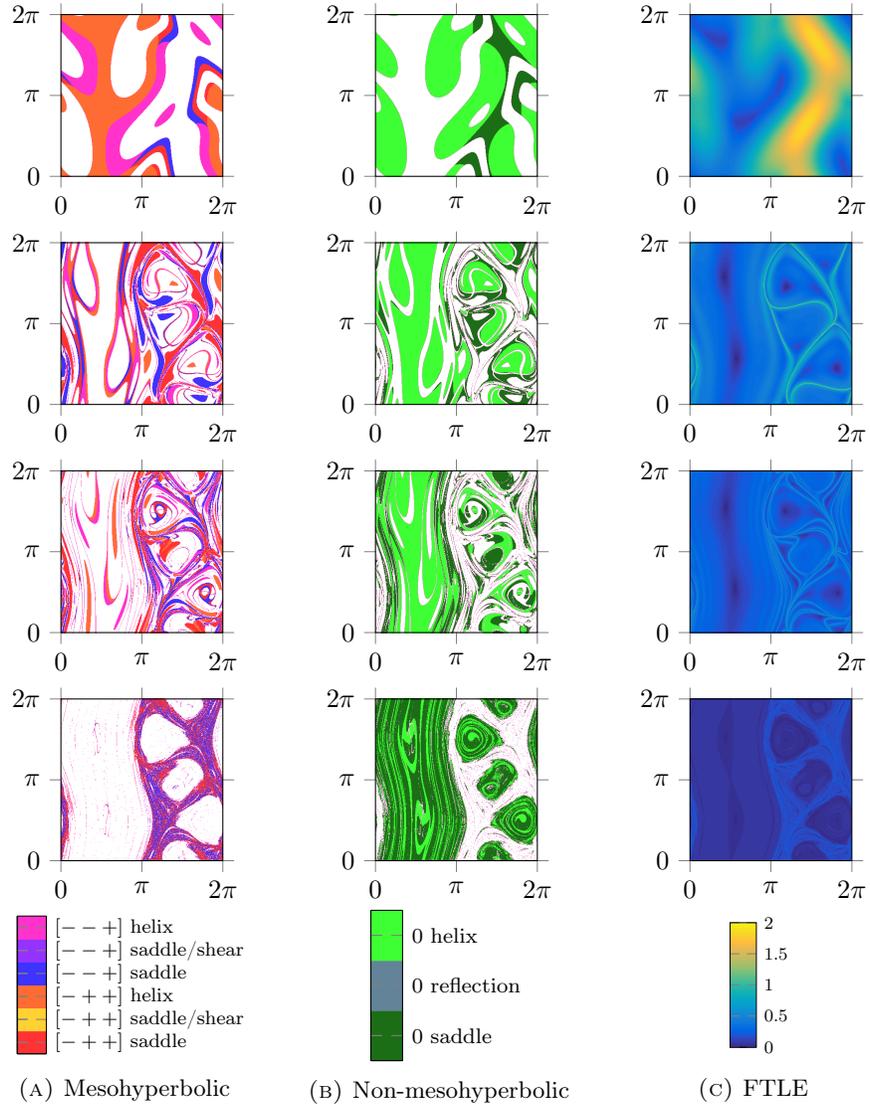

  \centering
  \begin{subfigure}[t]{0.32\linewidth}\centering
    \includegraphics[height=30mm]{img/tikz/ABC-steady-T-01_0-hyclass.tikz}\\
    \includegraphics[height=30mm]{img/tikz/ABC-steady-T-05_0-hyclass.tikz}\\
    \includegraphics[height=30mm]{img/tikz/ABC-steady-T-10_0-hyclass.tikz}\\
    \includegraphics[height=30mm]{img/tikz/ABC-steady-T-50_0-hyclass.tikz}\\
    \includegraphics[height=20mm]{img/tikz/colorbar-hy.tikz}
\caption{Mesohyperbolic}
  \end{subfigure}
  \begin{subfigure}[t]{0.32\linewidth}\centering
    \includegraphics[height=30mm]{img/tikz/ABC-steady-T-01_0-heclass.tikz}\\
    \includegraphics[height=30mm]{img/tikz/ABC-steady-T-05_0-heclass.tikz}\\
    \includegraphics[height=30mm]{img/tikz/ABC-steady-T-10_0-heclass.tikz}\\
    \includegraphics[height=30mm]{img/tikz/ABC-steady-T-50_0-heclass.tikz}\\
    \includegraphics[height=20mm]{img/tikz/colorbar-he.tikz}
\caption{Non-mesohyperbolic}
  \end{subfigure}
  \begin{subfigure}[t]{0.32\linewidth}\centering   \includegraphics[height=30mm]{img/tikz/ABC-steady-T-01_0-ftle.tikz}\\
    \includegraphics[height=30mm]{img/tikz/ABC-steady-T-05_0-ftle.tikz}\\
    \includegraphics[height=30mm]{img/tikz/ABC-steady-T-10_0-ftle.tikz}\\
    \includegraphics[height=30mm]{img/tikz/ABC-steady-T-50_0-ftle.tikz}\\
    \includegraphics[height=20mm]{img/tikz/colorbar-ftle.tikz}
\caption{FTLE}
  \end{subfigure}
  \caption{Distribution of mesochronic classes and the \ftle{} (FTLE) field on the \(x-y\) plane of \(z=0\) slice of the \emph{steady} ABC flow for times (in rows) \(T = 1,\,5,\,10,\,50\). }\label{fig:mesohyp-steady-T}
\end{figure}

Increasing \(T\) results in the sequence of images shown in
Figure~\ref{fig:mesohyp-steady-T}, where the mesohyperbolic regions are shown in the left
column, and the non-mesohyperbolic regions in the right column. As the averaging
period is increased to \(T = 1\), partitions deform, but remain largely
uncorrelated with invariant features. As we increase \(T\) beyond \(1\),
non-hyperbolic behavior significantly re-appears along the interface between
different mesohyperbolic classes. Parts of boundaries of mesohyperbolic zones
start to align with invariant vortices. Since level sets of any function
averaged for a sufficiently long time will partition the state space into
invariant sets~\cite{Mezic1999,Budisic2012b}, this is expected. Notice that the
non-mesohyperbolic regions appear almost exclusively inside invariant
vortices. As \(T\) is increased even further, the non-mesohyperbolic zones
grow inside the invariant vortices and eventually completely occupy them. In
the chaotic zone, we see disappearance of mesohelical dynamics, with only
saddle mesohyperbolicity remaining, which matches the asymptotic analysis in Section~\ref{sec:limit-to-infty}.
\clearpage
\subsubsection{Unsteady case}
\label{sec:unsteady-ABC}

We now keep the parameters \(A=\sqrt{3},\,B=\sqrt{2},\,C=1\) as before, but
set \(D = 1\), which results in the unsteady variation in the coefficient
\(A(t) = \sqrt{3} + t \sin t\). The unsteady ABC flow has not received as much
analytic attention as the steady case; however, it was used as a demonstration
of numerical techniques for computation of the flow map Jacobian \(\jac
\psi_{T}\) in~\cite{Brunton2010}. Figure~\ref{fig:abc-unsteady} shows the same sets of initial conditions used in demonstrating the steady flow (Figure~\ref{fig:abc-steady}), but now advected by the unsteady flow for comparison. Notice that the initial advection patterns are similar, until \(A(t)\) starts significantly deviating from its constant term \(A\).

\begin{figure}[!ht]
  \centering
  \begin{subfigure}[t]{0.35\linewidth}\centering
    \includegraphics[height=40mm]{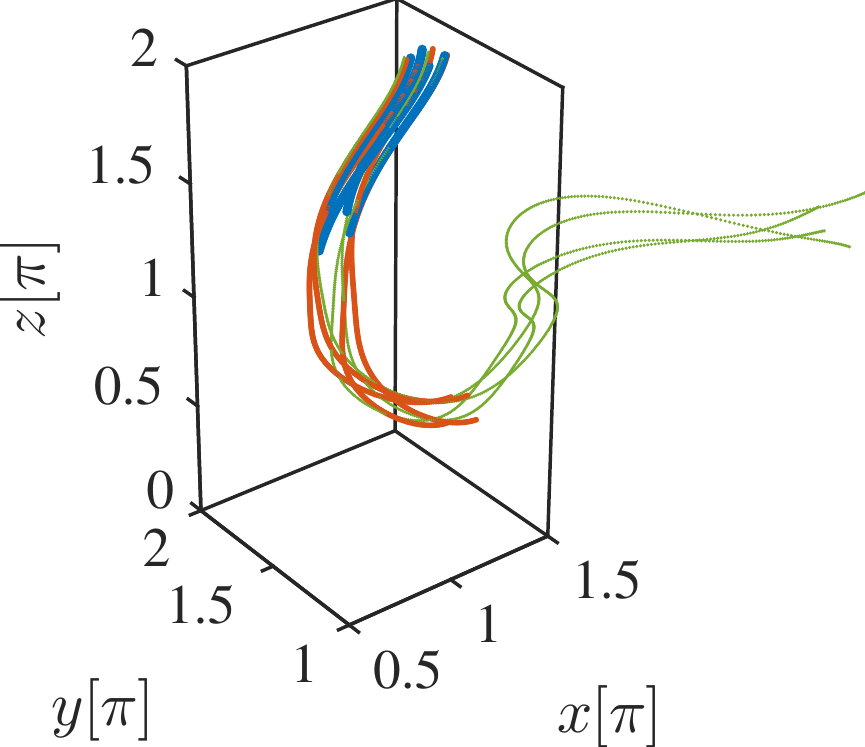}
    \caption{Pathlines for \(T = 1,\,2.5,\,5\) in a periodized unit cell}
  \end{subfigure}\quad
  \begin{subfigure}[t]{0.35\linewidth}\centering
    \includegraphics[height=40mm]{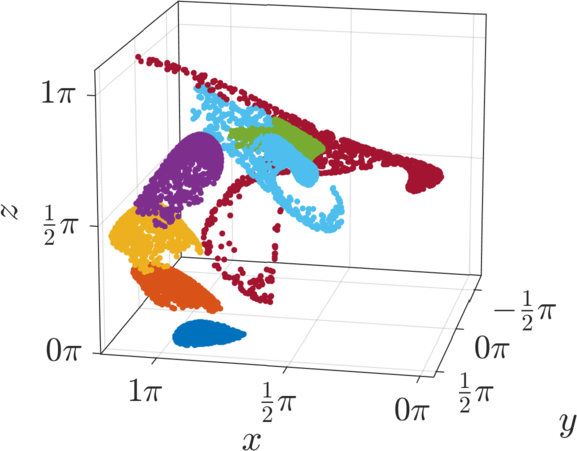}
    \caption{Advection of points for \(T = 0,\, 0.75,\, 1.50,\dots,\, 4.50\) (color indicates time).}\label{fig:abc-unsteady-cross}
  \end{subfigure}\\
  \begin{subfigure}[t]{0.7\linewidth}\centering
    \includegraphics[height=40mm]{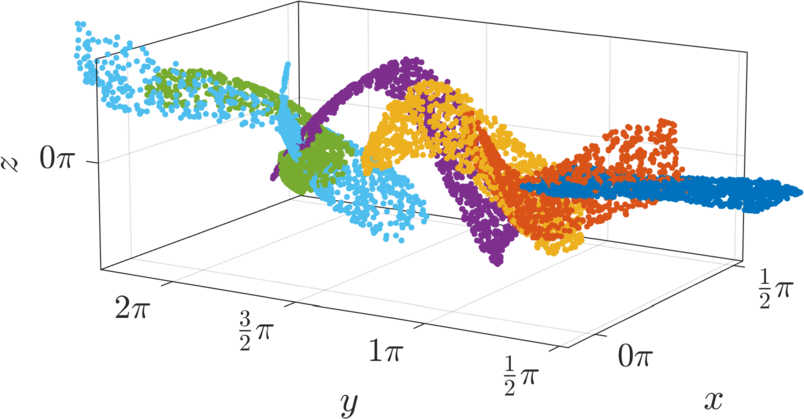}
    \caption{Advection of points for \(T = 0,\,1.5,\,3.0,\, 4.5\) (color indicates time).}\label{fig:abc-unsteady-axial}
  \end{subfigure}
  \caption{\label{fig:abc-unsteady}Pathlines and advection patterns for the unsteady ABC flow~\protect\eqref{eq:abcmodel} with \(A=\sqrt{3},\,B=\sqrt{2},\,C=1,\,D=1\). Initial conditions clouds were sampled at \(t=0\) uniformly from the \(z=0\) section of the two vortices in Fig.~\protect\ref{fig:vortices-slice}; panel \protect\subref{fig:abc-unsteady-cross} samples the vortex centered at \(x = 7\pi/8,\,y=\pi/2\); panel \protect\subref{fig:abc-unsteady-axial} samples the vortex with the central axis along the line \(x = \pi/2\).}
\end{figure}

Figure~\ref{fig:dist-num-hyp-unsteady} shows the histogram of numerical mesohyperbolicity while  Figure~\ref{fig:mesohyp-unsteady-T} shows the spatial distributions of
(non-)mesohyperbolicity classes, determined using the same numerical criterion as in
the steady case~\eqref{eq:min-num-hyp}. For short intervals \(T = 0.5,\,
1.0\), mesochronic classification of the flow is similar to the steady
case. This is expected as the magnitude of \(A(t)\) is dominated by the steady
component. As time evolves, non-mesohyperbolic regions in the flow are
destroyed, and the obvious split between two behaviors that was observed in
the steady flow (Figure~\ref{fig:mesohyp-steady-T}) is not present
here. Remnants of the axial vortex in the left and two ``eyes'' of vortices in
the right sides of images are visible both in mesochronic and FTLE partitions.
\begin{figure}[!ht]
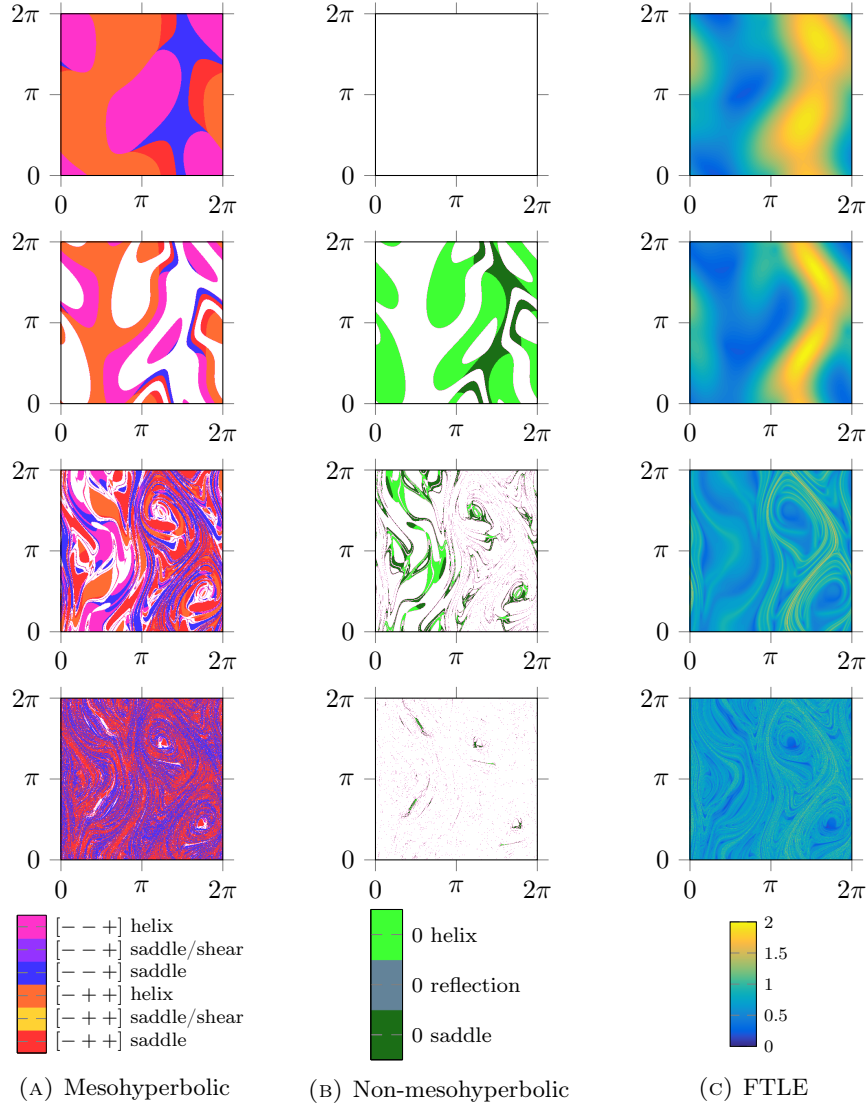

  \centering
  \begin{subfigure}[t]{0.32\linewidth}\centering
    \includegraphics[height=30mm]{img/tikz/ABC-unsteady-T-00_5-hyclass.tikz}\\
    \includegraphics[height=30mm]{img/tikz/ABC-unsteady-T-01_0-hyclass.tikz}\\
    \includegraphics[height=30mm]{img/tikz/ABC-unsteady-T-05_0-hyclass.tikz}\\
    \includegraphics[height=30mm]{img/tikz/ABC-unsteady-T-10_0-hyclass.tikz}\\
    \includegraphics[height=20mm]{img/tikz/colorbar-hy.tikz}
\caption{Mesohyperbolic}\label{fig:unsteady-hy}
  \end{subfigure}
  \begin{subfigure}[t]{0.32\linewidth}\centering
    \includegraphics[height=30mm]{img/tikz/ABC-unsteady-T-00_5-heclass.tikz}\\
    \includegraphics[height=30mm]{img/tikz/ABC-unsteady-T-01_0-heclass.tikz}\\
    \includegraphics[height=30mm]{img/tikz/ABC-unsteady-T-05_0-heclass.tikz}\\
    \includegraphics[height=30mm]{img/tikz/ABC-unsteady-T-10_0-heclass.tikz}\\
    \includegraphics[height=20mm]{img/tikz/colorbar-he.tikz}
\caption{Non-mesohyperbolic}
  \end{subfigure}
  \begin{subfigure}[t]{0.32\linewidth}\centering
    \includegraphics[height=30mm]{img/tikz/ABC-unsteady-T-00_5-ftle.tikz}\\
    \includegraphics[height=30mm]{img/tikz/ABC-unsteady-T-01_0-ftle.tikz}\\
    \includegraphics[height=30mm]{img/tikz/ABC-unsteady-T-05_0-ftle.tikz}\\
    \includegraphics[height=30mm]{img/tikz/ABC-unsteady-T-10_0-ftle.tikz}\\
    \includegraphics[height=20mm]{img/tikz/colorbar-ftle.tikz}
\caption{FTLE}
  \end{subfigure}
  \caption{Distribution of mesochronic classes and the \ftle{} (FTLE) field on the \(x-y\) plane of \(z=0\) slice of the \emph{unsteady} ABC flow for times (in rows) \(T = 1,\,5,\,10,\,50\). }\label{fig:mesohyp-unsteady-T}
\end{figure}

\begin{figure}[htb]
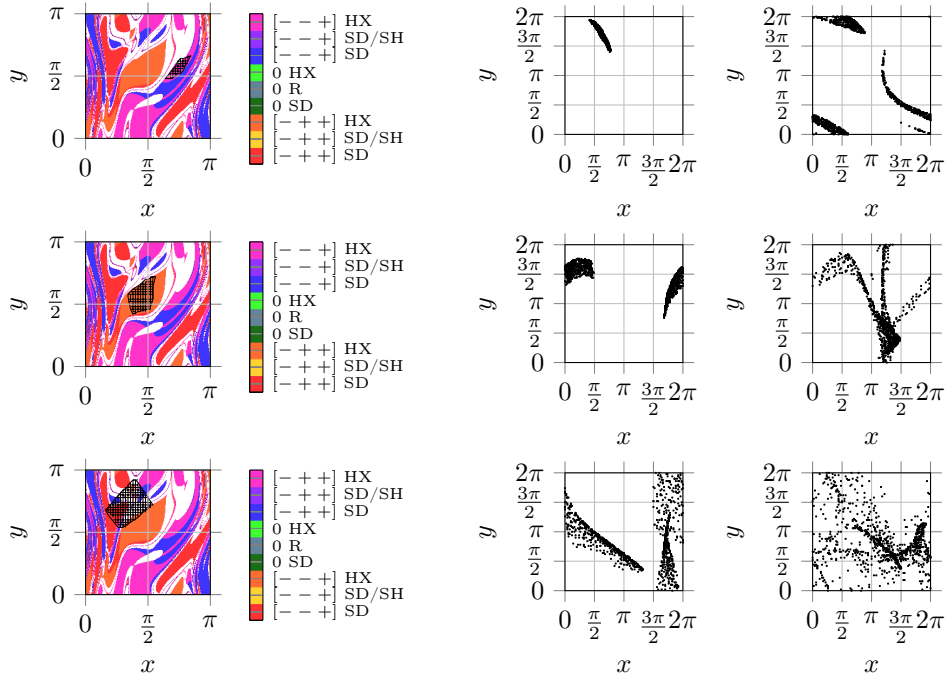

  \centering
  \begin{subfigure}[t]{0.45\linewidth}
    \includegraphics[height=30mm]{img/tikz/mmpHX_patch_t00.tikz}\\
    \includegraphics[height=30mm]{img/tikz/mppHX_patch_t00.tikz}\\
    \includegraphics[height=30mm]{img/tikz/straddle_patch_t00.tikz}
\caption{Initial patch (hatched) at \(t = t_{0} = 0\)}
  \end{subfigure}\hfill
  \begin{subfigure}[t]{0.25\linewidth}
    \includegraphics[height=30mm]{img/tikz/mmpHX_patch_t05.tikz}\\
    \includegraphics[height=30mm]{img/tikz/mppHX_patch_t05.tikz}\\
    \includegraphics[height=30mm]{img/tikz/straddle_patch_t05.tikz}
\caption{Tracer at \(t=5\)}
  \end{subfigure}~
  \begin{subfigure}[t]{0.25\linewidth}
    \includegraphics[height=30mm]{img/tikz/mmpHX_patch_t07.tikz}\\
    \includegraphics[height=30mm]{img/tikz/mppHX_patch_t07.tikz}\\
    \includegraphics[height=30mm]{img/tikz/straddle_patch_t07.tikz}
\caption{Tracer at \(t=7\)}
  \end{subfigure}
  \caption{Material advection in the unsteady ABC flow. Rows show clouds of \(10^{3}\) points from regions that are, respectively, mesohelical \([- - +]\), \([- - +]\) and mixed-mesochronic-class between \(t=0\) and \(t=5\). The first column shows the selection patch at time \(t=0\), overlaid from the lower-left square of the third (\(T=5\)) panel in  Fig.~\protect\ref{fig:unsteady-hy}, with two additional times shown in the adjoining columns. Point clouds are graphed as they project onto \(z=0\) plane.}
  \label{fig:unsteady-material-advection}
\end{figure}

Figure~\ref{fig:unsteady-material-advection} illustrates transport of initial conditions sampled from several regions in the state space of the unsteady ABC flow. The first two rows show advection from patches chosen as subsets of regions that are mesohyperbolic for integration times \(T = 0.5, 1, 5\). The last row shows advection from a patch that straddles several mesochronic regions at \(T = 5\).

Advection up to \(t = 5\) demonstrates that the initial conditions selected from single mesochronic regions (central column, top two frames) do not disperse much, i.e., stay coherent. Initial conditions from a mixture of regions show considerably more dispersion.

Advection up to \(t = 7\), shown in the third column, demonstrates how the patches of initial conditions evolve \emph{beyond} the interval \(T = 5\) that was used to generate mesochronic classes. All material patches at this point show considerable growth; arguably, the patch in the last row again shows the largest dispersion.

\begin{figure}[htb]
  \centering
  \begin{subfigure}[t]{0.45\linewidth}
    \includegraphics[height=40mm]{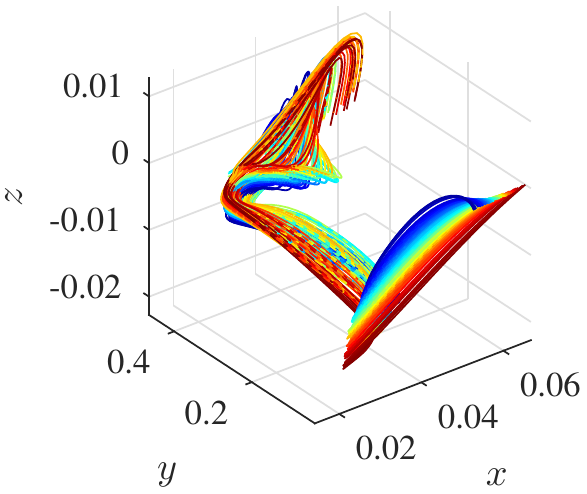}
    \caption{\(\bm{[- - +]}\) mesohelical patch}
  \end{subfigure}\quad
  \begin{subfigure}[t]{0.45\linewidth}
    \includegraphics[height=40mm]{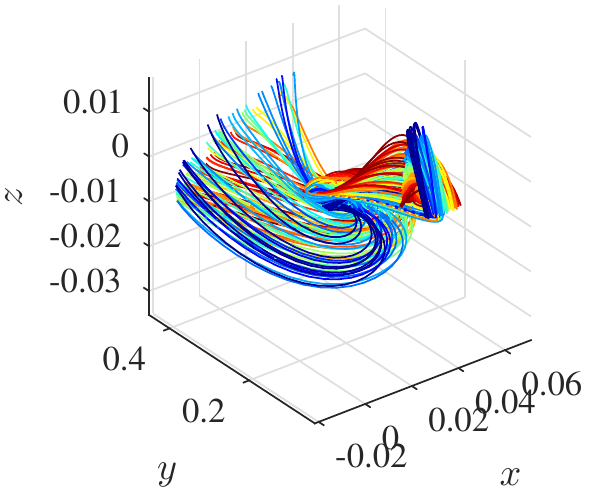}
    \caption{\(\bm{[- + +]}\) mesohelical patch}
  \end{subfigure}\\
  \begin{subfigure}[t]{0.45\linewidth}
    \includegraphics[height=40mm]{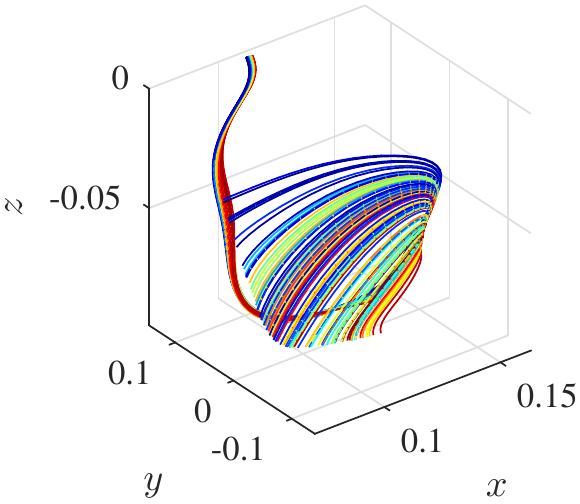}
    \caption{\(\bm{[- - +]}\) mesosellar patch}
  \end{subfigure}\quad
  \begin{subfigure}[t]{0.45\linewidth}
    \includegraphics[height=40mm]{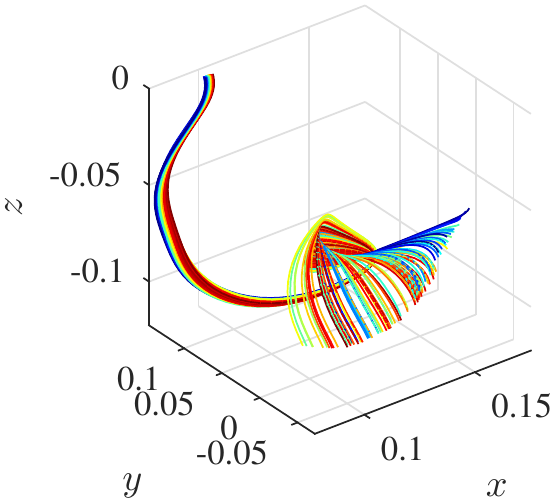}
    \caption{\(\bm{[- + +]}\) mesosellar patch}
  \end{subfigure}\caption{\label{fig:unsteady-pathlines}Orbits (pathlines) of \(100\) points initialized from two mesohelical patches shown in Figure~\protect\ref{fig:unsteady-material-advection} and two additional mesosellar patches, advected for time \(t \in [0,5]\). Color is added to illustrate internal ordering of the material.}
\end{figure}

We conclude this section with Figure~\ref{fig:unsteady-pathlines} showing
images of orbits, i.e., pathlines, of material points advected by the unsteady
ABC flow through space over 5 time units. Two mesohelical patches used to
initialize points are the same as in
Figure~\ref{fig:unsteady-material-advection}; the two additional mesosellar
patches were initialized similarly. The two mesosellar sets are sampled from a
narrower region, so it is not surprising that they remain more tightly packed
than mesohelical sets. We believe that the major distinction between the top
and the bottom row is in the internal order of orbits within each
bundle. Mesohelical sets seem to preserve the internal order, i.e., despite
the torsion of the bulk, one can still observe an ordered rainbow of colors
toward the end of the orbits. On the other hand, both mesosellar sets show
more internal mixing of colors. While this is potentially a circumstantial
observation here, it may be an interesting point to explore in the future, due
to the association of hyperbolic saddles with mixing and bulk rotation with
lack of mixing.

\section{Discussion}
\label{sec:discussion}

Mesochronic analysis is a local analysis in that it classifies a single
point based on the properties of the local deformation gradients
along trajectory emanating from it. Nevertheless, the hope is that sets of initial conditions selected
using the local mesochronic criteria will collectively stay coherent on a
macroscopic level and, potentially, deform as a bulk in the way suggested by
the local analysis. The contributions of this paper are in extension of
two-dimensional theory of~\cite{Mezic2010} to three-dimensional flows.

We have extended the concepts of mesohyperbolicity (local strain over finite
times) and mesoellipticity (local rotation over finite times) to three
dimensions, which allows for co-existence of the rotation and the strain. In
turn, we differentiate between the mesosellar behavior, involving three
distinct directions of strain, and the mesohelical behavior, involving a plane
in which material simultaneously rotates and, possibly, uniformly
strains. Quantities \(\Sigma\) and \(\Delta\) that we defined simplify
classification of domain points, side-stepping explicit calculation of
eigenvalues of the flow. The eigenvectors of $\jac \psi_T$ also have a role in
the foregoing analysis. A complex conjugate pair determines a plane whose
normal is the vector of finite-time rotation. A real eigenvector indicates the
direction of finite-time stretching under the map.

Rotation of the material has a much richer presence in 3D than in 2D
classification. In two dimensions, any non-uniformity in strain, accompanied
by rotation or not, manifests itself as a hyperbolic deformation. In other
words, only rigid-body rotations are highlighted as non-hyperbolic in two
dimensions, in addition to shear. In three dimensions, rotation in a plane can
be accompanied by strain in the normal direction: the type of that strain
distinguishes between three mesohelical classes: \([- + +],\ [- - +]\) and
neutral.

Consequently, the invariant vortices in the steady ABC flow initially show a
significant amount of hyperbolicity in them, indicating that layers within
them move at different speeds in the axial direction
(Figure~\ref{fig:abc-steady}); however, over the longer timescales they appear
neutrally mesohelical, which corresponds to interpretation of them as rigid
rotating structures. This suggests that the separation between different
layers of vortices is asymptotically sub-exponential.

In the unsteady ABC flow discussed in Section~\ref{sec:unsteady-ABC}, material
is significantly less coherent as the unsteadyness destroys long-term
invariant structures. Nevertheless, for interval lengths \(T = 1,\ 5\), while
the unsteady term \(A(t) = A + Dt\sin(t)\) is of the same order of magnitude
as the steady \(A\), the distributions of mesohyperbolic areas show loose
correspondence between steady and unsteady ABC flows. However, as \(T \to
10\), the regions that turned neutrally mesohelical in the steady case are
instead replaced by a mixture of hyperbolic mesosellar regions, indicative of
chaotic mixing.

We have demonstrated that visualization of mesochronic classes corresponds
with well-known behavior of the fluid-like ABC flow. In particular, it is
interesting to see how well the mesohelical regions correspond with the vortex
zones in which \kam{}-type structures are known to exist. At this point, the
theory is immediately applicable to kinematic analysis of the geometric
structure in chaotic advection in fluids; however, numerical algorithms used
in this paper serve the proof-of-concept purpose, and more accurate and
efficient methods for computation of the flow Jacobian~\cite{Brunton2010} can
and should be readily used, if applicable.

Note that the mesochronic classification is invariant to Galilean
transformations but not to rotating frame transformations~\cite{Haller2015},
it therefore discovers transport properties for three-dimensional
incompressible flows which are observed in the given frame of reference the
system is in. It is interesting to consider the relationship between the
notion of exponential dichotomy~\cite{Coppel1978} and mesohyperbolicity
discussed here. Mesohyperbolicity (or -ellipticity, -helicity) are notions
that are valid on finite time intervals.  They are best used as tools to study
the bifurcations of local dynamics of a trajectory in time, when changing time
intervals are selected, i.e. mesohyperbolicity is a function of two
parameters, the beginning and final time. It appears plausible that, adapting
the techniques used in~\cite{Palmer2011}, it can be shown that if a trajectory
is not mesohyperbolic for any interval $[t_1,t_2]$ inside the interval of
interest $[T_1,T_2],$ then there is no (appropriately
defined~\cite{Palmer2011}) finite-time exponential dichotomy on that
interval. Thus, the concepts defined here have the potential to parametrize
finite-time stability properties.

The true test of any method for detection of geometric structures is its
usefulness to the applied communities, e.g., physical oceanography, and flow
engineers. We therefore plan to apply the technique to the unsteady testbed
flows in~\cite{Balasuriya2012}, to transitory systems~\cite{Mosovsky2011}
and to more physically-relevant models in order to further verify practical
use of the mesochronic classification, beyond confirmations obtained for the
\twod{} case. Furthermore, it remains to be understood if the highlighted
quantities, e.g., the trace, the cofactor trace, and determinant of the
mesochronic Jacobian, are also useful in dynamic analysis of turbulent fluid
flows.

\bibliography{references}

\providecommand{\href}[2]{#2}
\providecommand{\arxiv}[1]{\href{http://arxiv.org/abs/#1}{arXiv:#1}}
\providecommand{\url}[1]{\texttt{#1}}
\providecommand{\urlprefix}{URL }
\begin{thebibliography}{10}

\bibitem{Adrianova1995}
\newblock L.~Y. Adrianova,
\newblock \emph{Introduction to linear systems of differential equations}, vol.
  146 of Translations of Mathematical Monographs,
\newblock {American Mathematical Society, Providence, RI}, 1995.

\bibitem{Allshouse2012}
\newblock M.~Allshouse and J.-L. Thiffeault,
\newblock Detecting coherent structures using braids,
\newblock \emph{Physica D. Nonlinear Phenomena}, 95--105.

\bibitem{Aref1984a}
\newblock H.~Aref and E.~P. Flinchem,
\newblock Dynamics of a {{Vortex Filament}} in a {{Shear-Flow}},
\newblock \emph{Journal of Fluid Mechanics}, \textbf{148} (1984), 477--497.

\bibitem{Balasuriya2012}
\newblock S.~Balasuriya,
\newblock Explicit invariant manifolds and specialised trajectories in a class
  of unsteady flows,
\newblock \emph{Physics of Fluids (1994-present)}, \textbf{24} (2012), 127101.

\bibitem{Blazevski2014}
\newblock D.~Blazevski and G.~Haller,
\newblock Hyperbolic and elliptic transport barriers in three-dimensional
  unsteady flows,
\newblock \emph{Physica D: Nonlinear Phenomena}, \textbf{273{\textendash}274}
  (2014), 46--62.

\bibitem{Boyland2000}
\newblock P.~L. Boyland, H.~Aref and M.~A. Stremler,
\newblock Topological fluid mechanics of stirring,
\newblock \emph{Journal of Fluid Mechanics}, \textbf{403} (2000), 277--304.

\bibitem{Brunton2010}
\newblock S.~L. Brunton and C.~W. Rowley,
\newblock Fast computation of finite-time {{Lyapunov}} exponent fields for
  unsteady flows,
\newblock \emph{Chaos: An Interdisciplinary Journal of Nonlinear Science},
  \textbf{20} (2010), 017503.

\bibitem{Budisic2012b}
\newblock M.~Budi{\v s}i{\'c} and I.~Mezi{\'c},
\newblock Geometry of the ergodic quotient reveals coherent structures in
  flows,
\newblock \emph{Physica D. Nonlinear Phenomena}, \textbf{241} (2012),
  1255--1269.

\bibitem{Chong1990}
\newblock M.~S. Chong, A.~E. Perry and B.~J. Cantwell,
\newblock A general classification of three-dimensional flow fields,
\newblock \emph{Physics of Fluids A: Fluid Dynamics (1989-1993)}, \textbf{2}
  (1990), 765--777.

\bibitem{Coppel1978}
\newblock W.~A. Coppel,
\newblock \emph{Dichotomies in stability theory},
\newblock Lecture Notes in Mathematics, Vol. 629, {Springer-Verlag, Berlin-New
  York}, 1978.

\bibitem{Dellnitz1999}
\newblock M.~Dellnitz and O.~Junge,
\newblock On the approximation of complicated dynamical behavior,
\newblock \emph{SIAM Journal on Numerical Analysis}, \textbf{36} (1999),
  491--515.

\bibitem{Dellnitz2002}
\newblock M.~Dellnitz and O.~Junge,
\newblock Set oriented numerical methods for dynamical systems,
\newblock in \emph{Handbook of dynamical systems, {{Vol}}. 2},
\newblock {North-Holland}, Amsterdam, 2002,
\newblock 221--264.

\bibitem{Dombre1986}
\newblock T.~Dombre, U.~Frisch, J.~M. Greene, M.~H{\'e}non, A.~Mehr and A.~M.
  Soward,
\newblock Chaotic streamlines in the {{ABC}} flows,
\newblock \emph{Journal of Fluid Mechanics}, \textbf{167} (1986), 353--391.

\bibitem{Durrett2010}
\newblock R.~Durrett,
\newblock \emph{Probability: theory and examples},
\newblock 4th edition,
\newblock Cambridge Series in Statistical and Probabilistic Mathematics,
  {Cambridge University Press, Cambridge}, 2010.

\bibitem{Farazmand2015a}
\newblock M.~Farazmand,
\newblock Hyperbolic {{Lagrangian}} coherent structures align with contours of
  path-averaged scalars,
\newblock \emph{arXiv:1501.05036 {[}nlin, physics:physics]}.

\bibitem{Farazmand2016}
\newblock M.~Farazmand and G.~Haller,
\newblock Polar rotation angle identifies elliptic islands in unsteady
  dynamical systems,
\newblock \emph{Physica D: Nonlinear Phenomena}, \textbf{315} (2016), 1--12.

\bibitem{Fox2013}
\newblock A.~M. Fox and J.~D. Meiss,
\newblock Greene's residue criterion for the breakup of invariant tori of
  volume-preserving maps,
\newblock \emph{Physica D: Nonlinear Phenomena}, \textbf{243} (2013), 45--63.

\bibitem{Froyland2003}
\newblock G.~Froyland and M.~Dellnitz,
\newblock Detecting and locating near-optimal almost-invariant sets and cycles,
\newblock \emph{SIAM Journal on Scientific Computing}, \textbf{24} (2003),
  1839--1863 (electronic).

\bibitem{Froyland2010a}
\newblock G.~Froyland, S.~Lloyd and N.~Santitissadeekorn,
\newblock Coherent sets for nonautonomous dynamical systems,
\newblock \emph{Physica D: Nonlinear Phenomena}, \textbf{239} (2010),
  1527--1541.

\bibitem{Froyland2009}
\newblock G.~Froyland and K.~Padberg,
\newblock Almost-invariant sets and invariant manifolds{\textemdash}connecting
  probabilistic and geometric descriptions of coherent structures in flows,
\newblock \emph{Physica D. Nonlinear Phenomena}, \textbf{238} (2009),
  1507--1523.

\bibitem{Froyland2010}
\newblock G.~Froyland, N.~Santitissadeekorn and A.~Monahan,
\newblock Transport in time-dependent dynamical systems: {{Finite}}-time
  coherent sets,
\newblock \emph{Chaos: An Interdisciplinary Journal of Nonlinear Science},
  \textbf{20} (2010), 043116.

\bibitem{Gelfand2008}
\newblock I.~M. Gelfand, M.~M. Kapranov and A.~V. Zelevinsky,
\newblock \emph{Discriminants, resultants and multidimensional determinants},
\newblock Modern Birkh{\"a}user Classics, {Birkh{\"a}user Boston Inc.}, Boston,
  MA, 2008.

\bibitem{Goldhirsch1987}
\newblock I.~Goldhirsch, P.-L. Sulem and S.~A. Orszag,
\newblock Stability and {{Lyapunov}} stability of dynamical systems: {{A}}
  differential approach and a numerical method,
\newblock \emph{Physica D: Nonlinear Phenomena}, \textbf{27} (1987), 311--337.

\bibitem{Gouezel2004}
\newblock S.~Gou{\"e}zel,
\newblock Central limit theorem and stable laws for intermittent maps,
\newblock \emph{Probability Theory and Related Fields}, \textbf{128} (2004),
  82--122.

\bibitem{Gouezel2014}
\newblock S.~Gou{\"e}zel and I.~Melbourne,
\newblock Moment bounds and concentration inequalities for slowly mixing
  dynamical systems,
\newblock \emph{Electronic Journal of Probability}, \textbf{19} (2014), no. 93,
  30.

\bibitem{Greene1968}
\newblock J.~M. Greene,
\newblock Two-dimensional measure-preserving mappings,
\newblock \emph{Journal of Mathematical Physics}, \textbf{9} (1968), 760--768.

\bibitem{Greene1979}
\newblock J.~M. Greene,
\newblock Method for {{Determining}} a {{Stochastic Transition}},
\newblock \emph{Journal of Mathematical Physics}, \textbf{20} (1979),
  1183--1201.

\bibitem{Haller2001}
\newblock G.~Haller,
\newblock Lagrangian structures and the rate of strain in a partition of
  two-dimensional turbulence,
\newblock \emph{Physics of Fluids}, \textbf{13} (2001), 3365--3385.

\bibitem{Haller2011a}
\newblock G.~Haller,
\newblock A variational theory of hyperbolic {{Lagrangian Coherent
  Structures}},
\newblock \emph{Physica D. Nonlinear Phenomena}, \textbf{240} (2011), 574--598.

\bibitem{Haller2015}
\newblock G.~Haller,
\newblock Lagrangian {{Coherent Structures}},
\newblock \emph{Annual Review of Fluid Mechanics}, \textbf{47} (2015),
  137--162.

\bibitem{Haller2012}
\newblock G.~Haller and F.~J. Beron-Vera,
\newblock Geodesic theory of transport barriers in two-dimensional flows,
\newblock \emph{Physica D: Nonlinear Phenomena}, \textbf{241} (2012),
  1680--1702.

\bibitem{Haller1998}
\newblock G.~Haller and A.~C. Poje,
\newblock Finite time transport in aperiodic flows,
\newblock \emph{Physica D. Nonlinear Phenomena}, \textbf{119} (1998), 352--380.

\bibitem{Haller2000}
\newblock G.~Haller and G.~Yuan,
\newblock Lagrangian coherent structures and mixing in two-dimensional
  turbulence,
\newblock \emph{Physica D. Nonlinear Phenomena}, \textbf{147} (2000), 352--370.

\bibitem{Irving2004}
\newblock R.~S. Irving,
\newblock \emph{Integers, polynomials, and rings},
\newblock Undergraduate Texts in Mathematics, {Springer-Verlag}, New York,
  2004.

\bibitem{Koopman1931}
\newblock B.~O. Koopman,
\newblock Hamiltonian {{Systems}} and {{Transformations}} in {{Hilbert Space}},
\newblock \emph{Proceedings of National Academy of Sciences}, \textbf{17}
  (1931), 315--318.

\bibitem{Levnajic2010}
\newblock Z.~Levnaji{\'c} and I.~Mezi{\'c},
\newblock Ergodic theory and visualization. {{I}}. {{Mesochronic}} plots for
  visualization of ergodic partition and invariant sets,
\newblock \emph{Chaos: An Interdisciplinary Journal of Nonlinear Science},
  \textbf{20} (2010), --.

\bibitem{Ma2014}
\newblock T.~Ma and E.~M. Bollt,
\newblock Differential {{Geometry Perspective}} of {{Shape Coherence}} and
  {{Curvature Evolution}} by {{Finite-Time Nonhyperbolic Splitting}},
\newblock \emph{SIAM Journal on Applied Dynamical Systems}, \textbf{13} (2014),
  1106--1136.

\bibitem{Ma2015}
\newblock T.~Ma and E.~M. Bollt,
\newblock Shape {{Coherence}} and {{Finite-Time Curvature Evolution}},
\newblock \emph{International Journal of Bifurcation and Chaos}, \textbf{25}
  (2015), 1550076.

\bibitem{Ma2016}
\newblock T.~Ma, N.~T. Ouellette and E.~M. Bollt,
\newblock Stretching and folding in finite time,
\newblock \emph{Chaos: An Interdisciplinary Journal of Nonlinear Science},
  \textbf{26} (2016), 023112.

\bibitem{Madrid2009}
\newblock J.~A.~J. Madrid and A.~M. Mancho,
\newblock Distinguished trajectories in time dependent vector fields,
\newblock \emph{Chaos: An Interdisciplinary Journal of Nonlinear Science},
  \textbf{19} (2009), 013111.

\bibitem{Malhotra1998}
\newblock N.~Malhotra, I.~Mezi{\'c} and S.~Wiggins,
\newblock Patchiness: {{A New Diagnostic}} for {{Lagrangian Trajectory
  Analysis}} in {{Time-Dependent Fluid Flows}},
\newblock \emph{International Journal of Bifurcation and Chaos}, \textbf{08}
  (1998), 1053--1093.

\bibitem{Mancho2013}
\newblock A.~M. Mancho, S.~Wiggins, J.~Curbelo and C.~Mendoza,
\newblock Lagrangian descriptors: {{A}} method for revealing phase space
  structures of general time dependent dynamical systems,
\newblock \emph{Communications in Nonlinear Science and Numerical Simulation},
  \textbf{18} (2013), 3530--3557.

\bibitem{Mezic1994}
\newblock I.~Mezi{\'c},
\newblock \emph{On the geometrical and statistical properties of dynamical
  systems: theory and applications},
\newblock Phd thesis, California Institute of Technology, 1994.

\bibitem{Mezic2005}
\newblock I.~Mezi{\'c},
\newblock Spectral properties of dynamical systems, model reduction and
  decompositions,
\newblock \emph{Nonlinear Dynamics. An International Journal of Nonlinear
  Dynamics and Chaos in Engineering Systems}, \textbf{41} (2005), 309--325.

\bibitem{Mezic2004}
\newblock I.~Mezi{\'c} and A.~Banaszuk,
\newblock Comparison of systems with complex behavior,
\newblock \emph{Physica D. Nonlinear Phenomena}, \textbf{197} (2004), 101--133.

\bibitem{Mezic2010}
\newblock I.~Mezi{\'c}, S.~Loire, V.~A. Fonoberov and P.~J. Hogan,
\newblock A {{New Mixing Diagnostic}} and {{Gulf Oil Spill Movement}},
\newblock \emph{Science Magazine}, \textbf{330} (2010), 486--489.

\bibitem{Mezic2002}
\newblock I.~Mezi{\'c} and F.~Sotiropoulos,
\newblock Ergodic theory and experimental visualization of invariant sets in
  chaotically advected flows,
\newblock \emph{Physics of Fluids}, \textbf{14} (2002), 2235.

\bibitem{Mezic1999}
\newblock I.~Mezi{\'c} and S.~Wiggins,
\newblock A method for visualization of invariant sets of dynamical systems
  based on the ergodic partition,
\newblock \emph{Chaos: An Interdisciplinary Journal of Nonlinear Science},
  \textbf{9} (1999), 213--218.

\bibitem{Mosovsky2011}
\newblock B.~A. Mosovsky and J.~D. Meiss,
\newblock Transport in transitory dynamical systems,
\newblock \emph{SIAM Journal on Applied Dynamical Systems}, \textbf{10} (2011),
  35--65.

\bibitem{Nocedal2006}
\newblock J.~Nocedal and S.~J. Wright,
\newblock \emph{Numerical optimization},
\newblock 2nd edition,
\newblock Springer Series in Operations Research and Financial Engineering,
  {Springer}, New York, 2006.

\bibitem{Okubo1970}
\newblock A.~Okubo,
\newblock Horizontal {{Dispersion}} of {{Floatable Particles}} in {{Vicinity}}
  of {{Velocity Singularities Such}} as {{Convergences}},
\newblock \emph{Deep-Sea Research}, \textbf{17} (1970), 445--454.

\bibitem{Ottino1989}
\newblock J.~M. Ottino,
\newblock \emph{The kinematics of mixing: stretching, chaos, and transport},
\newblock Cambridge Texts in Applied Mathematics, {Cambridge University Press},
  Cambridge, 1989.

\bibitem{Palmer2011}
\newblock K.~J. Palmer,
\newblock A finite-time condition for exponential dichotomy,
\newblock \emph{Journal of Difference Equations and Applications}, \textbf{17}
  (2011), 221--234.

\bibitem{Perry1994}
\newblock A.~D. Perry and S.~Wiggins,
\newblock {{KAM}} tori are very sticky: rigorous lower bounds on the time to
  move away from an invariant {{Lagrangian}} torus with linear flow,
\newblock \emph{Physica D. Nonlinear Phenomena}, \textbf{71} (1994), 102--121.

\bibitem{Petersen1989}
\newblock K.~Petersen,
\newblock \emph{Ergodic theory}, vol.~2 of Cambridge Studies in Advanced
  Mathematics,
\newblock {Cambridge University Press}, Cambridge, UK, 1989.

\bibitem{Poje1999}
\newblock A.~Poje, G.~Haller and I.~Mezi{\'c},
\newblock The geometry and statistics of mixing in aperiodic flows,
\newblock \emph{Physics of Fluids}, \textbf{11} (1999), 2963--2968.

\bibitem{Rey-Bellet2008}
\newblock L.~Rey-Bellet and L.-S. Young,
\newblock Large deviations in non-uniformly hyperbolic dynamical systems,
\newblock \emph{Ergodic Theory and Dynamical Systems}, \textbf{28} (2008),
  587--612.

\bibitem{Ruelle1985}
\newblock D.~P. Ruelle,
\newblock Rotation numbers for diffeomorphisms and flows,
\newblock \emph{Annales de l'Institut Henri Poincar{\'e}. Physique
  Th{\'e}orique}, \textbf{42} (1985), 109--115.

\bibitem{Samelson2013}
\newblock R.~M. Samelson,
\newblock Lagrangian {{Motion}}, {{Coherent Structures}}, and {{Lines}} of
  {{Persistent Material Strain}},
\newblock \emph{Annual Review of Marine Science}, \textbf{5} (2013), 137--163.

\bibitem{Shadden2005}
\newblock S.~C. Shadden, F.~Lekien and J.~E. Marsden,
\newblock Definition and properties of {{Lagrangian}} coherent structures from
  finite-time {{Lyapunov}} exponents in two-dimensional aperiodic flows,
\newblock \emph{Physica D. Nonlinear Phenomena}, \textbf{212} (2005), 271--304.

\bibitem{SzezechJr2013}
\newblock J.~D. {Szezech Jr.}, A.~B. Schelin, I.~L. Caldas, S.~R. Lopes, P.~J.
  Morrison and R.~L. Viana,
\newblock Finite-time rotation number: {{A}} fast indicator for chaotic
  dynamical structures,
\newblock \emph{Physics Letters A}, \textbf{377} (2013), 452--456.

\bibitem{Thiffeault2010}
\newblock J.-L. Thiffeault,
\newblock Braids of entangled particle trajectories,
\newblock \emph{Chaos: An Interdisciplinary Journal of Nonlinear Science},
  \textbf{20} (2010), 017516--017514.

\bibitem{Treschev1998}
\newblock D.~Treschev,
\newblock Width of stochastic layers in near-integrable two-dimensional
  symplectic maps,
\newblock \emph{Physica D. Nonlinear Phenomena}, \textbf{116} (1998), 21--43.

\bibitem{Weiss1991}
\newblock J.~Weiss,
\newblock The dynamics of enstrophy transfer in two-dimensional hydrodynamics,
\newblock \emph{Physica D. Nonlinear Phenomena}, \textbf{48} (1991), 273--294.

\bibitem{Wiggins1992}
\newblock S.~Wiggins,
\newblock \emph{Chaotic transport in dynamical systems}, vol.~2 of
  Interdisciplinary Applied Mathematics,
\newblock {Springer-Verlag}, New York, 1992.

\bibitem{Young1990}
\newblock L.-S. Young,
\newblock Some {{Large Deviation Results}} for {{Dynamical Systems}},
\newblock \emph{Transactions of the American Mathematical Society},
  \textbf{318} (1990), 525--543.

\bibitem{Young1998}
\newblock L.-S. Young,
\newblock Statistical {{Properties}} of {{Dynamical Systems}} with {{Some
  Hyperbolicity}},
\newblock \emph{Annals of Mathematics}, \textbf{147} (1998), 585--650.

\end{thebibliography}

\appendix

\section{Cubic polynomials and \texorpdfstring{\(3 \times 3\)}{
    3 x 3
  } matrices}
\label{sec:cubic-poly-mat}

In this section we review formulas and notation for \(3 \times 3\) matrices.
Let \(A\) denote a \(3 \times 3\) matrix over the field \(\R\), \(
A =
\left[\begin{smallmatrix}
    a_{11} & a_{12} & a_{13} \\
    a_{21} & a_{22} & a_{23} \\
    a_{31} & a_{32} & a_{33}
  \end{smallmatrix}\right]
\).
The characteristic polynomial of \(A\) is
\begin{equation}
  \label{eq:char-poly}
  p(\lambda) := \det( \lambda \cdot \id - A),
\end{equation}
where \(\id\) is the unit matrix. The zeros of \(p(\lambda)\) are the eigenvalues of \(A\). Define the following quantities:
\begin{equation}
  \begin{aligned}
    t_{A} :=& \tr A, \quad d_{A} := \det A,\\
    m_{A} :=& \tr \cof A =
    \det \left[\begin{smallmatrix}
        a_{11}& a_{12} \\
        a_{21}& a_{22}
      \end{smallmatrix}\right] \\
    &+
    \det \left[\begin{smallmatrix}
        a_{11}& a_{13} \\
        a_{31}& a_{33}
      \end{smallmatrix}\right]
    +
    \det \left[\begin{smallmatrix}
        a_{22}& a_{23} \\
        a_{32}& a_{33}
      \end{smallmatrix}\right],\\
  \end{aligned}\label{eq:poly-coeff}
\end{equation}
where \(\cof A\) is the cofactor matrix of \(A\), containing signed minors.\footnote{The transpose of \(\cof A\) is the adjugate matrix of \(A\).}
By expanding equation~\eqref{eq:char-poly} we can verify that
\begin{equation}\label{eq:char-poly-matrix}
  p(\lambda)=\lambda^3-t_{A}\lambda^2+m_{A}\lambda-d_{A}.
\end{equation}
For a polynomial with real coefficients, the zeros \(\lambda_{1}, \lambda_{2}, \lambda_{3}\) are either all real, or two of them appear as a complex conjugate pair, in which case we denote them by \(\lambda_{r}, \lambda_{c}, \bar \lambda_{c}\). By convention, we will assume that the imaginary part \(\Im \lambda_{c} > 0\).

The \emph{discriminant} of a cubic polynomial
\(f(\lambda) = \lambda^{3} + f_{2} \lambda^{2} + f_{1}\lambda + f_{0}\)
is given by
\begin{equation*}
  \begin{aligned}
    \mathcal{D}(f) :=&\ 18f_{2}f_{1}f_{0} - 4f_{2}^{3}f_{0} + f_{2}^{2}f_{1}^{2} \\
    &- 4 f_{1}^{3} - 27f_{0}^{2},\label{eq:discriminant-cubic}
  \end{aligned}
\end{equation*}
its sign determines whether the zeros of \(f\) lie on the real axis or not (see Ref.~\cite{Gelfand2008}, \S 12.1.B).

Using the notation from~\eqref{eq:char-poly-matrix}, we see that \(f_{2} = -t_{A}\), \(f_{1} = m_{A}\), \(f_{0} = d_{A}\), therefore
\begin{equation*}
  \label{eq:discriminant-cubic-matrix}
  \begin{aligned}
    \mathcal{D}(p) =& 18 t_{A} m_{A} d_{A} - 4 t_{A}^{3} d_{A} \\
    &+ t_{A}^{2}m_{A}^{2} -4 m_{A}^{3} -27 d_{A}^{2}.
  \end{aligned}
\end{equation*}
It then holds (see Ref.~\cite{Irving2004}, \S 10.3, Ex.\ 10.14) that:
\begin{compactitem}
\item \(\mathcal{D}(p) > 0\), the zeros are \(\lambda_{1},\lambda_{2},\lambda_{3} \in \R\),
\item \(\mathcal{D}(p) = 0\), the zeros are \(\lambda_{1}, \lambda_{2} = \lambda_{3} \in \R\),
\item \(\mathcal{D}(p) < 0\), the zeros are \(\lambda_{r} \in \R\), \(\lambda_{c}, \bar \lambda_{c} \in \C\).
\end{compactitem}

The following also holds in general for eigenvalues \(\lambda_{1,2,3} \in \C\), which can be seen by expanding \((\lambda-\lambda_{1})(\lambda-\lambda_{2})(\lambda-\lambda_{3})\) and comparing to~\eqref{eq:char-poly-matrix}:
\begin{align*}
  d_{A} =&\  \det A = \lambda_{1} \lambda_{2} \lambda_{3}, \\
  m_{A} =&\  \tr \cof A = \lambda_{1} \lambda_{2} + \lambda_{1}\lambda_{3} + \lambda_{2}\lambda_{3}, \\
  t_{A} =&\  \tr A = \lambda_{1} + \lambda_{2} + \lambda_{3}.
\end{align*}

\section{Differential equation for the mesochronic Jacobian}
\label{sec:ode-mh-jacobian}

If an initial condition \(x \in \R\) and initial time \(t_{0}\) are fixed, the mesochronic Jacobian matrix \(\jac \tilde{f}_{\tau}(x)\) satisfies a matrix-valued ODE in the variable \(\tau\), the length of the averaging interval. To evaluate the mesochronic Jacobian for the purposes of classifying \((t_{0},x)\) into one of the classes described in Section~\ref{sec:mesochronic}, we numerically solve a particular matrix-valued initial value problem and evaluate its solution at \(\tau = T\).

To derive the ODE for the mesochronic Jacobian, start from the integral expression for the time-\(\tau\) map and compute its Jacobian:
\begin{align*}
  &\psi_{\tau}(x) = x + \int_{t_{0}}^{t_{0} + \tau} f(t_{0}+t, \psi_{t}(x))dt \\
  &\jac \psi_{\tau}(x) = \id + \int_{t_{0}}^{t_{0} + \tau} \jac f(t_{0}+t, \psi_{t}(x)) \cdot \jac \psi_{t}(x) dt \\
  &\jac \psi_{\tau}'(x) = \jac f(t_{0}+\tau, \psi_{\tau}(x)) \cdot \jac \psi_{\tau}(x),
\end{align*}
where \('\) denotes \(d/d\tau\).

We now substitute relation~\eqref{eq:timeT-mh-j-link}, \(\jac \psi_{\tau}(x) = \id + \tau \jac \tilde{f}_{\tau}(x)\), linking the time-\(\tau\) map \(\psi_{\tau}\) and the mesochronic Jacobian \(\jac \tilde{f}_{\tau}\). To simplify notation, we use \(M(\tau) := \jac \tilde{f}_{\tau}(x)\) and \(A(\tau) := \jac f(t_{0}+\tau, \psi_{\tau}(x))\) for the mesochronic and advected velocity field Jacobian, respectively.
\begin{equation}
  \begin{aligned}
    &\left( \id + \tau M(\tau)\right)' = A(\tau) \cdot \left( \id + \tau M(\tau)\right) \\
    &M(\tau) + \tau \dot{M}(\tau) = A(\tau) + \tau A(\tau) \cdot  M(\tau) \\
    &\dot{M}(\tau) = \big(A(\tau) - M(\tau)\big)/\tau + A(\tau) \cdot  M(\tau).
  \end{aligned}\label{eq:mhj-ode}
\end{equation}
The initial condition for the matrix ODE is set at \(\tau = 0\) when the mesochronic Jacobian \(\jac \tilde{f}_{\tau}(x)\) is identical to the velocity field Jacobian
\(
\jac \tilde{f}_{0}(x) = \jac f(t_{0}, x),
\)
therefore
\begin{equation}
  M(0) = A(0)  \quad\text{and}\quad \dot{M}(0) = {A(0)}^{2},\label{eq:mhj-ode-ic}
\end{equation}
where the last calculation is obtained from~\eqref{eq:mhj-ode} by evaluating the first-order expansion of \(A(\tau) - M(\tau)\) at \(\tau = 0\).

\section{Implementation of the mesochronic classification}
\label{sec:implementation}

In what follows we provide the basic algorithm which we used to produce a
numerical implementation of mesochronic classification, used to generate
images analogous to Figures~\ref{fig:mesohyp-steady-T}, and~\ref{fig:mesohyp-unsteady-T}. The core task is to assign a
mesochronic class (Definition~\ref{def:spectral-classes}), corresponding to
the time interval \([t_{0},t_{0}+T]\), to a fixed initial condition \(x \in
\R^{3}\).

This task can be split into the following sequence of stages.\footnote{Dependence on values of \(x,\,t_{0},\,t_{0}+T\) is omitted, for shortness.}
\begin{compactenum}
\item\label{step:traj} Compute the trajectory segment \(x(t)\) for \(t \in [t_{0},t_{0}+T]\) where \(x(t_{0}) = x\).
\item\label{step:jac}  Evaluate the Jacobian matrix \(A(\tau) := \jac f(t_{0}+\tau, x(t_{0}+\tau))\) of the velocity field along the trajectory \(x(t)\).
\item\label{step:mhj} Evaluate the mesochronic Jacobian matrix \(\jac \tilde{f}_{\tau}\) at the endpoint \(\tau = T\) by integrating the initial value problem given by~\eqref{eq:mhj-ode} and~\eqref{eq:mhj-ode-ic}:
  \begin{align*}
    \dot{M} (\tau) &= \big(A(\tau) - M(\tau)\big)/\tau + A(\tau) \cdot  M(\tau),\\
    M(0) &= A(0),\
     \dot{M}(0) = A{(0)}^{2}.
  \end{align*}
\item\label{step:dettrace} Compute the determinant \(d_{\tilde{f}} = \det M(T)\), cofactor trace \(m_{\tilde{f}} = \tr \cof M(T)\), evaluate \(\Delta\) and \(\Sigma\),~\eqref{eq:sigmadelta}.
\item\label{step:class} Based on the signs of \(\Delta\) and \(\Sigma\) assign
  the mesochronic class to the pair of initial condition and time interval
  \((p,[t_{0},t_{0}+T])\) using Theorem~\ref{thm:mesochronic-classification}.
\end{compactenum}

\begin{remarkstep*}[\ref{step:traj}] We assume that the velocity field
\(f(t,x)\) can be evaluated on the entire domain of interest and throughout
the entire compact time interval \(I \in \R\). If the velocity field is known
only on a grid of points \((t,x)\), then interpolation of the velocity field is likely needed.
\end{remarkstep*}
\begin{remarkstep*}[\ref{step:jac}] We assume the knowledge of the
Jacobian matrix of the vector field evaluated along the trajectory segment
\([t_{0},t_{0}+\tau]\), termed \(A(\tau) := \jac f(t_{0}+\tau,
x(t_{0}+\tau))\). If the velocity field was known analytically, it might be
possible to express \(\jac f\) analytically as well and evaluate it along
points computed in Step~\ref{step:traj}. If not, \(\jac f\) can be numerically
approximated using a central spatial-difference scheme with a spatial step
\(\delta\). On finite-precision computers, \(\delta\) cannot be
taken arbitrarily small, due to finite-precision effects. There will always
exist an optimal, non-zero \(\delta\), which depends on the machine
precision and magnitude of higher derivatives of \(f\) (see Ref.~\cite{Nocedal2006}, \S
8).
\end{remarkstep*}
\begin{remarkstep*}[\ref{step:mhj}] Equation~\eqref{eq:mhj-ode} is a
linear, matrix-valued ODE where the vector field Jacobian \(A(t)\), computed
in Step~\ref{step:jac}, comes in as both inhomogeneity and as the parametric
term. This ODE can be discretized using one of the standard time-stepping
schemes, on the same time points used for discretization of \(x(t)\) in Step
(\ref{step:traj}). The examples in this paper were computed using an explicit
Adams--Bashforth stepping scheme with a fixed time step.
\end{remarkstep*}
Since each of the presented steps involves some degree of numerical
approximation, the mesochronic Jacobian \(M(T)\) will contain numerical
noise. As the derivation of the mesochronic classification criteria hinges on
the assumption of incompressibility of the flow,
i.e.,~\eqref{eq:incompressibility-flow}, the necessary criterion for
accuracy of the numerical approximation is that the numerical compressibility
\(\tr M(\tau) + \tau \cdot \tr \cof M(\tau) +\tau^{2} \det M(\tau) \approx
0\).  If the numerical compressibility is significantly larger than \(0\) when
evaluated using numerical \(M(\tau)\), this can be taken as an indication of
significant numerical errors in the mesochronic Jacobian and, consequently, in
mesochronic classification.


\end{document}